\newtheorem{prop}{Proposition}
\newtheorem{lem}{Lemma}
\newtheorem{assp}{Assumption}
\newtheorem{thm}{Theorem}
\newtheorem{remark}{Remark}
\newcommand{\ind}{\perp\!\!\!\!\perp}
\newcommand{\M}{\mathcal{M}}
\newcommand{\full}{*}
\newcommand{\mO}{\mathcal{O}}
\renewcommand{\P}{\mathbb{P}} 
\newcommand{\bP}{\bar{\P}}
\newcommand{\Pf}{\P^{\full}}
\newcommand{\E}{\mathbb{E}} 
\newcommand{\bE}{\bar{\E}}
\newcommand{\Ef}{\E^{\full}}
\newcommand{\PP}{\mathbb{P}} 
\newcommand{\sing}{\mathrm{single}}
\newcommand\mC{\mathcal{C}}
\renewcommand\d{\mathrm{d}}
\title{Causal inference for the expected number of recurrent events in the presence of a terminal event}
\date{}
\author{Benjamin R. Baer \\
School of Mathematics and Statistics \\
University of St Andrews, St Andrews,  KY16 9SS, Scotland \\[.5ex]
Trang Bui \\
Department of Biostatistics and Computational Biology \\
University of Rochester, Rochester, New York, U.S.A. \\[.5ex]
Daniel Mork \\
Department of Biostatistics, Harvard T.H. Chan School of Public Health, Boston, MA, U.S.A. \\[.5ex]
Robert L. Strawderman \footnote{Corresponding author: \href{robert\_strawderman@urmc.rochester.edu}{robert\_strawderman@urmc.rochester.edu}} \\
Department of Biostatistics and Computational Biology \\
University of Rochester, Rochester, New York, U.S.A. \\[.5ex]
Ashkan Ertefaie \\
Department of Biostatistics, Epidemiology \& Informatics \\ University of Pennsylvania, Philadelphia, Pennsylvania, U.S.A.\\[.5ex]
}
\begin{document}
\maketitle



\begin{abstract}
While recurrent event analyses have been extensively studied, limited attention has been given to causal inference within the framework of recurrent event analysis. We develop a multiply robust estimation framework  for causal inference in recurrent event data  with a terminal failure event. We define our estimand as the vector comprising both the expected number of recurrent events and the failure survival function evaluated along a sequence of landmark times. We show that the estimand can be identified under a weaker condition than conditionally independent censoring and derive the associated class of influence functions under general censoring and failure distributions (i.e., without assuming absolute continuity). We propose a particular estimator within this class for further study, conduct comprehensive simulation studies to evaluate the small-sample performance of our estimator, and illustrate the proposed estimator using a large Medicare dataset to assess the causal effect of PM$_{2.5}$ on recurrent cardiovascular hospitalization. \\[1ex]
{\sc Key words:} Counterfactual; Inverse probability weighting; Landmarks; Multiple robustness; SuperLearner

\end{abstract}


\begin{bibunit}[biom]

\section{Introduction}

Survival analysis is a fundamental tool for analyzing time-to-event data in diverse fields, including medicine, epidemiology, and the social sciences \citep{fleming1991counting}. It involves studying the timing of events such as death, disease progression, or system failure, and understanding the factors that influence their occurrence. In many applications, individuals may experience recurrent events, where multiple instances of the event occur over time \citep{cook2007statistical}. This introduces additional complexity, as earlier events may affect the likelihood and timing of subsequent ones. As researchers increasingly seek to draw causal conclusions about the effects of treatments on survival outcomes, there is a growing need for methods that support valid causal inference. This work addresses a problem situated at the intersection of these three research areas.

There is an existing literature on the analysis of recurrent event data in the presence of a failure (i.e., terminal event) time; see \citet[Section 6.6]{cook2007statistical} for one review. \citet{cook1997marginal} developed some estimands and both inverse-probability weighted estimators and proportional intensity regression estimators. \citet{zhao1997consistent, lin1997estimating} studied inverse-probability weighted estimators in a related setting. \citet{ghosh2000nonparametric, cook2009robust} further study inverse-probability weighted estimation of the expected number of recurrent events before failure, an estimand of central importance in this work that generalizes the cumulative incidence function from competing-event settings. \citet{strawderman2000estimating} study related inverse-probability weighted and also augmented estimators for the same estimand.  However, the focus on causal inference within the framework of recurrent event analysis has been limited. To the authors' knowledge,
\citet{schaubel2010estimating} initiated the study of the expected number of recurrent events before failure from a causal inference perspective and proposed ``doubly'' inverse-probability weighted and imputation estimators. \citet{janvin2022causal} further study inverse-probability weighted estimators with a focus on a mediation perspective on the relationship between failure and the recurrent process. \citet{jensen2016marginal,su2020doubly} study causal inference for recurrent events but do not allow for the possibility of termination due to failure. 

Our primary contribution is to develop a multiply robust estimation framework for a specific recurrent event estimand to be defined later. Importantly, we do not make any absolute continuity assumptions and work with essentially arbitrary probability distributions for the observed data, including the required latent distributions of failure and censoring. First, we establish a set of identifiability conditions for the causal recurrent event problem (see Section~\ref{sec:estimand:car}). Second, we identify the nuisance parameters and the desired estimand under a condition weaker than conditionally independent censoring (see Section~\ref{sec:estimand:identif:results}). Third, we characterize the corresponding class of influence functions (Section~\ref{sec:estimation:class}) and propose a multiply robust estimator for both the expected number of counterfactual recurrent events before failure and characterize the corresponding remainder terms, without assuming absolute continuity (Section~\ref{sec:estimation:eif}). Fourth, we review existing semiparametric estimators for this estimand (Section~\ref{sec:litreview:recur}) and show that a previously proposed doubly robust estimator does not
possess this desirable property (Section~\ref{sec:litreview:recur:existing-est}). The remainder of the paper evaluates estimator performance through
simulation studies (Section~\ref{sec:numeric}), illustrates the methods by studying the effect of air pollution exposure on cardiovascular disease
hospitalization (Section~\ref{sec:example}), and closes with a discussion (Section~\ref{sec:conc}).

\section{Notation}

Let $L$ denote a vector consisting of baseline variables for an individual. Assuming there is a binary treatment taking on the values zero and one, let $T^{\full}_0, T^{\full}_1$ be the potential failure times had an individual not been or been treated, respectively. Similarly, let the right-continuous processes $N^*_0(\cdot), N^*_1(\cdot)$ denote the corresponding potential recurrent event process had an individual not been or been treated, respectively. We assume that recurrent events do not occur past, or at, failure so that $N^*_a(t) = N^*_a(T^*_a) = N^*_a(T^*_a-)$ 
for any $t \geq T^*_a$ and each $a=0,1.$ 
Finally, define the full data for an individual as $(L, T^{\full}_0, T^{\full}_1, \{ N^*_0(t), N^*_1(t), 0 \leq t < \infty\}).$ For later use, we also define the hypothetical unstopped recurrent event process $N^{**}_a(\cdot)$ so that $N^*_a(t) = N^{**}_a \bigl\{ \min\{t, T^*_a\} \bigr\}$ while $N^{**}_a(\cdot)$ and $T^*_a$ are variationally independent. 

Define the coarsening variables $A, C^{\full}_0, C^{\full}_1$. The treatment (i.e., exposure) variable $A$ is $1$ when treatment is assigned to an individual and $0$ otherwise. The potential censoring variables $C^{\full}_0,C^{\full}_1$ are the times that observation ceases had the exposure been $a$ (i.e., for reasons other than failure). Define the joint distribution of the full and coarsening variables as $\Pf$. For $a=0,1$ define the conditional counterfactual survival function for failure as $H^{\full}(u; a, l) = \Pf (T^{\full}_a > u \mid L=l)$, and define $F^{\full}(u, t; a, l) = \Ef \{ I(T^{\full}_a > u) N^{\full}_a(t) \mid L=l \}$ so that $F^{\full}(0, t; a, l)$ is the conditional counterfactual expected number of recurrent events. Define the propensity score $\pi(a; l) = \PP(A=a \mid L=l)$ and the conditional counterfactual censoring survival function $K^{\full}(u; a, l) = \Pf(C^{\full}_a > u \mid L=l)$. 

The observed data $\mO$ on one subject is defined by limiting the availability of the full data through the  mapping $\Phi(L,T^{\full}_0, T^{\full}_1, N^{\full}_0, N^{\full}_1; A, C^{\full}_0, C^{\full}_1) = \Bigl( L, A, \Delta=I(T^{\full}_A \leq C^{\full}_A), X=\min\{T^{\full}_A, C^{\full}_A\}, N(\cdot) = N^*_A(\min\{\cdot, C^*_A\} ) \Bigr),$  where the coarsening variables determine which part of the full data is available.  Define $\P$ as the distribution of the observed data induced by $\Pf$. The observed recurrent event process $N(\cdot) = N^*_A(\min\{\cdot, C^*_A\})$ is possibly terminated early by censoring. Also define the observed failure process $N_T(t) = I(X \leq t, \Delta=1)$ and the observed censoring process $N_C(t) = I(X \leq t, \Delta=0)$. Estimation and inference, described later, 
will assume the availability of $n$ i.i.d.  copies of $\mO.$ 

It is important to remember that the recurrent event processes $N^{**}_a, N^*_a$, and $N$ are different; in particular, each process counts possibly different jumps. The hypothetical process $N^{**}_a$ is the starting point and jumps at all (potential) recurrent events. The critically important process that will later be used to define our desired causal estimand, or $N^*_a,$ jumps at all events that occur before, but not after, failure $T^*_a$. The process $N$ is observable and jumps at all events that occur while a unit is under observation (i.e., remains at risk). 

Above, and throughout, we use an asterisk over functions and random variables to denote reliance on the full or coarsened variables; a function or random variable without an asterisk only depends on the observed data and its associated distribution. 

Define $\vee$ as the maximum operator and $\wedge$ as the minimum operator, so that $a \vee b = \max\{a,b\}$ and $a \wedge b = \min\{a,b\}$. 
Define the overbar notation as denoting the history of a process, for example, $\bar{N}^*_a(t) = \{ N^*_a(u) \, : \, 0 \leq u \leq t \},$ where
$N^*_a(u) = 0, a=0,1.$ Define $\prodi$ as the product integral \citep{gill1990survey}. 
Finally, we adopt the commonly used convention that $0/0 = 0$. 

\section{The Estimand}
\label{sec:estimand}

Define $\mu_a^{\full}(t) = \E^{\full} \{ N_a^{\full}(t) \}$ as the expected number of counterfactual recurring events. The value of $\mu_a^{\full}(t)$ is determined by two factors and will be small if  (1) failure is likely to occur early, or (2) if recurrent events are rare.  In the former case, the recurrent event count $N^*_a(t) = N^*_a(T^*_a \wedge t)$ will be small since the counting process is stopped early, while in the latter case, the recurrent event process itself is a driving factor. To assess the contribution of early failure, define $\eta^*_a(t) = \P^{\full} ( T^*_a > t )$, the  counterfactual survival probability.  This quantity remains central due to its direct influence on $\mu^*_a$. 


Define $\psi(\Pf)=\Bigl( \mu^*_0(t_1), \mu^*_1(t_1), \eta^*_0(t_1), \eta^*_1(t_1),    \dots,     \mu^*_0(t_m), \mu^*_1(t_m), \eta^*_0(t_m), \eta^*_1(t_m) \Bigr)$ as  the full data estimand, where  $0  < t_1 < \dots < t_m$ is pre-specified  sequence of $m$ landmark times. These $4m$ values give the functions of interest at each landmark time and in each treatment arm. Much of the development that follows considers estimation of $\mu^*_a(t)$ and $\eta^*_a(t)$ at $t = t_j$ for a given $j > 0$ and $a=0,1,$ and will be sufficient for the study of $\psi(\Pf).$

\subsection{Identifiability conditions}
\label{sec:estimand:car}

We impose the following identifiability conditions to link the full and observed data.

\begin{assp} \label{prop:car}
   Let $N^{\full}_a(\cdot)$ denotes the full path of the counterfactual recurrent event process $N^{\full}_a$ We impose the following identifiability conditions: (i) $(N^{\full}_0(\cdot), N^{\full}_1(\cdot), T^{\full}_0, T^{\full}_1) \ind A \mid L,$ (ii) $(N^{\full}_0(\cdot), N^{\full}_1(\cdot), T^{\full}_0, T^{\full}_1) \ind C^{\full}_a \mid A=a, L \text{ on } C^{\full}_a < T^{\full}_a,$ (iii) $(N^{\full}_0(\cdot), N^{\full}_1(\cdot), T_{1-a}^{\full}) \ind I(T_a^{\full} \leq C_a^{\full}) \mid A=a, L, T^{\full}_a$. \end{assp}

The first condition is that individuals with the same profile (i.e., with the same baseline variables $L$) have their exposure $A$ independent of their potential outcomes. It is a version of the ``no unmeasured confounders'' or ``strong ignorability'' assumption \citep{rosenbaum1983central}.  The second and third conditions follow from the classical assumption of conditionally independent censoring, namely, $(N^{\full}_0(\cdot), N^{\full}_1(\cdot), T^{\full}_0, T^{\full}_1) \ind C^{\full}_a \mid A=a, L$, for each $a=0,1$. The second condition restricts this independence to the subset of times where censoring precedes failure \citep{van2003unified}. 

The characterization of the identifiability conditions in Assumption \ref{prop:car} leads to a \emph{sequential} identifiability condition, where the observed data is formulated as equaling the full data after multiple stages of coarsening. In our case, we formulate the causal selection (i.e., involving $A$) as occurring before censoring (i.e., involving $C^*_A$). 
\begin{assp}
\label{cor:seq-car}
    Let $N^{\full}_a(\cdot)$ denote the full path of the counterfactual recurrent event process $N^{\full}_a$. The sequential identifiability conditions are (i) $( N^{\full}_0(\cdot), N^{\full}_1(\cdot), T^{\full}_0, T^{\full}_1 ) \ind A \mid L,$ (ii) $(N^{\full}_a(\cdot), T^{\full}_a) \ind C^{\full}_a \mid A=a, L \text{ on } C^{\full}_a < T^{\full}_a$.
\end{assp}
 The conditional independence condition imposed in Assumption \ref{cor:seq-car} is weaker than those appearing in Assumption \ref{prop:car} and will be used for identification of the estimand. Throughout the remainder of this paper, we also impose  certain regularity and positivity assumptions, such as those listed in Section \ref{sup:reg:ass} of the Supplemental Appendix. 

\subsection{Identification via inverse probability weighting mapping}
\label{sec:estimand:identif:results}

Define the usual at-risk process $Y(u) = I(X \geq u)$. 
Define $H(t; a, l)=\prodi_{u\in (0,t]} \{ 1 - \mathrm{d} \Lambda_T(u; a, l) \}$ for $\Lambda_T(t; a, l) = \int_{(0, t]} \frac{\d \E \{ N_T(u) \mid A=a, L=l\}}{\E \{ Y(u) \mid A=a, L=l\}},$ where $\prodi$ is the product integral \citep{gill1990survey}. Similarly define $K(t; a, l) = \prodi_{u\in (0,t]} \{ 1 - \mathrm{d} \Lambda_C(u; a, l) \}$ for $\Lambda_C(t; a, l)     = \int_{(0, t]} \frac{\d \E \{ N_C(u) \mid A=a, L=l \}}{\E\{ Y^{\dagger}(u) \mid A=a, L=l\}},$
where $Y^{\dagger}(u) = I(X > u, \Delta=1 \text{ or } X \geq u, \Delta=0).$ The process $Y^{\dagger}(u)$ is 
a modified version of the usual at-risk process $Y(u) = I( X \geq u).$ In fact, $Y^{\dagger}(t) \, = \, Y(t) - 
\{ N_T(t) - N_T(t-) \},$ and is appropriate to use when events for which $\Delta=1$ have priority over events for which $\Delta=0$; see \citet[Page 56]{gill1994lectures} for related developments. For example, this is the case for the observed data since $\Delta_A = I(T^{\full}_A \leq C^{\full}_A)$ is defined with a ``$\leq$'' rather than a ``$<$''. Define $M_C(t;a,l)  := N_C(t) - \int_{(0,t]} Y^{\dagger}(u) \,\d \Lambda_C(u;a,l).$ 
The process $M_C$ plays an important role in each augmentation term in our setting; under the true $\P,$ it vanishes in expectation due to the appearance of $Y^{\dagger}$ in both the numerator and denominator.

\begin{remark}
Although the at-risk process $Y$ is left-continuous and predictable with respect to the observed data filtration \citep[e.g.,][]{fleming1991counting},
the modified process $Y^{\dagger}$ is neither left-continuous nor right-continuous, and fails to be predictable. Nevertheless, the process $M_C(t;a,l)$ 		remains a martingale process with respect to the same filtration; see \citet{baer2024theory} for these and other related results.
\end{remark}

Let $\Pf$ be any full and coarsened data distribution with corresponding observed data distribution $\P \in \mathcal{M}$ where $\mathcal{M}$ is the model space induced by the identification conditions imposed earlier. \citet{baer2025survival} showed that $K, H$ respectively identify $K^{\full}, H^{\full}.$ This result only relies on Assumption \ref{cor:seq-car}, and does not require the more typical (and more restrictive) assumption of full conditional independence between the potential censoring and failure times. Additionally, the identification result relies on the modified at-risk process $Y^{\dagger}$ for censoring, and avoids the need to impose separate absolute continuity or discreteness assumptions on either failure or censoring
\citep[e.g.,][]{scharfstein2001inference, scharfstein2002estimation}. Theorem \ref{prop:identif} shows that under our previously stated assumptions, the desired full data estimand is identified in $\M$, as the following result shows. 
\begin{thm}
\label{prop:identif}
    Let $\Pf$ be any full and coarsened data distribution with corresponding observed data distribution $\P \in \mathcal{M}$. For each landmark time $t>0$,  the following statements holds: (i) The full data estimand component $\mu_a^{\full}(t)$ equals $\mu_a(t)$, where $\mu_a(t) = \E \{ \varphi_{\mu, a} (t; \mO; \P) \}$ for
        $\varphi_{\mu, a} (t; \mO; \P)
        = \frac{I(A=a)}{\pi(A; L)} \frac{\Delta}{K(X-; A, L)} N(t).$ (ii) The full data estimand component $\eta_a^{\full}(t)$ equals $\eta_a(t)$, where $\eta_a(t) = \E \{ \varphi_{\eta, a} (t; \mO; \P) \}$ for
        $\varphi_{\eta, a} (t; \mO; \P)
        = \frac{I(A=a)}{\pi(A; L)} \frac{\Delta}{K(X-; A, L)} I(X > t).$ (iii) For any $u>0$, the parameter $F^*_a(u, t; a, l) = 
        \E^\full\{ I(T^\full_a > u) N^\full_a(t) | A = a, L = l \}$ equals $F(u, t; a, l),$ where 
        $F(u, t; a, l) = \E \left\{ \frac{\Delta}{K(X-; A, L)} I(X>u) N(t) \,\middle|\, A=a, L=l \right\}.$
\end{thm}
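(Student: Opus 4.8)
The plan is to reduce all three identification statements to a single \emph{un-censoring} identity and then dispatch treatment selection by inverse propensity weighting. The key structural observation is that on the event $\{A=a,\Delta=1\}$ the observed functionals coincide exactly with their full-data counterparts: since $\Delta=1$ forces $C^{\full}_a\geq T^{\full}_a$ we have $X=T^{\full}_a$, and because recurrent events do not accrue past failure, $N(t)=N^{\full}_a(\min\{t,C^{\full}_a\})=N^{\full}_a(t)$. Hence in each part the product $\frac{\Delta}{K(X-;a,L)}\,\Psi^{\mathrm{obs}}$ equals $\frac{\Delta}{K(X-;a,L)}\,\Psi$, where $\Psi$ is the corresponding full-data quantity --- $\Psi=N^{\full}_a(t)$ for (i), $\Psi=I(T^{\full}_a>t)$ for (ii), and $\Psi=I(T^{\full}_a>u)N^{\full}_a(t)$ for (iii) --- and each such $\Psi$ is measurable with respect to $\mathcal G_a:=\sigma\!\left(L,A,T^{\full}_a,N^{\full}_a(\cdot)\right)$.

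First I would condition on $\mathcal G_a$. Because $\Psi$ is $\mathcal G_a$-measurable it factors out of the inner conditional expectation, leaving, on $\{A=a\}$, the scalar $\E\!\left\{\frac{\Delta}{K(X-;a,L)}\,\middle|\,\mathcal G_a\right\}$, whose only remaining source of randomness is $C^{\full}_a$. The crux is therefore the claim
\[
(\star)\qquad \E\!\left\{\frac{\Delta}{K(X-;a,L)}\,\middle|\,A=a,L,T^{\full}_a,N^{\full}_a(\cdot)\right\}=1 .
\]
Granting $(\star)$, part (iii) follows upon taking an outer expectation over the remaining arm-$a$ full data given $L=l$ and using the treatment condition of \cref{cor:seq-car}, which after marginalizing gives $(N^{\full}_a(\cdot),T^{\full}_a)\ind A\mid L$ and so lets me drop the event $\{A=a\}$ to recover $\E^{\full}\{I(T^{\full}_a>u)N^{\full}_a(t)\mid L=l\}=F^{\full}(u,t;a,l)$. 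For (i) and (ii) I would additionally handle the weight $I(A=a)/\pi(a;L)$: conditioning on $\sigma(L,T^{\full}_a,N^{\full}_a(\cdot))$ and using the same independence gives $\E\{I(A=a)\mid L,T^{\full}_a,N^{\full}_a(\cdot)\}=\pi(a;L)$, so the propensity factor integrates to one and the outer expectation collapses to $\E^{\full}\{N^{\full}_a(t)\}=\mu^{\full}_a(t)$ and $\P^{\full}(T^{\full}_a>t)=\eta^{\full}_a(t)$, respectively. \cref{asp:finitevar,asp:recur-event-growth,asp:positivity} guarantee that the weights are bounded below away from zero and the variables integrable, justifying the interchanges of expectation.

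It remains to establish $(\star)$, which is where I expect the main obstacle to lie. Two equivalent routes are available. The direct route notes that on $\{A=a\}$ one has $\Delta=I(C^{\full}_a\geq T^{\full}_a)$ with $X=T^{\full}_a$ when $\Delta=1$, so that $(\star)$ reduces to $\P(C^{\full}_a\geq T^{\full}_a\mid A=a,L,T^{\full}_a,N^{\full}_a(\cdot))=K(T^{\full}_a-;a,L)$; here I would invoke \cref{lem:surv-ident} to replace $K(\cdot;a,L)$ by the true censoring survival function $K^{\full}(\cdot;a,L)$ and then apply the restricted independence of \cref{cor:seq-car}. The martingale route instead applies the purely algebraic identity \cref{eq:rr2} to write $\frac{\Delta}{K(X-)}=1-\int_{(0,\infty)}\frac{\d M_C(u)}{K(u)}$ and then shows the augmentation integral has conditional mean zero, using that --- by \cref{lem:surv-ident} --- $\Lambda_C(\cdot;a,L)$ is the correct compensator of $N_C$ in the region $\{C^{\full}_a<T^{\full}_a\}$.

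The delicate point common to both routes, and the heart of the argument, is that the independence supplied by \cref{cor:seq-car} holds only \emph{on} the set $\{C^{\full}_a<T^{\full}_a\}$, and that no absolute continuity is assumed, so $C^{\full}_a$ may have atoms and one must track left limits carefully (this is precisely why $K(X-)$ and the modified at-risk process $Y^{\dagger}$ appear). Concretely, on $\{T^{\full}_a=t\}$ the event $\{C^{\full}_a<t\}$ coincides with $\{C^{\full}_a<T^{\full}_a\}$, so the restricted independence yields $\P(C^{\full}_a<t\mid A=a,L,T^{\full}_a=t,N^{\full}_a(\cdot))=1-K^{\full}(t-;a,L)$, and taking complements gives $(\star)$. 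Making this step fully rigorous in the presence of atoms and under the analytic, region-restricted form of the independence assumption is exactly the setting for which \cref{lem:surv-ident} and the exact (not almost sure) identities of \cref{lem:rr-lems} were engineered.
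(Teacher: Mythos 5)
Your proposal is correct and follows essentially the same route as the paper's proof: condition on the arm-$a$ full data so that the censoring weight $\Delta/K(X-;a,L)$ integrates to one via the restricted independence (rewriting $\P^{\full}(T^{\full}_a\leq C^{\full}_a\mid\cdot)$ as $1-\P^{\full}(C^{\full}_a<T^{\full}_a\mid\cdot)$ to land exactly on the region where \cref{cor:seq-car} applies), then condition on $(L,T^{\full}_a,N^{\full}_a(\cdot))$ so the propensity weight integrates to one. Your claim $(\star)$ and its complement-taking resolution is precisely the "delicate handling" the paper flags, and the identification $K=K^{\full}$ via \cref{lem:surv-ident} is used identically; the alternative martingale route via \cref{eq:rr2} is extra and not needed.
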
   
The function $F(\cdot, t; a, l)$ will play an important role in our proposed  estimator.

\subsection{A von Mises expansion for $\mu_a$
and $\eta_a$}
\label{sec:estimand:vonmises}

Define $\P$ and $\bP$ as two observed data distributions. Write $\E$ and $\bE$ as the expectation with respect to these distributions, respectively. 
For a given parameter $\psi$, a von Mises expansion of $\psi$ at $\bP$ centered at $\P$ is $\psi(\bP) - \psi(\P) = (\bE - \E) D(\mO; \bP) + R(\bP, \P),$ where $D$ captures the first order behavior of $\psi$ and $R$ captures the remainder. Below, we state this decomposition for each component of the observed data estimand.
\begin{thm}
\label{prop:vonmises}
 Consider two observed data probability distributions $\P$ and $\bP$. For each time $t>0$, the following statements hold for the observed data estimands $\mu_a(t)$
 and $\eta_a(t):$
 \begin{itemize}
     \item[i.] The estimand $\mu_a(t)$ admits a von Mises expansion with 
     $D_{\mu, a}(t, \mO; \P)$ equal to 
    \begin{align*}
        \varphi_{\mu, a} (t; \mO; \P) - \mu_a(t) 
        - \frac{I(A=a) - \pi(a; L)}{\pi(a; L)} F(0, t; a, L) 
        + \frac{I(A=a)}{\pi(a; L)} \int_{(0, \infty)} \frac{F(u, t; a, L)}{H(u; a, L)} \frac{\d M_C(u; a, L)}{K(u; a, L)}. 
    \end{align*}
   \item[ii.] The estimand $\eta_a(t)$ admits a von Mises expansion with $D_{\eta, a}(t, \mO; \P)$ equal to 
    \begin{align*}
        \varphi_{\eta, a}(t, \mO; \P) - \eta_a(t)
            - \frac{I(A=a) - \pi(a; L)}{\pi(a; L)} H(t; a, L) 
            + \frac{I(A=a)}{\pi(a; L)} \int_{u\in(0,\infty)} \frac{H(t \vee u; a, L)}{H(u; a, L)} \frac{\d M_C(u; a, L)}{K(u; a, L)}.
    \end{align*}
 \end{itemize}
 The explicit form of the remainder terms $R_{\mu, a}(t, \mO; \bar{\P}, \P)$ and $R_{\eta, a}(t, \mO; \bar{\P}, \P)$ are presented in 
Section \ref {sec:supp:vm:cont} of the Supplemental Appendix, where each is also shown to be of second order 
  \citep[e.g.,][]{kennedy2023semi}.  
\end{thm}
A von Mises expansion for the full estimand can be readily constructed by stacking the components in Theorem \ref{prop:identif}. 
The existence of this expansion indicates that the observed data estimand is sufficiently smooth to be pathwise differentiable \citep{bickel1993efficient, van2011targeted}, guaranteeing the existence of at least one asymptotically linear estimator.  
Under a nonparametric model with no restrictions on $\P$, the von Mises expansion  of $\psi$ at $\bP$ centered at $\P,$ or $\psi(\bP) - \psi(\P)
= (\bE - \E) D(\mO; \bP) + R(\bP, \P),$ holds for any $\bP$ and $\P,$ the  corresponding ``derivative'' $D(\mO; \bP)$ is also the efficient influence function.  However, in the case where a model is semiparametric, with $\P$ being subject to additional restrictions, there can potentially be many influence functions, each corresponding to a different von Mises expansion \citep[e.g.,][]{kennedy2023semi}. The identifiability conditions in Assumptions 1 and 2 create certain restrictions on $\P,$ and the expansions in Theorem \ref{prop:vonmises} respectively represent one of potentially many possible choices that may or may not be (locally) efficient.

\section{Estimation and Inference}
\label{sec:estimation}

Consider a model $\M_n$ containing a sample of $n$ independent copies from a distribution in $\M,$
where $\M$ is described earlier. Let $\P_0 \in \M_n$ denote the true observed data distribution satisfying  Assumption \ref{cor:seq-car}. Below, functionals subscripted by $0$ are to be evaluated at $\P_0$. When the observed data $\mO$ is sampled from  $\P_0,$ an estimator $\hat{\psi}_n$ for an observed data estimand $\psi_0 = \psi(\P_0)$ has influence function $\mathrm{IF}(\mO; \P_0)$ provided that  $\hat{\psi}_n - \psi_0   = \E_n \left\{ \mathrm{IF}(\mO; \P_0) \right\} + o_{\P_0}(n^{-1/2}).$ Up to an asymptotically negligible term $o_{\P_0}(n^{-1/2})$, this expansion expresses that the estimation error $\hat{\psi}_n - \psi_0$ is a sample average of i.i.d. terms $\mathrm{IF}(\mO; \P_0)$. 

\subsection{The class of influence functions}
\label{sec:estimation:class}

Define the model $\M_n(\pi_0,K_0)$ as the restriction of $\M_n$ when the propensity score $\pi_0$ and the censoring survival function $K^*_0=K_0$ are known and hence fixed. The model $\M_n(\pi_0,K_0)$ may arise in a randomized trial with perfect adherence and a fully known
mechanism for censoring. Define $\varphi(\mO; \P_0) \in \mathbb{R}^{4m}$ as a stacking of the expressions in Theorem \ref{prop:vonmises} for each estimand component. For example, the component of $\varphi$ corresponding to $\mu_0(t_1)$ is $\varphi_{\mu, 0}(t_1, \mO; \P_0)$. Below,
we give the class of influence functions for $\psi_0$ in $\M_n(\pi_0,K_0).$
\begin{thm}
\label{prop:class}
    For each component of the estimand, the class of influence functions for $\psi_0$ in $\M_n(\pi_0,K_0)$ has typical element
    \begin{align*}
      & \varphi_a (\mO; \P_0)
      - \psi_0
      - \Bigl\{ I(A=a) - \pi_0(a; L) \Bigr\} h_1(L) + \int_{(0,\infty)} h_2 \{ u; \bar{N}(u), A, L \} \,\d M_{C,0}(u; A, L),
    \end{align*}
    where $\varphi_a$ is the inverse probability weighted expression in Theorem \ref{prop:identif} (i.e., when
    $\P = \P_0$) and $h_1$ and $h_2$ are arbitrary index functions.
\end{thm}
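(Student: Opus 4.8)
The plan is to use the standard representation from semiparametric efficiency theory \citep{bickel1993efficient}: once the model is regular and one influence function (equivalently, one gradient of $\psi_0$) is exhibited, the entire class of influence functions is that particular gradient plus the orthogonal complement $T^{\perp}$ of the tangent space $T$ of $\M_n(\pi_0,K_0)$ inside $L_2^0(\P_0)$. Concretely, I would first verify that $\varphi_a(\mO; \P_0) - \psi_0$ is itself an influence function in the restricted model. Because $\pi_0$ and $K_0 = K^{\full}_0$ are fixed, $\varphi_a(\mO; \P_0)$ is a fixed square-integrable function of the data (square-integrability from \cref{asp:finitevar} and \cref{asp:positivity}); within $\M_n(\pi_0,K_0)$ it coincides with $\varphi_a(\mO;\P)$, so by the identification in \cref{prop:identif} its expectation equals $\psi(\P)$ for every $\P \in \M_n(\pi_0,K_0)$. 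Hence the inverse-probability-weighted estimator $\E_n\{\varphi_a\}$ is exactly unbiased and linear, so it is regular and asymptotically linear with influence function $\varphi_a - \psi_0$, giving the required base gradient.

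The substantive work is to characterize $T^{\perp}$. Here I would exploit \cref{lem:nonp}: since $\M_n$ is locally nonparametric, its tangent space is all of $L_2^0(\P_0)$, and the sequential factorization of the observed-data likelihood into variation-independent blocks --- the marginal law of $L$, the treatment mechanism $A \mid L$, the failure mechanism, the censoring mechanism, and the recurrent-event mechanism given $(A,L)$ --- yields the orthogonal direct-sum decomposition $L_2^0(\P_0) = T_L \oplus T_A \oplus T_T \oplus T_C \oplus T_N$. Fixing $\pi_0$ and $K_0$ removes exactly the treatment block and the censoring block from the allowable perturbations, so the tangent space of the restricted model is $T = T_L \oplus T_T \oplus T_N$ and therefore $T^{\perp} = T_A \oplus T_C$.

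It then remains to compute these two blocks explicitly. For the binary treatment the score space is $T_A = \{(I(A=a) - \pi_0(a;L))\,h_1(L) : h_1\}$, recovering the first term. For the censoring block I would use the counting-process score representation: perturbations of the censoring intensity have scores of the form $\int_{(0,\infty)} h_2\{u; \bar N(u), A, L\}\,\d M_{C,0}(u; A, L)$ with $h_2$ an arbitrary predictable index function and $M_{C,0}$ the censoring martingale of \cref{eq:mc}, which is exactly the second augmentation term. Adding $T_A \oplus T_C$ to the base gradient $\varphi_a - \psi_0$ produces the stated typical element (the sign attached to $h_1$ is immaterial since $h_1$ is arbitrary), and stacking over the $4m$ components gives the result for the full estimand.

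The main obstacle I anticipate is justifying the orthogonal direct-sum decomposition rigorously without any absolute-continuity assumption and with failure, censoring, and recurrent events sharing the same at-risk window. In particular, I would need to confirm that the censoring score block is generated by stochastic integrals against $M_{C,0}$ built from the modified at-risk process $Y^{\dagger}$ rather than $Y$, that these scores are orthogonal to the failure and recurrent-event scores despite the tie-breaking convention $\Delta = I(T^{\full}_A \le C^{\full}_A)$, and that the blocks are genuinely variation-independent so that the decomposition is both valid and complete; completeness, namely that $T_A \oplus T_C$ exhausts $T^{\perp}$, is precisely what guarantees no influence functions are missed.
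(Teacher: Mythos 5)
Your overall architecture is sound and is essentially the dual of the paper's: you fix one gradient ($\varphi_a-\psi_0$, which the paper obtains as the inverse-probability-weighted image $U(\varphi^F)$ of the unique full-data influence function) and then add the orthocomplement of the restricted tangent space, which you identify with the treatment and censoring score blocks $T_A\oplus T_C$ from a likelihood factorization. The paper instead works with the augmentation space $\{h(\mO):\Ef(h\mid \text{full data})=0\}$ via \cref{lem:class-rep}, decomposes it sequentially (selection first, then censoring) using Lemma~7.4 of Tsiatis, and obtains $\tfrac{\Delta}{K(X-)}\mathcal{G}+\mathcal{F}$; it then needs the algebraic identity \cref{eq:rr2} to strip the $\Delta/K(X-)$ weight off the treatment block and absorb the residue into $h_2$. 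Your direct score computation would land on the unweighted $\{I(A=a)-\pi_0(a;L)\}h_1(L)$ immediately, which is a small economy --- but the two routes coincide only because, under coarsening at random with an unrestricted coarsening mechanism, the coarsening score space exhausts the augmentation space, which is exactly the ``completeness'' you flag.

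The genuine gap is that your proposal defers, rather than proves, the two steps that constitute the entire technical content of the theorem. First, the identification of the censoring block as $\bigl\{\int_{(0,\infty)}h_2\{u;\bar N(u),A,L\}\,\d M_{C,0}(u;A,L)\bigr\}$ with $M_{C,0}$ built from $Y^{\dagger}$ and with $h_2$ depending on the recurrent-event history is precisely \cref{lem:aux-surv-class}; the paper proves it by introducing the monotone coarsening variable $\mathcal{C}=XI(\Delta=0)+\infty I(\Delta=1)$, computing its coarsening hazard exactly in discrete time (which is where $Y^{\dagger}$ and the tie-breaking convention $\Delta=I(T^{\full}_A\le C^{\full}_A)$ enter, via \cref{lem:surv-ident}), invoking Theorem~9.2 of Tsiatis, and appealing to van der Laan and Robins for the general (non-discrete) case. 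Naming this as ``the main obstacle'' does not discharge it, and the heuristic appeal to a five-block factorization of the observed-data likelihood is not obviously available without absolute continuity. Second, you do not address the fact that the model carries two potential censoring variables $C^{\full}_0,C^{\full}_1$ while the likelihood only ever sees $C^{\full}_A$; the paper needs \cref{lem:single-reduction} to reduce to a single censoring variable before any of the coarsening machinery applies, and your argument would need the same reduction. Until these two pieces are supplied, the proposal is a correct reduction of the theorem to its hardest sublemmas rather than a proof.
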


 Formally, a typical element is defined to be in the closure of the finite variance elements in the above class \citep{bickel1993efficient}. 
The class has an augmentation term due to $\pi_0$ and another augmentation term due to $K_0$. The augmentation space is defined to be the class of influence functions with the first summand $\varphi_a - \psi_0$ removed.

The influence functions of all regular and asymptotically linear estimators in $\M_n(\pi_0, K_0)$ lie in this class.  We may interpret this class of influence functions as a corresponding class of estimating functions, since estimators obtained by solving these estimating equations have influence functions that also belong to this class. For example, when $h_1, h_2=0$, we see that the estimating functions for inverse probability weighted estimators lie in the class. 

The class of influence functions in $\M_n(\pi_0,K_0)$ is important even when $\pi_0$ and $K_0$ are not known. When the coarsening probabilities are known to lie in a smooth parametric model, \citet[Theorem 8.3]{tsiatis2006semiparametric} shows that the class of influence functions is a projection of the above class onto the coarsening tangent space. 

The proof of Theorem \ref{prop:class} utilizes the following lemma which may be of independent interest. The lemma shows that asymptotic theory does not rely on whether the coarsening variables are defined as $A, C^*_0, C^*_1$ or simply as $A,C^{\full}$. 
This is because the observed data identification of the conditional censoring distribution
(i.e., see \S \ref{sec:estimand:identif:results}) cannot distinguish between the coarsening variables
$C^\full$ and $C^\full_A = A C^\full_1 + (1-A) C^\full_0;$ as a consequence, the augmentation space for estimating the distribution of $T^\full_a$ under each censoring model is identical.

Define the censoring survival function $K_{\text{single}, 0}(t; a, l)$ as the identification of $K^{\full}_{\text{single}, 0}(t; a, l)= \PP_0(C^{\full} > t \mid A=a, L=l)$. Define $\M_{\text{single}, n}$ as the model corresponding to $\M_n$ but instead defined with coarsening variables $A,C^{\full}$. 
More information on the definition of $\M_{\text{single}, n}$ is given in Section \ref{sec:supp:if-class} of the Supplemental Appendix.

\begin{lem}
\label{lem:single-reduction}
 Define $\mathrm{AS}_{2,\text{single}}(\pi_0, K_{\text{single}, 0})$ as the augmentation space in $\M_{\text{single}, n}(\pi_0, K_{\text{single}, 0})$, which depends on the coarsening probabilities $\pi_0$ and $K_{\text{single}, 0}$. Then the augmentation space in $\M_n(\pi_0, K_0)$ is $\mathrm{AS}_{2,\text{single}, 0}(\pi_0, K_0)$.
\end{lem}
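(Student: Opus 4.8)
The plan is to reduce the equality of augmentation spaces to the single structural fact that the observed‑data likelihood factor contributed by censoring depends on the censoring mechanism only through the identified conditional survival function $K_0(\cdot; a, l)$, which is literally the same object in both parametrizations. I would first record that the two identified censoring functions agree: the single‑censoring model $\M_{\text{single},n}$ induces exactly the same observed‑data distribution as $\M_n$, so any observed‑data functional — in particular the cumulative hazard $\Lambda_C$ built from $N_C$ and $Y^{\dagger}$, and hence $K = \prodi(1 - \d\Lambda_C)$ of \cref{lem:surv-ident} — is the same. Thus $K_0(\cdot; a, l) = K_{\text{single},0}(\cdot; a, l)$, the two spaces $\mathrm{AS}_2(\pi_0, K_0)$ and $\mathrm{AS}_{2,\text{single}}(\pi_0, K_{\text{single},0})$ are indexed by the same nuisance function, and it remains to show they coincide as subspaces of $L^2_0(\P_0)$.

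Second, I would characterize each censoring augmentation space intrinsically — not via the explicit martingale form of \cref{prop:class}, which would be circular — as the closed linear span of observed‑data scores generated by one‑dimensional regular perturbations of the censoring coarsening law, holding $\pi_0$ and the full‑data law fixed; this is the stage‑two piece of the sequential‑CAR coarsening tangent space associated with \cref{cor:seq-car}. The crucial observation is that the observed data $\mO = (L, A, X, \Delta, N(\cdot))$ sees the censoring variables only through $C^{\full}_A$: after conditioning on $A=a$ and $L$, the sole censoring input is $C^{\full}_a$, and only on the event $\{C^{\full}_a < T^{\full}_a\}$ on which sequential CAR renders $C^{\full}_a$ conditionally independent of the full data. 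Writing the observed‑data density under sequential CAR, I would exhibit the censoring factor and show it depends on the law of $(C^{\full}_0, C^{\full}_1)$ (respectively $C^{\full}$) only through the conditional marginals $K(\cdot; a, l) = \P(C^{\full}_a > \cdot \mid A=a, L)$ (respectively $\P(C^{\full} > \cdot \mid A=a, L)$).

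Third, I would push each perturbation through this factor to obtain the induced observed‑data score. In the single‑$C^{\full}$ model a perturbation is just a perturbation of $\P(C^{\full} > \cdot \mid A=a, L)$. In the two‑censoring model, any perturbation of the joint law of $(C^{\full}_0, C^{\full}_1)$ that fixes both conditional marginals — i.e.\ one that perturbs only the dependence structure — leaves the censoring factor unchanged and therefore induces the zero observed‑data score, whereas a perturbation of the marginal $C^{\full}_a \mid A=a, L$ induces exactly the same score as the corresponding single‑model perturbation. Matching these two sets of scores and taking closures gives $\mathrm{AS}_2(\pi_0, K_0) = \mathrm{AS}_{2,\text{single},0}(\pi_0, K_0)$.

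I expect the main obstacle to be the careful bookkeeping of the second and third steps: verifying that the dependence between $C^{\full}_0$ and $C^{\full}_1$ is genuinely unidentified, so that copula perturbations contribute the zero element rather than some nonzero mean‑zero function, and handling the region restriction $\{C^{\full}_a < T^{\full}_a\}$ in sequential CAR so that perturbations of the censoring law on $\{C^{\full}_a \geq T^{\full}_a\}$ — seen by the observed data only through the event $\{C^{\full}_a \geq X,\ \Delta = 1\}$ — are accounted for identically in both models. Once the censoring factor of the observed‑data density is written in a form depending only on $K(\cdot; a, l)$, both difficulties collapse to the single statement that the observed data never simultaneously reveals $C^{\full}_0$ and $C^{\full}_1$.
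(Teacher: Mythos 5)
Your plan reaches the right conclusion and rests on the right structural observation---the observed data is a deterministic function of $(L,\ \text{full data},\ A,\ C^{\full}_A)$, so only the conditional law of the single variable $C^{\full}_A$ given $(A,L)$ can possibly matter---but it routes that observation through machinery the paper does not need. The paper's proof works directly from the definition of the augmentation space in \cref{lem:class-rep}: it is the set of observed-data functions $h$ with $\Ef\{h(\mO)\mid \text{full data}\}=0$, and this conditional expectation integrates only over the law of $(A, C^{\full}_A)$ given the full data, which is characterized by exactly $(\pi_0, K_0)$ in the two-censoring model and by $(\pi_0, K_{\text{single},0})$ in the single-censoring model; since these are the same observed-data functionals (your step one), the two mean-zero conditions are literally the same condition and the spaces coincide---a two-line argument. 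Your version instead characterizes the space as the closed span of scores from perturbations of the censoring mechanism. That buys a more mechanistic picture (it makes visible that the copula of $(C^{\full}_0, C^{\full}_1)$ is unidentified and contributes nothing), but it costs you an additional theorem that your write-up treats as if it were the definition: the identification of the augmentation space (a conditional-mean-zero kernel) with the closure of the nonparametric CAR coarsening tangent space. That identification is true and citable \citep[Theorem~1.1]{van2003unified}, but without explicitly invoking it your argument establishes only that the two coarsening tangent spaces agree, not that the two augmentation spaces named in \cref{lem:single-reduction} agree; you should add that bridge. The copula-perturbation and region-restriction difficulties you flag in your final paragraph evaporate entirely under the direct route, since the conditional expectation of an observed-data function never sees the joint law of $(C^{\full}_0, C^{\full}_1)$, and on $\{\Delta=1\}$ it sees $C^{\full}_a$ only through the total mass $K(T^{\full}_a-; a, L)$.
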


\subsection{The  influence function}
\label{sec:estimation:eif}

In this section, we characterize an influence function corresponding to our multiply robust estimator in $\M_n$. Define $D(\mO; \P_0) \in \mathbb{R}^{4m}$ as a stacking of the first order terms of the von Mises expansion in Theorem \ref{prop:vonmises} for each estimand component. For example, the component of $D$ corresponding to $\mu_0(t_1)$ is $D_{\mu, 0}(t_1, \mO; \P_0)$. 
\begin{thm}
\label{thm:eif}
 The  influence function for estimating $\psi_0$ in the model $\M_n,$ or in the submodel $\mathcal{M}_n(\pi_0, K_0)$ where $\pi_0$ and $K_0$ are known, is $D(\mathcal{O}; \mathbb{P}_0)$ as defined in Theorem \ref{prop:vonmises}.
\end{thm}
\begin{remark}
The  influence function for $\eta_a(t)$ given as part of Theorem \ref{thm:eif} is pointwise
identical to that given in \citet{westling2021inference}; see \citet{baer2025survival} for details,
and Section \ref{sec:conc:eff} for further discussion.
\end{remark}

\subsection{Multiply robust estimators}
\label{sec:estimation:master}

In this section, we define an estimator for $\psi_0$ based on the derived influence function 
and study its asymptotic behavior. Reparameterize the influence function as $D(\psi_0, \mO; \theta_0)$, where $\psi_0$ is 
the estimand and $\theta_0 = (\pi_0, K_0, H_0, F_0)$ is a vector comprised by the nuisance 
parameters. Define the initial estimator $\hat\psi_{n}$ as solving the 
influence function as an estimating function; under mild conditions, the linearity of $D$ implies
that $\E_n \left\{ D(\hat\psi_{n}, \mO; \hat{\theta}_n ) \right\}
    = o_{\P_0}(n^{-1/2})$
for some nuisance parameter estimator $\hat\theta_n$. In Section \ref{sec:numeric}, 
we briefly outline one approach to estimate the nuisance 
parameter $\theta_0$, denoting the estimator as $\hat{\theta}_n$.  We require  any 
estimator be \emph{cross-fitted} 
\citep[e.g.,][]{chernozhukov2018double};
specifically, we adopt the “DML2” version described in \citet{chernozhukov2018double}. 
For notational simplicity, we omit explicit cross-fitting notation throughout the paper.
The following result provides conditions under which $\hat\psi_{n}$ is asymptotically linear with influence function $D(\mO; \P_0).$
\begin{thm}
\label{thm:est-master}
    Let $\P_0 \in \M_n$. We impose the following assumptions: (1) the nuisance parameter estimators $\hat\theta_n$ are cross fit; and,
    all $n$ sufficiently large, we have
        (2) $\epsilon' < \pi_n(a; l) K_n(\tau; a, l)$ for some $\epsilon'>0$ and (3)
         $F_n(u, t; a, L) \leq C_F' H_n(u; a, L)$ for all $u, t > 0$ and $a=0,1$, almost surely, for some $C_F' < \infty$.
    Under i.i.d.\ sampling, conditions (1)-(3) given above, Assumption \ref{cor:seq-car}, consistency of all nuisance parameter
    estimators, and the assumptions given 
    in Section \ref{sup:reg:ass} of the Supplemental Appendix,
    the estimator $\hat{\psi}_n$ satisfies the expansion
    $ \hat{\psi}_n - \psi_0 
        = \E_n \left\{ D(\mO; \P_0) \right\}
        + o_{\P_0}(n^{-1/2}) 
        + O_{\P_0}(r_n),$
    where $D(\mO, \P_0)$ is an influence function in $\M_n$ and the explicit form of $O_{\P_0}(r_n)$ is given in Section \ref{sec:supp:if-class} of the Supplemental Appendix. It follows that $D(\mO; \P_0)$ is the influence function of $\hat\psi_n$ if $O_{\P_0}(r_n) = o_{\P_0}(n^{-1/2})$. 
\end{thm}

The assumptions of the theorem are mild.  The first simply states that the nuisance parameter estimators are cross fit. 
The second and third are the empirical analog of Assumptions \ref{asp:positivity} and \ref{asp:recur-event-growth} in 
Section \ref{sec:supp:if-class} of the Supplemental Appendix, respectively ensuring positivity and that $H_n = 0$ implies that $F_n = 0$ so that the ratio $F_n/H_n$ is well defined. The final assumption that each remainder is $o_{\P_0}(n^{-1/2})$ requires certain smoothness, sparsity, or c\'adl\'ag with finite sectional variation norm \citep{bibaut2019fast}. 

Considering the remainder derived in Theorem \ref{prop:vonmises}, our proposed estimators for both $\mu_a(t)$ and $\eta_a(t)$ will generally be consistent provided that either $(F_n,H_n)$ or $(\pi_n,K_n)$ consistently estimate their population counter-parts. The proof of Theorem \ref{thm:est-master} gives bounds on the remainder term that also suggest $\hat\psi_n$ has certain multiple robustness properties, provided  consistent nonparametric estimators are used for all nuisance parameters. For example, and in the extreme, $(\pi_n,K_n)$ converging at the usual parametric rate allows $(F_n, H_n)$ to converge arbitrarily slowly; more generally, faster rates of convergence for $(\pi_n,K_n)$ allows $(F_n,H_n)$ to have slower convergence rates while still guaranteeing that the remainder is $o_{\P_0}(n^{-1/2}).$ These same statements hold when reversing the roles of $(\pi_n,K_n)$ and $(F_n, H_n)$. 

Importantly, the estimator $\hat\psi_{n}$ may violate global constraints like monotonicity. To address this, we can define  $\hat\psi_{n, \mathrm{proj}}$ as its $L_2(\P_n)$ projection onto the space of increasing functions via isotonic regression. Under sufficient regularity, the results of \citet{westling2020correcting} imply that $\hat\psi_{n, \mathrm{proj}}$ is asymptotically equivalent to $\hat\psi_n$, with the potential for finite-sample improvement from enforcing monotonicity.

We note that the above results could also be established without cross-fitting of the nuisance parameters, provided that the true and estimated nuisance parameters were known to lie in a Donsker class, hence not ``too complex'' as functions of their arguments.

\section{Causal Estimands for Recurrent Events Stopped by Failure: Prior Work}
\label{sec:litreview}
\label{sec:litreview:recur}

In this section, we review the literature for estimating $\mu^*_a= \E^*\{ N^*_a (t) \}$ and thereby contextualize our results. We note
that in non-causal settings, the expected number of recurrent events before failure was first studied by \citet{cook1997marginal}. Additional studies include those by \citet{ghosh2000nonparametric,strawderman2000estimating}, among numerous others. The estimand $\mu^*_a$ reduces to such a non-causal estimand when $A=a$ almost surely, as this restriction  eliminates the propensity score component from the  influence function.

\subsection{Interpretability}
\label{sec:litreview:recur:interp}

Most work on this problem, both in lifetime analysis and in causal lifetime analysis, either notes or stresses that $\mu^*_a$ entwines the behavior of the (pure) recurrent event process $N^{**}_a$ and of the failure time $T^*_a$. 
Since $\eta^*_a$ is readily estimated, it is widely understood that $\mu^*_a$ should be interpreted in the context of $\eta^*_a$. However, exactly how best to do so remains a matter of uncertainty and lacks consensus. \citet{janvin2022causal} show that separability of the exposure can help
to disentangle the behavior of $N^{**}_a$ and $T^*_a$, but their approach relies on strong untestable assumptions. 
In this work, we do not propose a formal procedure for helping to isolate the causal effect for the recurrent process. Instead, for a given value $\mu^*_a(t)$, we simply inspect the history of $\eta_a^*(u)$ for $0 < u \leq t$. Further analysis and discussion is provided in \ref{sec:conc:interp}. 

\subsection{Some existing estimators (causal case)}
\label{sec:litreview:recur:existing-est}

As noted previously, many observed-data approaches to estimate $\mu_a(t)$ have been proposed in the absence of causal inference considerations  (i.e. when $A=a$ almost surely). In a causal inference setting, the earliest work directly studying $\mu^*_a$ is (to the authors' knowledge) \citet{schaubel2010estimating}, where two estimators are proposed: the first imputes censoring times and weights by the propensity score, while the second is a ``double'' inverse probability weighted estimator that solves  $\P_n \left\{  \frac{I(A=a)}{\pi_0(a; L)} \int_{(0, t]} \frac{\d N(u)}{K_0(u-; a)} - \mu_a(t) \right\} = 0,$ with $K_0(\cdot;a)$ being estimated by treatment-specific Kaplan-Meier estimators.
Below, we show that the influence function of this latter estimator belongs to the class given in Theorem \ref{prop:class}.
\begin{prop}
    The oracle estimator of \citet{schaubel2010estimating} has influence function with indices 
        $h_1(L)  = 0$, $ 
        h_2(u; A, L, \bar{N}(u))  = 
        I(A=a) N(u\wedge t) / \left[ 
        \pi_0(a; L)K_0(u; a) \right].        $
\end{prop}
\noindent The proof relies on the fundamental identities in Section \ref{eq:rr3} of the Supplemental Appendix. When the nuisance parameters are efficiently estimated in a smooth parametric model, the influence function is modified through a projection \citep[Theorem 9.1]{tsiatis2006semiparametric}. 

\citet{janvin2022causal} propose a more general class of inverse probability weighted estimators in this same causal estimation problem that includes the double inverse probability weighted estimator of \citet{schaubel2010estimating} as a special case; in the case of time-independent $L,$ it can be similarly shown that the influence functions of their estimators for $\mu_a(t)$ and $\eta_a(t)$ fall into our class. In the absence of failure 
\citet{jensen2016marginal} propose a related semiparametric marginal structural intensity model for a longitudinal exposure, whereas
 \citet{su2020doubly} proposed an augmented estimator for $\mu_a(t),$ and assert that their estimator has a double robustness property that holds under the conditions $(C^*_0, C^*_1, N^*_0(\cdot), N^*_1(\cdot)) \ind A \mid L$ and $C^*_a \ind N^*_a(\cdot) \mid L$ for $a=0,1$. Importantly, however, these authors also make a critical but unstated assumption: $C^*_a \ind L \mid A=a$ for $a=0,1$. Under a violation of this relatively strong assumption, their inverse probability weighted estimator is inconsistent and their augmented estimator is generally not doubly robust. The need for this assumption widens the class of influence functions in their (implicit) model; consequently, their influence function does not generally lie in our class. Section \ref{sec:supp:litreview:steele} of the Supplemental Appendix contains further details, including a simple modification of their estimator that generates an influence function that falls into our class, and is doubly robust under certain conditions.

\section{Numerical Studies} 
\label{sec:numeric}

We conducted a numerical study to investigate the finite-sample performance of the proposed methods, specifically the estimation of each treatment-specific functional. For brevity, we summarize the study design, the details of which can be found in the Supplementary Materials. First, we generated the covariates as $L_1 \sim \text{Bernoulli}(0.5)$, $L_2 \sim \text{Unif}(-1, 1)$ and $L_2 \sim 0.5 + 3 \times \text{Beta(2,2)}$. Then the treatment assignment was generated according to a logistic regression model with $A \sim \text{Bernoulli}(\text{expit}(\beta_A^{\top}(1, L_1, L_2, L_3)^{\top}))$. We simulated the censoring time $C$ using the proportional hazards model
\begin{equation}
    \lambda_C(t;A,L) = \rho_{0,C}(t)\exp\{\beta_C^{\top}(1, A, L_1, L_2, L_3)^{\top}\},
\end{equation}
where the baseline hazard $\rho_{0,C}$ is of a Weibull distribution with scale and shape parameters both equal to 1. The coefficients for the treatment and censoring models,  $\beta_A$ and $\beta_C$ respectively, were specified according to three scenarios, to be discussed shortly. Finally, we simulated recurrent events and death in a similar manner to the censoring time, using proportional intensity/hazards models with not only main terms but also with some interactions among the treatment and the covariates. We truncated $T$ at the administrative censoring time $\tau = 12$, and recurrent events at the observed survival time $X = \min(T', C),$ where
$T' = T \wedge \tau$. The parameters for the recurrent events and death models remained the same in all three simulation scenarios, while parameters for the censoring and treatment assignment models varied. Overall, the simulation parameters were chosen such that $0.3 \le \mathbb{P}^*(A = 1 \mid L_1, L_2, L_3) \le 0.7$; $\mathbb{P}^*(C \ge 12) \ge 0.02$ and $\mathbb{P}^*(C \le T') \le 0.8$.

We used SuperLearner to estimate the propensity score $\pi(\cdot)$ and Survival SuperLearner \citep{westling2021inference} to estimate the conditional survival and censoring distributions $H(\cdot)$ and $K(\cdot)$. The libraries we used to estimate the propensity score, the conditional survival and censoring probability, and how we estimated the nuisance parameters $F(\cdot)$ are described in greater detail in 
Section \ref{sup:simulation} of the Supplemental Appendix. A 5-fold cross-fitting procedure was conducted to obtain the proposed {one-step AIPW} estimates. To evaluate the performance of our {one-step AIPW} estimators, we compared them with other possible estimators for $\mu_a^*(t)$ and $\eta_a^*(t)$. First, we considered IPW estimators for both $\mu_a^*(t)$ and $\eta_a^*(t)$, that is $\hat{\mu}_a^{\text{IPW}}(t)  = \mathbb{E}_n\left\{\frac{I(A=a)}{\pi(A;L)}\frac{\Delta}{K(X-;A,L)}N(t)\right\},$ and $\hat{\eta}_a^{\text{IPW}}(t)  = \mathbb{E}_n\left\{\frac{I(A=a)}{\pi(A;L)}\frac{\Delta}{K(X-;A,L)}I(X > t)\right\}$. 
  Because asymptotic linearity of IPW estimators requires nuisance parameters estimable at parametric rates, we used a logistic regression and a Cox proportional hazards model to respectively estimate the propensity score and censoring probabilities for $\hat{\mu}_a^{\text{IPW}}(t)$ and $\hat{\eta}_a^{\text{IPW}}(t)$. Another estimator for $\eta_a^*(t)$ is the one-step estimator proposed by \cite{westling2021inference}. As noted in the Remark following Theorem \ref{thm:eif}, it is asymptotically equivalent to our estimator because they share the same influence function. However, the finite-sample implementations differ, so we include their \texttt{R} implementation in our comparative study. And finally, we considered the double inverse weighting estimator for $\mu_a^*(t)$ proposed by \cite{schaubel2010estimating}, that is
$\hat{\mu}_a^{\text{SZ}}(t) = \mathbb{E}_n \left\{ \int_0^t \frac{I(A=a)}{\pi(A;L)}\frac{1}{K(u;A)}dN(u) \right\}$. Following \cite{schaubel2010estimating}, we estimated the propensity score model using logistic regression and the censoring distribution using a treatment-specific Nelson-Aalen estimator, that is
    $K_n^{\text{SZ}}(t;A) = \exp\{-\Lambda^{\text{SZ}}_{nC}(t;A)\} \hspace{5mm} \text{where} \hspace{5mm}
    \hat{\Lambda}^{\text{SZ}}_C(t;A)\} = \mathbb{E}_n \left\{\int_0^t\frac{ I(A = a)dN_C(u)}{\mathbb{E}_n\{I(A = a)I(X \ge u)\}}\right\}.$
There are two important differences between  $\hat{\mu}_a^{\text{SZ}}(t)$ and $\hat{\mu}_a^{\text{IPW}}(t)$. First, $\hat{\mu}_a^{\text{SZ}}(t)$ assumes that $C$ only depends on the treatment assignment $A$ but not the covariates $L$. Second, $\hat{\mu}_a^{\text{SZ}}(t)$ uses more information along the time interval $[0,t]$ while $\hat{\mu}_a^{\text{IPW}}(t)$ only uses the information at $t$.  

As discussed earlier, we investigated three scenarios for the propensity score and censoring models. In  Scenario 1, the propensity score model includes main effects for all covariates, and the censoring model depends only on $A$, as $\hat{\mu}_a^{\text{SZ}}(t)$ assumes. Scenario 2 differs from the first in that the censoring model depends not only on the treatment $A$, but also covariates $L_1, L_2, L_3$. Scenario 2 aims to simulate the situation where the assumption that censoring is independent of the covariates fails.  Scenario 3 keeps the same censoring model as the second, while the propensity score model only depends on $L_3$, a strong covariate for the recurrent event and death models. In the third scenario, the propensity score and the censoring models for the IPW estimators are estimated using $L_1$ and $L_2$ only. The simulation scenarios are summarized in Table~\ref{table:sim-scenario}. In each scenario, we generated 1000 datasets for sample sizes $n = 500, 1000, 1500, 2000, 2500$. 

\renewcommand*{\thefigure}{\arabic{figure}}
\renewcommand*{\thetable}{\arabic{table}}

\begin{table}[h!]
    \centering
    \begin{tabular}{c|ccc}
    \hline
     & Scenario 1 & Scenario 2 & Scenario 3 \\
    \hline
    IPW estimators & Correctly specified & Correctly specified & Misspecified \\
    SZ estimators & Correctly specified & Misspecified & Misspecified \\
    \hline \\
\end{tabular} 
\caption{Summaries of three scenarios in the simulation study.}
\label{table:sim-scenario}
\end{table}
The simulation results are summarized in Figures~\ref{fig:rmse-average} and \ref{fig:coverage-average},  giving the point-wise root mean squared error (RMSE) and coverage, respectively, of each estimator with respect to each sample size for
$t \in \{1, 2, 3, 4, 5, 6\}.$
Each point in the plots represents the average result over the 1000 simulation replications and over landmark times $t \in \{1, 2, 3, 4, 5, 6\}$. Separate results for each landmark time $t$ are presented in the Supplemental Appendix, and display similar trends shown in Figures~\ref{fig:rmse-average} and \ref{fig:coverage-average}. 
\begin{figure}[h!]
    \centering
    \includegraphics[width=\linewidth]{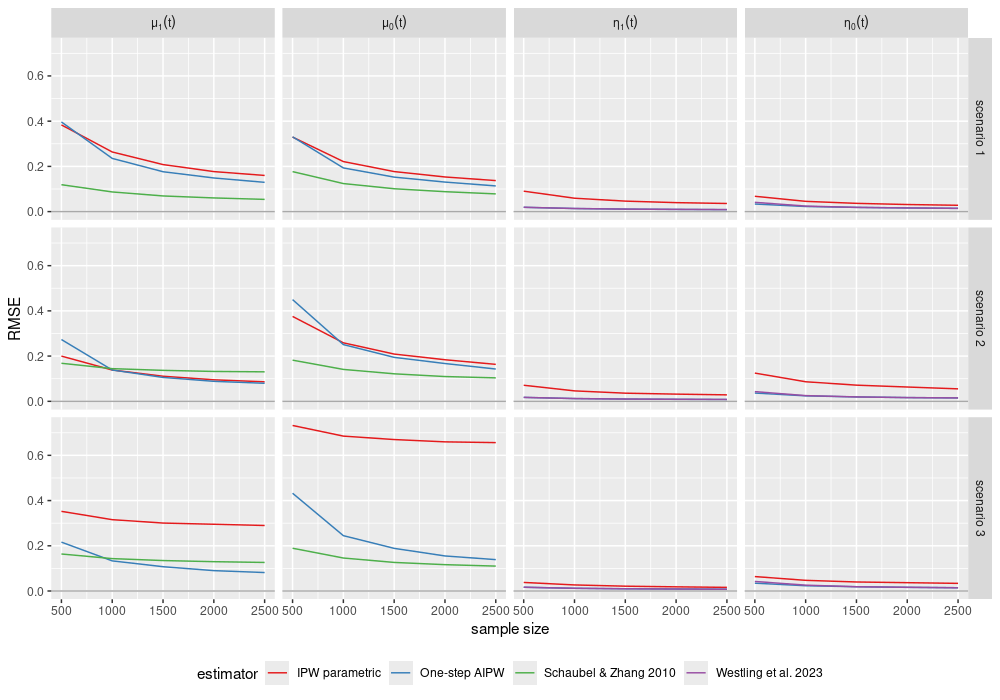}
    \caption{Point-wise root mean squared error (RMSE) of the estimators in three simulation scenarios as a function of the sample size $n$. Each point in the plot represents the average RMSE over estimators at $t \in \{1, 2, 3, 4, 5, 6\}$.}
    \label{fig:rmse-average}
\end{figure}
\begin{figure}[h!]
    \centering
    \includegraphics[width=\linewidth]{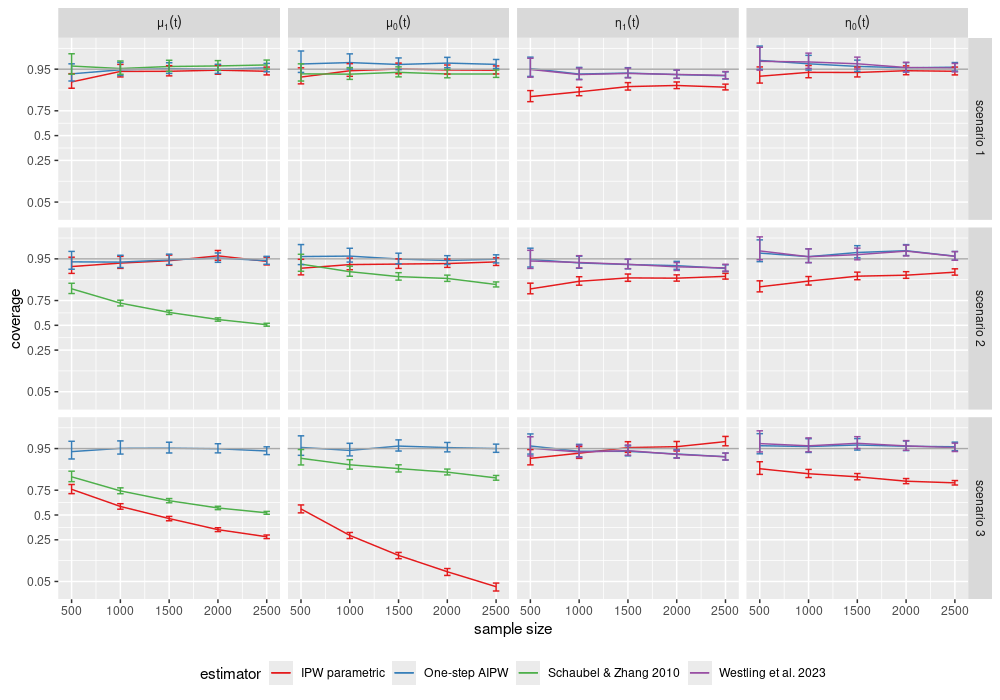}
    \caption{Coverage of the estimators for $\mu_1^*(t)$, $\mu_0^*(t)$, $\eta_1^*(t)$, $\eta_0^*(t)$ in three simulation scenarios as a function of the sample size $n$. Each point in the plot represents the average coverage over estimators at $t \in \{1, 2, 3, 4, 5, 6\}$. The vertical axis is plotted in the logistic scale, and the tick labels indicate values in the original scale. The error bars indicate 95\% confidence intervals considering uncertainty due to the finite number of simulation replications.}
    \label{fig:coverage-average}
\end{figure}

In the first scenario, where all estimators are correctly specified, we can see that all estimators for $\mu^*_a(t)$ provide correct coverage, and the RMSE decreases with increasing sample size. As $\hat{\mu}_a^{\text{SZ}}(t)$ uses more information over time than the IPW estimator, this estimator generally improves RMSE. Due to the greater uncertainty in nonparametrically estimating the nuisance parameter $F(\cdot)$, the finite-sample performance of our {one-step AIPW} estimators tends to lag behind $\hat{\mu}_a^{\text{SZ}}(t)$, with greater similarity in performance at larger sample sizes. As expected, we also observe that the {one-step AIPW} estimators gain efficiency compared to the IPW estimators. Similar trends present for estimators of $\eta^*_a(t)$, where the IPW estimators perform the worst, in terms of both coverage and RMSE.  Our estimates and those obtained using the \texttt{R} package by \cite{westling2021inference} are nearly identical, consistent with earlier
comments on their asymptotic equivalence. 

In the second scenario, where censoring probabilities depend not only on the treatment variable $A$ but also the covariates $L_1, L_2, L_3$, the performance of $\hat{\mu}_a^{\text{SZ}}(t)$ deteriorates as expected, with coverage worsening as sample size increases. Although the RMSE of $\hat{\mu}_a^{\text{SZ}}(t)$ remains low (i.e., mainly due to lower variance), the IPW and {one-step AIPW} counterparts tend to surpass it as sample size increases. This observation is more pronounced for $\mu_1^*(t)$ than for $\mu_0^*(t)$ because the models for the recurrent events and death respectively include interaction terms among $A$ and the covariates, making the $\mu_1^*(t)$ processes more dependent on the covariates than $\mu_0^*(t)$. The comparison of the IPW and {one-step AIPW} estimators in this scenario is 
similar to that in the first scenario.

The third scenario simulates the situation where the parametric nuisance models for IPW estimators are misspecified. As expected, the performance of the IPW estimators for both $\mu_a^*(t)$ and $\eta_a^*(t)$ in this case deteriorates in both coverage and RMSE, especially the coverage, which quickly decreases as sample size increases. The good performances of the proposed {one-step AIPW} estimators remain with correct coverages and low and decreasing RMSE. 

Section \ref{sup:simulation} of the Supplemental Appendix provides separate results for each landmark time and shows similar trends to those observed in Figures~\ref{fig:rmse-average} and \ref{fig:coverage-average}.

\section{Effect of PM$_{2.5}$ Exposure on Cardiovascular Disease Hospitalizations}
\label{sec:example}

PM${2.5}$ refers to airborne particles 2.5 micrometers or smaller that can penetrate deep into the lungs. Numerous studies have linked PM${2.5}$ exposure to adverse health outcomes, including respiratory illness, heart disease, and premature death \citep[e.g.,][]{xing2016impact,pun2017long}. 
Our analysis uses data from 272,226 Medicare beneficiaries who turned 65 between 2000 and 2016 while residing in a single ZIP code in Arizona. Each individual is followed for up to four years—two years of baseline exposure and up to two years of follow-up ($\tau = 24$ months), potentially truncated by death. Exposure and baseline covariates are based on ZIP-code-level averages during the baseline period. The treatment group ($A = 1$, $n =68060$) includes individuals in the highest quartile of PM$_{2.5}$ exposure ($\ge 9.11 \mu g/m^3$), and the control group ($A = 0$, $n = 68059$) includes those in the lowest quartile ($\le 4.75 \mu g/m^3$).

 We apply our method to estimate the point-wise counterfactual survival probabilities $\eta_a^*(t)$ and number of CVD-related hospitalizations $\mu^*_a(t)$ for every month after the 2-year baseline, that is, we consider estimates for landmark times $t = 1, 2, ..., 24$. Both $\pi(\cdot)$ and  $F(\cdot)$ are estimated as described in the numerical studies. The conditional survival and censoring functions are estimated using random forests for survival data with default hyperparameters specified in the \texttt{randomForestSRC} package in \texttt{R} \citep{ishwaran2007random}. We did not use survival SuperLearner \citep{westling2021inference} in this case due to the memory inefficiency of this package for large datasets. Our analysis results are shown in Figure~\ref{fig:AZ-mu-eta}. We can see that a high level of exposure to PM$_{2.5}$ increases the number of CVD-related hospitalizations, especially at 8 months and thereafter. The counterfactual survival probabilities are slightly reduced in the high exposure group, although for most of the landmark time points, the survival probabilities between the two groups are not statistically different. 

\begin{figure}[htbp!]
    \centering
    \includegraphics[width=\linewidth]{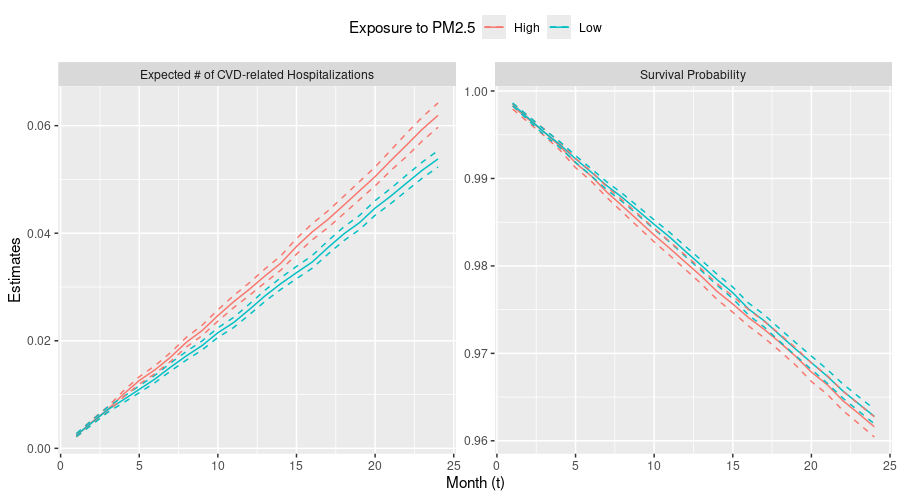}
    \caption{The left plot compares the estimates (full lines) and 95\% confidence intervals (dashed lines) of the expected number of CVD-related hospitalizations among groups of elder Medicare beneficiaries in Arizona, who have high or low exposure to PM$_{2.5}$. The right plot compares the estimates and 95\% confidence intervals of the survival probability among groups of elder Medicare beneficiaries in Arizona, who have high or low exposure to PM$_{2.5}$.}
    \label{fig:AZ-mu-eta}
\end{figure}

It is well known that the interpretation of $\mu_a^*(t)$ is complicated by the potentially competing dynamics of the recurrent event and death processes, and
that this is exacerbated when making comparisons between independent groups. For example, an increase in $\mu_1^*(t)$ compared to $\mu_0^*(t)$ might be entirely due to a corresponding increase in expected survival time in the treated group (i.e., we may expect more events simply because the subjects live longer). Although this is not the case in Figure~\ref{fig:AZ-mu-eta}, where the counterfactual survival curves appear to be very similar, it may still be helpful to consider a composite measure of effect, such as the ``while-alive strategy" causal estimand proposed in the literature \citep{schmidli2021estimands,maoWA2022,janvin2022causal}. This measure has specifically been proposed to address the interpretation problem of comparing $\mu^*_1(t)$ and $\mu^*_0(t).$ This causal estimand is defined as
\[ \frac{\mu_1^*(t)/\mathbb{E}^*[T_1^*(t)]}{\mu_0^*(t)/\mathbb{E}^*[T_0^*(t)]},\]
where $\mathbb{E}^*[T_a^*(t)]$ denotes the expected counterfactual restricted mean survival time up to $t$. The estimand can be interpreted as the ratio comparing ``How many hospitalizations I can expect, relative to how long I can expect to live in the next $t$ months'' between the two treatment groups. Fortunately, this estimand can be written as functions of $\mu_a^*(t)$ and $\eta_a^*(t)$ and hence can be derived from the estimates obtained from our proposed methods. The results in Figure~\ref{fig:AZ-causal} show that the ratio is significantly greater than 1 at $t \ge 5$, implying the increase in the number of CVD-related hospitalizations after adjusting for counterfactual restricted survival time among the high exposure group compared to the low exposure one.

\begin{figure}[htbp!]
    \centering
    \includegraphics[width=\linewidth]{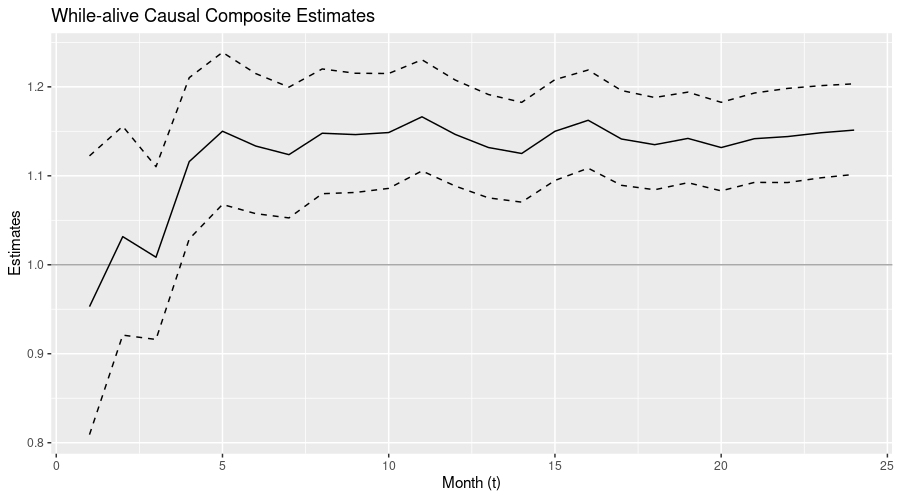}
    \caption{The estimates (full line) and 95\% confidence interval (dashed lines) for the while-alive causal composites estimand, i.e., the ratio of the expected number of CVD-related hospitalizations relative to the survival time between the high vs low exposure groups.}
    \label{fig:AZ-causal}
\end{figure}

Section \ref{sub-application} of the Supplemental Appendix contains additional results in which the treatment group is defined as being exposed to the top $35\%$ or $50\%$ of PM$_{2.5}$ levels and the control group is defined as being exposed to the bottom $35\%$ or $50\%$ of PM$_{2.5}$ levels, respectively. We can see that the difference between the treatment and control groups decreases as we narrow the difference in the PM$_{2.5}$ levels between the two groups. 

\section{Discussion}
\label{sec:conc}

\subsection{Efficiency considerations}
\label{sec:conc:eff}

One important advantage of our proposed estimator is the ability to nonparametrically estimate all nuisance parameters without needing to incorporate the continuous-time history of $N(\cdot),$ information that is not always readily available. This restriction comes at a disadvantage, 
for the corresponding model space $\mathcal{M}$ is smaller than the nonparametric model space induced under CAR (i.e., $\mathcal{M}^{CAR}$). This is reflected in our identifiability conditions,
in Section \ref{sec:estimand:car}, 
which are stronger than coarsening at random (CAR) since we do not condition on the history of the recurrent event process.   In view of the class of influence functions given in Theorem \ref{prop:class}, we conjecture that our estimator can be made efficient (i.e., within this smaller model) if we further assume $N_C(t) \ind N(t) \mid A,L$ and include the history of $N(t^-)$ (i.e., $\bar N(t^-)$) in the conditioning set of the nuisance functions $H$ and $F$ \citep[e.g.,][]{cort-sch2022}. In this case we define $H$ using a product integral of failure time intensities as $H(u) = \prodi_{s=0}^u \{1-d\Lambda(s \mid A,L,\bar N(t^-))\}$ where $d\Lambda(s\mid A,L,\bar N(t^-))$ is the conditional intensity corresponding to the failure time process; the estimators of \citet{baer2025survival} and \citet{westling2021inference} are no longer nonparametrically efficient in this case. A middle ground between the 
two extremes (i.e., conditioning on the full history versus none of it) is to make additional use of the process information at the landmark times only (i.e., while under observation). Efficiency might also be improved by modifying each of the IPW estimators used
to reflect the changing risk sets across landmark times.

\subsection{On causal interpretation}
\label{sec:conc:interp}

A formal causal interpretation of our primary estimand, $\mu^*_a(t)$, is nuanced, especially when considering its relationship with the survival function $\eta^*_a(t)$. While contrasts such as $\mu^*_1(t) - \mu^*_0(t)$ are often interpreted as total effects, they may reflect both direct and indirect effects through survival. Alternative approaches, such as the ``while alive'' strategy, attempt to jointly characterize recurrent events and survival, but still pose interpretational challenges under differential mortality. We propose a decomposition of $\mu^*_a(t)$ across landmark intervals to highlight the respective contributions of survival and recurrent event experience within each time interval. Although some terms in this decomposition lack strict causal interpretation, this can still offer useful descriptive insight into treatment effects. We refer the reader to Section \ref{supp:conc:interp} of the Supplemental
Appendix for further discussion and examples.

\subsection{On future work}

There is much work still be done in studying the full data estimand component $\mu^*_a$. Our approach relies on a sequential coarsening mechanism for identification, but exploring alternative methods such as instrumental variables, negative control variables, and other approaches would be valuable. The current work only considers a point-exposure setting; generalizing it to time-varying exposures is an important direction for future research. In settings with many time points,  methods based on ``under-smoothing'' may help mitigate the risk of obtaining irregular estimators with large biases due to the misspecification of some nuisance functions \citep{van2014targeted, ertefaie2023nonparametric}. Our proposed one-step estimator is derived from an estimating equation; however, other estimation strategies, such as targeted minimum loss estimators, may have improved finite sample performance \citep{van2011targeted}.  Finally, the results of this paper can be applied to estimate average treatment effects for recurrent event processes. While such an extension is of interest, a key challenge in the nonparametric framework is that conditional average treatment effects may not be pathwise differentiable, complicating inference. 



\section*{Acknowledgements}

BRB and AE were partially supported by the National Institute of Neurological Disorders and Stroke (R61/R33 NS120240). 
RLS, AE, DM, TB were partially supported by the National Institute for Environmental Health Sciences (R01 ES034021 [RLS, AE, DM, TB],
R01 ES035735 [DM]). 
AE was partially supported by the National Institute on Drug Abuse Health Sciences (R01 DA058996, R01 DA048764). 

\vspace*{-8pt}


%


\putbib[references]
\end{bibunit}

\newpage

\setcounter{page}{1}
\setcounter{section}{0}
\setcounter{equation}{0}
\setcounter{figure}{0}
\setcounter{table}{0}

\renewcommand*{\thesubsection}{S\arabic{section}.\arabic{subsection}}
\renewcommand*{\theequation}{S\arabic{equation}}
\renewcommand{\bibnumfmt}[1]{[S#1]}
\renewcommand{\citenumfont}[1]{S#1}

\renewcommand*{\thesection}{S\arabic{section}}

\begin{center}
    {\bf Supplemental Web Appendix} \\[2ex]
%
\end{center}


\begin{bibunit}[biom]

\section{Three Key Identities }
\label{supp:sec:fund-ident-proofs}

In this section, we state three identities that are important to deriving many of the results in this paper. Two of these are stated
and proved in \citet{baer2025survival}, along with some important background information; we restate those here without proof. A third
identity, important to the recurrent event problem considered in this paper, is proved as part of Lemma \ref{lem:rr-lems}.

Let $X \in (0, \tau]$ and $\Delta\in\{0,1\}$ be arbitrary numbers (or random variables) 
that need not be related in any way. Let $K$ be right-continuous, have locally bounded variation on any finite interval, and satisfy $K(0)=1$ and $K(u) \neq 0$ for all $u \in (0,\tau],$ where $\tau < \infty.$ Here, $K$ is an arbitrary function that is not necessarily the survival function of some potential censoring time. 
Below, we consider domains of integration as extending to $\infty;$ however, this is for convenience only, as each
stop at $\tau < \infty$ due to the assumptions imposed. 

Define
\begin{equation*}
\label{eq:lamc-defn}
    \Lambda_C(t) = - \int_{(0, t]} \frac{\d K(u)}{K(u-)};
\end{equation*}
note this is the cumulative hazard for censoring when $K$ is the censoring survival function. Define 
\begin{equation}
    M_C(t)
    = N_C(t) - \int_{(0,t]} Y^{\dagger}(u) \,\d \Lambda_C(u), \label{eq:mc}
\end{equation}
where  $Y^{\dagger}(u) = I(X > u, \Delta=1 \text{ or } X \geq u, \Delta=0).$

\begin{lem}
\label{lem:rr-lems}
Let $t>0$. 
Under the general setup just defined, the following identities hold:
\begin{align}
    \frac{I(X > t)}{K(t)} 
    & = \frac{\Delta}{K(X-)} I(X > t) + \int_{(t,\infty)} \frac{\d M_C(u)}{K(u)} \label{eq:rr1} \\
    \frac{\Delta}{K(X-)} 
    & = 1 - \int_{(0,\infty)} \frac{\d M_C(u)}{K(u)}. \label{eq:rr2}
\end{align}
Under the additional assumption that $N$ is right-continuous with $N(0)=0$ and $N(u) = N(u \wedge X)$ for any $u\geq 0$, the following also holds:
\begin{equation}
    \int_{(0,t]} \frac{\d N(u)}{K(u-)} 
    = \frac{\Delta}{K(X-)} N(t) + \int_{(0,\infty)} N(u\wedge t) \frac{\d M_C(u)}{K(u)}. \label{eq:rr3}
\end{equation}
\end{lem}

The proof of identities \eqref{eq:rr1} and \eqref{eq:rr2}
may be found in \citet{baer2025survival}; the proof
of \eqref{eq:rr3} is given below.
\begin{proof}[ Proof of \cref{lem:rr-lems} ]
 We begin by considering the term
    \begin{equation}
        \int_{(0,\infty)} N(u\wedge t) \frac{\d M_C(u)}{K(u)}
        = \int_{(0,t]} N(u) \frac{\d M_C(u)}{K(u)} + \int_{(t,\infty)} N(t) \frac{\d M_C(u)}{K(u)}. \label{eq:rr3-proof-aug}
    \end{equation}
    Applying the identity in \cref{eq:rr1}, the second summand in \cref{eq:rr3-proof-aug} may be rewritten as
    \begin{equation*}
        \int_{(t,\infty)} N(t) \frac{\d M_C(u)}{K(u)}
        = N(t) \left\{ \frac{I(X>t)}{K(t)} - \frac{\Delta}{K(X-)} I(X>t) \right\}.
    \end{equation*}
    The first summand in \cref{eq:rr3-proof-aug} may be directly simplified as
    \begin{align}
        \int_{(0,t]} N(u) \frac{\d M_C(u)}{K(u)}
        & = \int_{(0,t]} N(u) \frac{\d N_C(u) - Y^{\dagger}(u) \,\d \Lambda_C(u)}{K(u)} \nonumber \\
        & = N_C(t) \frac{N(X)}{K(X)} - \int_{(0,t]} N(u) Y^{\dagger}(u) \,\d \left\{ \frac{1}{K(u)} \right\}. \label{eq:rr3-aug2-proof}
    \end{align}
    Recalling that $Y(t) = I(X \geq t)$ defines the usual at-risk process, the last summand in \cref{eq:rr3-aug2-proof} is 
    \begin{align*}
        \int_{(0,t]} N(u) Y^{\dagger}(u) \,\d \left\{ \frac{1}{K(u)} \right\} 
        =\, & \int_{(0,t]} N(u) Y(u) \,\d \left\{ \frac{1}{K(u)} \right\}
         - \int_{(0,t]} N(u) I(X=u,\Delta=1) \,\d \left\{ \frac{1}{K(u)} \right\} \\
         =\, & \int_{(0, X \wedge t]} N(u) \,\d \left\{ \frac{1}{K(u)} \right\} - \Delta I(X\leq t) N(X) \left\{ \frac{1}{K(X)} - \frac{1}{K(X-)} \right\}
    \end{align*}
    In summary, we have shown that the augmentation term (in \cref{eq:rr3-proof-aug}) may be rewritten as
    \begin{align*}
        \int_{(0,\infty)} N(u\wedge t) \frac{\d M_C(u)}{K(u)}
        & = (1-\Delta)I(X\leq t) \frac{N(X)}{K(X)} \\
        & \hspace{-1in} - \int_{(0,X \wedge t]} N(u) \,\d \left\{ \frac{1}{K(u)} \right\} + \Delta I(X\leq t) N(X) \left\{ \frac{1}{K(X)} - \frac{1}{K(X-)} \right\} \\
        & \hspace{-1in} + N(t) \left\{ \frac{I(X>t)}{K(t)} - \frac{\Delta}{K(X-)} I(X>t) \right\}. 
    \end{align*}

    Before simplifying this expression, we apply integration by parts to show that
    \begin{align*}
        \int_{(0,t]} \frac{\d N(u)}{K(u-)} + \int_{(0, X \wedge t]} N(u) \,\d \left\{ \frac{1}{K(u)} \right\} 
        & = \int_{(0, X \wedge t]} \frac{\d N(u)}{K(u-)} + \int_{(0,t \wedge X]} N(u) \,\d \left\{ \frac{1}{K(u)} \right\} \\
        & = \frac{N(X \wedge t)}{K(X \wedge t)}.
    \end{align*}
    These expressions now simplify to the desired equality, noting again that $N(t) = N(X \wedge t)$.
\end{proof}

\section{Regularity assumptions}
\label{sup:reg:ass}

\begin{assp}
\label{asp:finitevar}
 All random variables have finite variance.
\end{assp}
\begin{assp}
\label{asp:phi}
 The map $\Phi$ correctly specifies the relationship between the full and observed data.
\end{assp}
\begin{assp}
\label{asp:trunc}
 There exists a cutoff $\tau < \infty$ such that $X \leq \tau$ almost surely and $t_m \leq \tau$.
\end{assp}
\begin{assp}
\label{asp:positivity}
 There exists $\epsilon>0$ so that $\epsilon < \pi_0(a; l) K^{\full}(\tau; a, l)$ almost surely for all supported $a,l$.
\end{assp}
\begin{assp}
\label{asp:recur-event-growth}
 There exists $C_F < \infty$ so that $F^*(u, t; a, L) \leq C_F H^*(u; a, L)$ for all $u, t > 0$ and $a = 0,1$ almost surely.
\end{assp}

\cref{asp:finitevar} in standard in developing asymptotically linear estimators \citep{bickel1993efficient}.  
\cref{asp:phi} captures the consistency assumption in causal inference that one of the counterfactuals is observed and the ``longitudinal assumption'' in survival analysis that the time at risk is the minimum of the potential failure and censoring times. 
\cref{asp:trunc} is a technical condition that truncates the whole real line and allows more accessible analysis \citep{gill1983large}. Together with \cref{asp:positivity}, it implies that the potential censoring distribution has support which exceeds the support of the potential failure distribution. This part of the assumptions can be forced to hold by redefining the potential failure time to be truncated at some sufficiently small $\tau>0$. 
\cref{asp:recur-event-growth} is a technical condition that ensures the recurrent event process does not explode; it holds, for example, when the recurrent process cannot jump more than $C_F$ times. 

The term $\pi_0(a; l) K^{\full}(\tau; a, l)$ appearing in \cref{asp:positivity} is the coarsening probability given the full data, that is, $\pi_0(a; l) K^{\full}(\tau; a, l) 
 = \Pf \bigl(A=a,C^{\full}_a>\tau \mid L=l, T^{\full}_0, T^{\full}_1, N^*_0(), N^*_1() \bigr).$
It is important to stress here that our causal estimation problem has more than one coarsening variable \citep[e.g.,][]{westling2021inference}.

\section{On causal interpretation}
\label{supp:conc:interp}

Overall, a universally applicable formalization of an estimand that interprets the primary estimand $\mu^*_a$ in the context of the secondary estimand $\eta^*_a$ is challenging and there is no satisfactory solution to date. For example, as noted in \citet{janvin2022causal},
the contrast estimand $\mu^*_1(t_k)-\mu^*_0(t_k)$ can be considered as representing a total effect. An important challenge in interpreting this measure of treatment effect is that a non-zero value is insufficient to 
establish that the applied treatment has a direct
effect on the mean number of recurrent events, independently of death; in particular, 
the total effect may additionally, or solely,
be a consequence of an indirect effect on
$T^{full}_a,$ as this can change the probability of remaining at risk. The ``while alive'' strategy of \citet{schmidli2021estimands},
used in \cref{sec:example}, instead tries to capture the treatment effect on both the recurrent event and failure outcomes; that is, the dynamic behavior of
\begin{equation} 
\label{eq:schmidli} \frac{\mu_1^*(t_j)/\mathbb{E}^*[T_1^*(t_j)]}{\mu_0^*(t_j)/\mathbb{E}^*[T_0^*(t_j)]}, ~j = 1,\ldots,m
\end{equation}
where $\mathbb{E}[T_a^*(t_j)] = \int_0^{t_j}\eta^*_a(u) du $ denotes the expected restricted counterfactual mean survival time up to $t_j.$ This latter estimand,
while preferred to the simple difference measure
$\mu^*_1(t_k)-\mu^*_0(t_k),$ requires care in its
interpretation when there is a differential
effect of treatment on mortality \citep{schmidli2021estimands}, hence suffers
from drawbacks related
to those that impair the utility of the total effect measure $\mu^*_1(t_k)-\mu^*_0(t_k)$ \citep{janvin2022causal}. 

Recall that our estimand is constructed for a fixed set of landmark times $0 = t_0 < t_1 < \dots < t_m$. Trivially, for a given landmark time $t_k,$
one may write $N(t_k) 
= \sum_{j=0}^{k-1} I\{ T > t_j \} \{  N(t_{j+1}) - N(t_j) \};$
hence, under our causal framework, we have the related decomposition
\begin{eqnarray}
    \mu^*_a(t_k) & = & 
\nonumber
    \sum_{j=0}^{k-1} \E^* \Bigl[ I\{ T^*_a > t_j \} ( N^*_a(t_{j+1}) - N^*_a(t_j) ) \Bigr] 
    \\
\label{eq:dis-comp}
   & = &
    \sum_{j=0}^{k-1} \eta^*_a(t_j) \Bigl\{ \zeta^*_a(t_{j+1}, t_j) - \zeta^*_a(t_j, t_j) \Bigr\},
\end{eqnarray}
where $\zeta^*_a(u,t) = \E^* \left\{ N^*_a(u) \mid T^*_a > t \right\}$ for $u \geq t$. Importantly, $\mu^*_a(t_k)$ is a valid causal estimand, and for similar reasons, so are $\theta^*_a(t_{j},t_{j+1}) = 
\E^* \Bigl[ I\{ T^*_a > t_j \} ( N^*_a(t_{j+1}) - N^*_a(t_j) ) \Bigr]$ and $\eta^*_a(t_j)$ for each $j.$ However, in the last decomposition and despite the fact that all terms are interpretable, $\zeta^*_a(t_r, t_j)$ is not a valid causal estimand for each $j\neq 0$ and $r \geq j,$ for it conditions on $T^*_a>t_j$, an event that differs between arms \citep{hernan2010hazards}. For similar reasons, the increment $\zeta^*_a(t_{j+1}, t_j) - \zeta^*_a(t_j, t_j)$ is also not a valid causal estimand. However, despite the lack of individual-level
causal interpretation, such a decomposition may
ultimately provide greater insight into the effects of treatment in comparison with a single composite measure like \eqref{eq:schmidli}. In particular, studying the pairs $\eta^*_a(t_j), \Bigl\{ \zeta^*_a(t_{j+1}, t_j) - \zeta^*_a(t_j, t_j) \Bigr\}$ allows one to 
separately evaluate the impact of surviving
to the beginning of each landmark interval
and the ensuing change in the mean count among
those still at risk. 

\begin{remark}
The decomposition \eqref{eq:dis-comp} is
exact, and is directly related to the 
familiar continuous decomposition
\begin{equation}
\label{eq:cont-decomp}
    \mu^*_a(t)
    = \int_{(0,t]} \eta^*_a(u) \zeta^*_a(\d u;u) 
\end{equation}
that has been used to motivate estimators
in a non-causal framework; see, for example, \citet{cook1997marginal,cook2009robust}.
\end{remark}

\section{Supplementary material for \cref{sec:estimand}}

\subsection{Observed Data Identification}

\begin{proof}[Proof of \cref{prop:identif}]
    The first part of the result follows through the calculation
    \begin{align*}
        \E \{ \phi_{\mu, a} (t; \mO; \P) \}
        & = \E \left\{ \frac{I(A=a)}{\pi(A; L)} \frac{\Delta}{K(X-; A, L)} N(t) \right\} \\
        & = \E^* \left\{ \frac{I(A=a)}{\pi(a; L)} \frac{I(T^*_a \leq C^*_a)}{K^*(T^*_a-; a, L)} N^*_a(t) \right\} \\
        & = \E^* \left\{ \frac{I(A=a)}{\pi(a; L)} \frac{N^*_a(t)}{K^*(T^*_a-; a, L)} \P^* \left( T^*_a \leq C^*_a \middle| A=a, L, T^*_a, N^*_a(t) \right) \right\} \\
        & = \E^* \left[ \frac{I(A=a)}{\pi(a; L)} \frac{N^*_a(t)}{K^*(T^*_a-; a, L)} \Bigl\{ 1 - \P^* \left( C^*_a < T^*_a \middle| A=a, L, T^*_a, N^*_a(t) \right) \Bigr\} \right] \\
        & = \E^* \left\{ \frac{I(A=a)}{\pi(a; L)} N^*_a(t) \right\} \\
        & = \E^* \left\{ \frac{N^*_a(t)}{\pi(a; L)} \P^* \left( A=a \middle| N^*_a(t), L \right) \right\} \\
        & = \E^* \left\{ N^*_a(t) \right\}.
    \end{align*}
    We remark here that the expression $\P^* \left( T^*_a \leq C^*_a \middle| A=a, L, T^*_a, N^*_a() \right)$ 
    needed to be handled delicately due to the censoring assumption that only furnishes independence ``on $C^*_a < T^*_a$''. The calculations for the second and third parts follow similarly; proof for the second
    part may be found in \citet{baer2025survival}.
\end{proof}

\subsection{Derivation of von Mises Expansion Remainder}
\label{sec:supp:vm:cont}

Based on the form of the efficient influence function for $\eta_a(t)$ in \citet{bai2013doubly}, we conjectured the form of the gradient for $\mu_a(t)$ and $\eta_a(t)$ for any data type (i.e. not necessarily absolutely continuous). Here, we prove that the conjectured gradient satisifies a von Mises expansion
\citep[e.g.,][]{kennedy2023semi} by showing that it vanishes in expectation and that the corresponding remainder is second order. 

\begin{proof}[Proof of \cref{prop:vonmises}]

We start by showing that the first order term
\begin{align*}
    D_{\mu, a}(t, \mO; \P) = &
    \varphi_{\mu, a}(t, \mO; \P) - \mu_a(t)
      - \frac{I(A=a) - \pi(a; L)}{\pi(a; L)} F(0, t; a, L) \\
      &
      + \frac{I(A=a)}{\pi(a; L)} \int_{(0, \infty)} \frac{F(u, t; a, L)}{H(u; a, L)} \frac{\d M_C(u; a, L)}{K(u; a, L)}
\end{align*}
vanishes in expectation. By \cref{prop:identif}, the first term is equal to the second term in expectation. The third term clearly vanishes in expectation upon applying iterated expectations. The expectation of the fourth and final term equals
\begin{align*}
    \E \biggl\{  \frac{I(A=a)}{\pi(a; L)} & \int_{(0, \infty)} \frac{F(u, t; a, L)}{H(u; a, L)} \frac{\d M_C(u; a, L)}{K(u; a, L)} \biggr\} \\
    =\, & \E \left[ \E \left\{ \int_{(0, \infty)} \frac{F(u, t; a, L)}{H(u; a, L)} \frac{\d M_C(u; a, L)}{K(u; a, L)} \middle| A=a, L \right\} \right];
 \end{align*}
given $A = a$ and $L,$ the inner integral is the expectation of a martingale process and hence
has mean zero \citep{baer2024theory}. Somewhat more informally, this can be seen by noting that
the right-hand side of the last displayed expresson can be written
\[
\E \left[ \int_{(0, \infty)} \frac{F(u, t; a, L)}{H(u; a, L)} \frac{\d \E \left\{ M_C(u; a, L) \middle| A=a, L \right\}}{K(u; a, L)} \right].
\]
Recalling \cref{eq:mc}, the inner conditional expectation equals
\begin{align*}
    \E \bigl\{ M_C(u; a, L) & \big| A=a, L \bigr\} 
    =\,  \E \left\{ N_C(u) - \int_{(0,u]} Y^{\dagger}(s) \,\d\Lambda_C(s; a, l) \middle| A=a, L \right\} \\
    =\, & \E \left\{ N_C(u) - \int_{(0,u]} Y^{\dagger}(s) \frac{\d \E \{ N_C(s) \mid A=a, L=l \}}{\E\{ Y^{\dagger}(s) \mid A=a, L=l\}}  \middle| A=a, L \right\} \\
    =\, & \E \left\{ N_C(u) \middle| A=a, L \right\} - \int_{(0,u]} \E \left\{ Y^{\dagger}(s) \middle| A=a, L \right\} \frac{\d \E \{ N_C(s) \mid A=a, L=l \}}{\E\{ Y^{\dagger}(s) \mid A=a, L=l\}} \\
    =\, & 0.
\end{align*}
Hence, $D_{\mu, a}(t, \mO; \P)$ vanishes in expectation; similarly,  $D_{\eta, a}(t, \mO; \P)$ vanishes in expectation.

We now show that the remainder term in the von Mises expansion is second order. Recall that the von Mises expansion  of $\psi$ at $\bP$ centered at $\P,$ or 
\[
\psi(\bP) - \psi(\P)
= (\bE - \E) D(\mO; \bP) + R(\bP, \P),
\]
holds for suitable $\bP$ and $\P.$ Each of $\mu_a$ and $\eta_a$ 
are examples of $\psi$, and each be expressed as the expectation of a data-dependent expression $\phi({\cal O}; \P).$ Hence, 
the remainder simplifies as
\begin{align*}
 \bE \phi({\cal O};\bP) - \E \phi({\cal O};\P) + \E D({\cal O};\bP)
 & = \E \phi({\cal O};\bP) - \E \phi({\cal O};\P) + \E D_{\text{aug}}({\cal O};\bP),
\end{align*}
where $D_{\text{aug}}$ is the augmentation term in $D$. 
Throughout the calculations below, we use the notation e.g. $\pi,K,H,F$ to denote functionals evaluated at $\P$ and $\bar\pi,\bar{K},\bar{H},\bar{F}$ to denote functionals evaluated at $\bP$.

We start by studying the remainder for $\mu_a(t)$. 
The first two summands simplify as
\begin{align}
 \E \phi({\cal O}; \bP) - \E \phi({\cal O}; \P)
 & = \E \Bigg\{ \frac{I(A=a)}{\bar\pi(A; L)} \frac{\Delta}{K(X-; A, L)} N(t) - \frac{I(A=a)}{\pi(A; L)} \frac{\Delta}{K(X-; A, L)} N(t) \Bigg\} \nonumber \\
 & = \Ef \Bigg\{ \frac{I(A=a)}{\bar\pi(a; L)} \frac{I(C^*_a \geq T^*_a)}{\bar{K}(T^*_a-; a, L)} N^*_a(t) - \frac{I(A=a)}{\pi(a; L)} \frac{I(C^*_a \geq T^*_a)}{K(T^*_a-; a, L)} N^*_a(t) \Bigg\} \nonumber \\
 & = \Ef \Bigg\{ \frac{I(A=a)}{\bar\pi(a; L)} \frac{K(T^*_a-; a, L)}{\bar{K}(T^*_a-; a, L)} N^*_a(t) - \frac{I(A=a)}{\pi(a; L)} N^*_a(t) \Bigg\} \nonumber \\
 & = \Ef \left[ N^*_a(t) \frac{I(A=a)}{\pi(a; L)} \left\{ \frac{K(T^*_a-; a, L)}{\bar{K}(T^*_a-; a, L)} \frac{\pi(a; L)}{\bar\pi(a; L)} - 1 \right\} \right]. \label{eq:supp:rem1}
\end{align}
Note, we delay marginalizing some of the terms to aid the calculation later. 

The last summand simplifies as
\begin{align}
 & \E D_{\text{aug}}({\cal O}; \bP) \nonumber \\
 =\, & \E \left[ - \frac{I(A=a) - \bar\pi(a; L)}{\bar\pi(a; L)} \bar{F}(t; a, L) 
   + \frac{I(A=a)}{\bar\pi(a; L)} \int_{(0,\infty)} \frac{\d N_C(u) - Y^{\dagger}(u) \,\d \bar\Lambda_{C}(u; a, L)}{\bar{K}(u; a, L)} \frac{\bar{F}(u, t; a, L)}{\bar{H}(u; a, L)} \right] \nonumber \\
 =\, & \E \Bigg[ - \frac{I(A=a) - \bar\pi(a; L)}{\bar\pi(a; L)} \bar{F}(t; a, L) \nonumber \\
   & \hspace{10mm} + \frac{I(A=a)}{\bar\pi(a; L)} \int_{(0,\infty)} \frac{\d \E\{N_C(u)\mid A=a, L\} - \E\{ Y^{\dagger}(u) \mid A=a, L\} \d \bar\Lambda_{C}(u; a, L)}{\bar{K}(u; a, L)} \frac{\bar{F}(u, t; a, L)}{\bar{H}(u; a, L)} \Bigg] \nonumber \\
 =\, & \E \biggl[ - \frac{I(A=a) - \bar\pi(a; L)}{\bar\pi(a; L)} \bar{F}(t; a, L) \nonumber
 \\ & \hspace{10mm} +
    \frac{I(A=a)}{\bar\pi(a; L)} \int_{(0,\infty)} \E\{ Y^{\dagger}(u) \mid A=a, L\} \frac{\d \Lambda_{C}(u; a, L) - d \bar\Lambda_{C}(u; a, L)}{\bar{K}(u; a, L)} \frac{\bar{F}(u, t; a, L)}{\bar{H}(u; a, L)} \biggr] \nonumber \\
 =\, & \E \Bigg[ - \frac{\pi(a; L) - \bar\pi(a; L)}{\bar\pi(a; L)} \bar{F}(t; a, L) \label{eq:supp:rem2} \\
   & \hspace{10mm} + \frac{\pi(a; L)}{\bar\pi(a; L)} \int_{(0,\infty)} B_1(u;a,L,\P,\bP)
  \bar{F}(u, t; a, L) \{\d \Lambda_{C}(u; a, L) - \d \bar\Lambda_{C}(u; a, L)\} \Bigg]. \label{eq:supp:rem3}
\end{align}
where
\[
B_1(u;a,L,\P,\bP) = \frac{H(u; a, L)}{\bar{H}(u; a, L)} \frac{K(u-; a, L)}{\bar{K}(u; a, L)}.
\]
Combining \cref{eq:supp:rem1} and \cref{eq:supp:rem2} and then rearranging, we find that 
\begin{align*}
    & \Ef \left[ N^*_a(t) \frac{I(A=a)}{\pi(a; L)} \left\{ \frac{K(T^*_a-; a, L)}{\bar{K}(T^*_a-; a, L)} \frac{\pi(a; L)}{\bar\pi(a; L)} - 1 \right\} \right]
     - \E \Bigg[ \frac{\pi(a; L) - \bar\pi(a; L)}{\bar\pi(a; L)} \bar{F}(t; a, L) \Bigg] \\
    =\, & \Ef \left[ \left\{\bar{F}(t; a, L) - \frac{I(A=a)}{\pi(a; L)} N^*_a(t) \right\} \frac{\bar\pi(a; L) - \pi(a; L)}{\bar\pi(a; L)}
     + \frac{I(A=a)}{\bar\pi(a; L)} N^*_a(t) \left\{ \frac{K(T^*_a-; a, L)}{\bar{K}(T^*_a-; a, L)} - 1 \right\} \right].
\end{align*}
The summand on the left equals
\begin{align}
     \Ef \biggl[  \biggl\{\bar{F}(t; a, L) - & \frac{I(A=a)}{\pi(a; L)} N^*_a(t) \biggr\} \frac{\bar\pi(a; L) - \pi(a; L)}{\bar\pi(a; L)}  \biggr] \nonumber \\
    & =\, \Ef \left[ \left\{\bar{F}(t; a, L) - F^*(t; a, L) \right\} \frac{\bar\pi(a; L) - \pi(a; L)}{\bar\pi(a; L)} \right]  \label{eq:supp:rem4};
\end{align}
by applying the Duhamel identity, the summand on the right equals
\begin{align}
    & \Ef \left[ \frac{I(A=a)}{\bar\pi(a; L)} N^*_a(t) \left\{ \frac{K(T^*_a-; a, L)}{\bar{K}(T^*_a-; a, L)} - 1 \right\} \right] \nonumber \\
    =\, & - \Ef \left[ \frac{I(A=a)}{\bar\pi(a; L)} N^*_a(t) \int_{(0,\infty)} I(T^*_a > u) \frac{K(u-; a, L)}{\bar{K}(u; a, L)} \{ \d  \Lambda_C(u; a, L) - \d  \bar\Lambda_C(u; a, L) \} \right] \nonumber \\
    =\, & - \E \left[ \frac{I(A=a)}{\bar\pi(a; L)} \int_{(0,\infty)} F(u, t; a, L) \frac{K(u-; a, L)}{\bar{K}(u; a, L)} \{ \d  \Lambda_C(u; a, L) - \d  \bar\Lambda_C(u; a, L) \} \right] \nonumber \\
    =\, & - \E \left[ \frac{\pi(a; L)}{\bar\pi(a; L)} \int_{(0,\infty)} F(u, t; a, L) \frac{K(u-; a, L)}{\bar{K}(u; a, L)} \{ \d  \Lambda_C(u; a, L) - \d  \bar\Lambda_C(u; a, L) \} \right]  \label{eq:supp:rem5},
\end{align}
where we recall that $F$ identifies $F^*(u, t; a, L) = \E^* \{ I(T^*_a > u) N^*_a(t) \mid A=a, L \}$. 

Define
\[
B_2(u,t;a,L,\P,\bP) =
 \frac{\bar{F}(u, t; a, L)}{\bar{H}(u; a, L)}  - \frac{F(u, t; a, L)}{H(u; a, L)};
\]
then, combining \cref{eq:supp:rem3} and \cref{eq:supp:rem5} and then rearranging, we find that 
\begin{align*}
    & \E \Bigg[ \frac{\pi(a; L)}{\bar\pi(a; L)} \int_{(0,\infty)} \frac{H(u; a, L)}{\bar{H}(u; a, L)} \frac{K(u-; a, L)}{\bar{K}(u; a, L)} \bar{F}(u, t; a, L) \{\d \Lambda_{C}(u; a, L) - \d  \bar\Lambda_{C}(u; a, L)\} \Bigg] \\
     & \hspace{15mm} - \E \left[ \frac{\pi(a; L)}{\bar\pi(a; L)} \int_{(0,\infty)} F(u, t; a, L) \frac{K(u-; a, L)}{\bar{K}(u; a, L)} \{ \d  \Lambda_C(u; a, L) - \d  \bar\Lambda_C(u; a, L) \} \right] \\
    =\, & \E \biggl[ \frac{\pi(a; L)}{\bar\pi(a; L)} \int_{(0,\infty)} H(u; a, L) 
    B_2(u,t;a,L,\P,\bP)
     \frac{K(u-; a, L)}{\bar{K}(u; a, L)} \{ \d  \Lambda_{C}(u; a, L) - \d  \bar\Lambda_{C}(u; a, L)\} \biggr] \\
    =\, & \E \left[ \frac{\pi(a; L)}{\bar\pi(a; L)} \int_{(0,\infty)} H(u; a, L) B_2(u,t;a,L,\P,\bP)
     \d \left\{ \frac{K(u; a, L) - \bar{K}(u; a, L)}{\bar{K}(u; a, L)} \right\} \right] \\
    =\, & \E \left[ \frac{\pi(a; L)}{\bar\pi(a; L)} \int_{(0,\infty)} H(u; a, L) B_2(u,t;a,L,\P,\bP)
     \d \left\{ \frac{K(u; a, L)}{\bar{K}(u; a, L)} \right\} \right],
\end{align*}
where the second-to-last identity again follows from the Duhamel equation. 
The remainder for $\mu_a(t)$ follows by combining this expression with \cref{eq:supp:rem4}.

An analogous calculation shows that the remainder for $\eta_a(t)$ is
\begin{align*}
 & \E \Bigg[ \{\bar{H}(t; a, L) - H(t; a, L)\} \frac{\bar\pi(a; L) - \pi(a; L)}{\bar\pi(a; L)} \\
 & \hspace{15mm} + \frac{\pi(a; L)}{\bar\pi(a; L)} \int_{(0,\infty)} H(u; a, L) \left\{ \frac{\bar{H}(t \vee u; a, L)}{\bar{H}(u; a, L)} - \frac{H(t \vee u; a, L)}{H(u; a, L)} \right\} \d \left\{ \frac{K(u; a, L)}{\bar{K}(u; a, L)} \right\} \Bigg] \\
 =\, & \E \Bigg[ \{\bar{H}(t; a, L) - H(t; a, L)\} \frac{\bar\pi(a; L) - \pi(a; L)}{\bar\pi(a; L)} \\
 & \hspace{15mm} + \frac{\pi(a; L)}{\bar\pi(a; L)} \int_{(0,t)} H(u; a, L) \left\{ \frac{\bar{H}(t; a, L)}{\bar{H}(u; a, L)} - \frac{H(t; a, L)}{H(u; a, L)} \right\} \d \left\{ \frac{K(u; a, L)}{\bar{K}(u; a, L)} \right\} \Bigg].
\end{align*}

The above arguments establish that the remainder term for each of the von Mises expansions given in the statement of the
Theorem depend on pairwise products of  $\bar K - K,$ $\bar H-H,$ and/or  $\bar F - F;$ hence, each is of second order.
\end{proof}

\section{Supplementary material for \cref{sec:estimation}}

\subsection{Proofs: Class of Influence Functions}
\label{sec:supp:if-class}

We start by proving the lemma which reduces the problem of finding the class of influence functions when the model is defined with potential censoring variables $C^{\full}_0,C^{\full}_1$ to the (simpler) problem of finding the class of influence functions when the model is defined with the single potential censoring variable $C^{\full}$. 

\begin{proof}[Proof of \cref{lem:single-reduction}]
 The set of augmentation terms in $\M(\pi, K)$ spans $h(L,A,X,\Delta)$ where the function $h$ satisfies 
 \begin{equation*}
     \Ef [ h\{L,A,T^{\full}_A \wedge C^{\full}_A, I(T^{\full}_A \leq C^{\full}_A), N^*_A(\cdot \wedge C^*)\} \mid L, T^{\full}_0, T^{\full}_1, N^*_0(), N^*_1() ] = 0.
 \end{equation*}
 The expectation in this condition is over the joint distribution $(A, C^{\full}_A) \mid (L, T^{\full}_0, T^{\full}_1, N^*_0(), N^*_1())$ which is characterized by the functions
 \begin{align*}
  & \P^*(A=1 \mid L, T^{\full}_0, T^{\full}_1, N^*_0(), N^*_1()) = \pi(1; L), \\
  & \P^*(C^{\full}_A > u \mid A, L, T^{\full}_0, T^{\full}_1, N^*_0(), N^*_1()) = K(u; A, L),
 \end{align*}
where we recall that $K(u; A, L)$ is the identification of $\Pf(C^{\full}_A > u \mid A, L)$.

 The set of augmentation terms in $\M_{\text{single}}(\pi_0, K_0)$ spans $h_{\text{single}}(L,A,X,\Delta)$ where the function $h_{\text{single}}$ satisfies 
 \begin{equation*}
     \Ef [ h_{\text{single}}\{L,A,T^{\full}_A \wedge C^*, I(T^{\full}_A \leq C^*), N^*_A(\cdot \wedge C^*) \} \mid L, T^{\full}_0, T^{\full}_1, N^*_0(), N^*_1() ] = 0.
 \end{equation*}
 The expectation in this condition is over the joint distribution $(A, C^*) \mid (L, T^{\full}_0, T^{\full}_1)$ which is characterized by the functions
 \begin{align*}
  & \Pf(A=1 \mid L, T^{\full}_0, T^{\full}_1, N^*_0(), N^*_1()) = \pi(1; L), \\
  & \Pf(C^* > u \mid A, L, T^{\full}_0, T^{\full}_1, N^*_0(), N^*_1()) = K_{\text{single}}(u; A, L),
 \end{align*}
 where we recall that $K_{\text{single}}(u; A, L)$ is the identification of $\Pf(C^* > u \mid A, L)$.

 The result follows since these sets are the same. The equivalence of results in efficiency theory (such as the efficient influence function) follows since the likelihoods are the same.
\end{proof}

We now calculate the class of influence functions. We rely on the following result, which is given by \citet[Theorem 1.3]{van2003unified} and \citet[Theorem 8.3]{tsiatis2006semiparametric}. We state the theorem in its full generality for any coarsened data.

\begin{lem}
\label{lem:class-rep}
 Consider a coarsened data model. Assume that coarsening at random holds and that $\Pf(\text{coarsening variables} \mid \text{full data})$ is known. Define $\varphi^F$ as the full data influence function, assumed unique. Then the class of observed data influence functions is 
 \begin{equation*}
  U(\varphi^F) + \mathrm{AS},
 \end{equation*}
 where $\Ef\{U(\varphi^F) \mid \text{full data}\} = \varphi^F$ so that $U$ maps the full data influence function into the observed data and $\mathrm{AS} = \{\text{observed data functions } \, h \,:\, \Ef( h \mid \text{full data} )\} = 0$ is the augmentation space.
\end{lem}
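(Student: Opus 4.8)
The plan is to work in the observed-data Hilbert space $L_2^0(\P)$ of mean-zero, finite-variance functions of $\mO$ with inner product $\langle g_1,g_2\rangle=\E(g_1 g_2)$, to write $Z$ for the full data, and to reduce everything to a single conditional-expectation identity. Two ingredients are needed: (i) a description of the observed-data tangent space $T$ under CAR with the coarsening law fixed, and (ii) the characterization of gradients as the functions reproducing the pathwise derivative against $T$. For (i), CAR makes the observed-data likelihood factor into a full-data piece and a coarsening piece; since the latter is known it contributes no scores, and a full-data score $S^F(Z)$ induces the observed-data score $\Ef(S^F\mid\mO)$. Hence $T=\overline{\{\Ef(S^F\mid\mO):S^F\text{ a full-data score}\}}$, and $\varphi$ is a gradient (an influence function) exactly when $\langle\varphi,\Ef(S^F\mid\mO)\rangle=\dot\psi(S^F)$ for every $S^F$.

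The key step is a short computation exploiting that $S^F$ is $Z$-measurable. On one hand,
\[
 \langle\varphi,\Ef(S^F\mid\mO)\rangle=\Ef(\varphi\,S^F)=\Ef\{S^F\,\Ef(\varphi\mid Z)\};
\]
on the other hand, because $\psi$ is a full-data functional and the induced path perturbs only the full-data law, its derivative equals the full-data derivative, $\dot\psi(S^F)=\Ef(S^F\,\varphi^F)$, where $\varphi^F$ is the (unique) full-data influence function. Subtracting, $\varphi$ is a gradient iff $\Ef\{S^F(\Ef(\varphi\mid Z)-\varphi^F)\}=0$ for all full-data scores $S^F$, i.e. iff $\Ef(\varphi\mid Z)$ is itself a full-data gradient. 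With $\varphi^F$ assumed unique this says precisely $\Ef(\varphi\mid Z)=\varphi^F$.

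It then remains to solve the linear constraint $\Ef(\varphi\mid Z)=\varphi^F$. Its general solution is $\varphi=U(\varphi^F)+h$, where $U(\varphi^F)$ is any fixed observed-data function with $\Ef\{U(\varphi^F)\mid Z\}=\varphi^F$ and $h$ ranges over the kernel of the map $\varphi\mapsto\Ef(\varphi\mid Z)$, which is exactly $\mathrm{AS}=\{h(\mO):\Ef(h\mid Z)=0\}$. Thus the class of gradients is $U(\varphi^F)+\mathrm{AS}$, and since in this regular setting the influence functions of regular and asymptotically linear estimators coincide with the gradients of $\psi$, the lemma follows. Along the way one must check existence of a representer $U(\varphi^F)$ (an inverse-probability-type construction supplies one under the positivity built into the coarsening) and carry the finite-variance and closure bookkeeping that defines the class.

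The step I expect to be the main obstacle is (i): rigorously establishing that, under CAR with a known and variationally independent coarsening distribution, the observed-data tangent space is exactly the closed image $\overline{\{\Ef(S^F\mid\mO)\}}$ of the full-data tangent space --- neither larger, since the known coarsening directions are excluded, nor smaller --- together with the attendant closure arguments. Granting this standard CAR fact (as established in the cited \citet{van2003unified, tsiatis2006semiparametric}), everything else is the elementary conditional-expectation algebra above.
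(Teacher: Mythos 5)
Your proposal is correct, but note that the paper itself does not prove this lemma at all: it is stated as a known result and attributed to \citet[Theorem~1.3]{van2003unified} and \citet[Theorem~8.3]{tsiatis2006semiparametric}, so there is no in-paper proof to compare against. What you have written is an accurate reconstruction of the standard argument underlying those citations: the reduction of ``$\varphi$ is a gradient'' to the single linear constraint $\Ef(\varphi \mid Z) = \varphi^F$ via the identity $\E\{\varphi\,\Ef(S^F\mid\mO)\} = \Ef\{S^F\,\Ef(\varphi\mid Z)\}$, the use of the assumed uniqueness of $\varphi^F$ to pass from ``$\Ef(\varphi\mid Z)$ is a full-data gradient'' to equality with $\varphi^F$, and the affine-solution-set description $U(\varphi^F)+\mathrm{AS}$ with $\mathrm{AS}$ the kernel of $\varphi\mapsto\Ef(\varphi\mid Z)$. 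You also correctly isolate the one step that carries real content --- that under CAR with a known coarsening mechanism the observed-data tangent space is exactly the closure of $\{\Ef(S^F\mid\mO)\}$, because the CAR factorization of the likelihood puts all coarsening directions into a known (hence score-free) factor --- and you defer it to the cited references, which is precisely where the paper places that burden as well. The only bookkeeping worth making explicit if you were to write this out in full is the existence of a representer $U(\varphi^F)$ (the inverse-probability-weighted map, which requires the positivity condition already assumed in the paper) and that the identification of gradients with influence functions of regular asymptotically linear estimators is itself a theorem rather than a definition; both points you flag, so there is no gap.
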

The derivation of the class of influence functions simply characterizes the terms in the preceding lemma.

\begin{proof}[Proof of \cref{prop:class}]
 Throughout we consider the simpler model $\M(\pi,K_{\text{simpler}})$ in which the coarsening variables are $(A,C^*)$. After the class of influence functions is derived in this model, the class of influence functions in $\M(\pi, K)$ follows immediately from \cref{lem:single-reduction}.
 
 We start by applying \cref{lem:class-rep}. 
 The unique full-data influence function $\varphi^F$ is straightforward to find. The inverse probability weighted mapping 
\begin{equation*}
 \frac{I(A=a)}{\pi(a; L)} \frac{\Delta}{K_{\sing}(X; a, L)} \varphi^F
\end{equation*}
satisfies the conditions for $U$ in \cref{lem:class-rep}. The augmentation space is the set of functions $h\bigl(L,A,X,\Delta,N() \bigr)$ such that
\begin{equation*}
\Ef \Bigl[ h \Bigl\{L,A,T^{\full}_A \wedge C^*, I(T^{\full}_A \leq C^*), N^*_A(\cdot \wedge C^*) \Bigr\} \Big| L, T^{\full}_0, T^{\full}_1, N^*_0(), N^*_1() \Bigr] = 0
\end{equation*}
and is more difficult to characterize. 

Define the intermediate-data as 
\begin{equation*}
    \bigl( L, A, T^{\full} = T^{\full}_A, N^{\full}() = N^{\full}_A() \bigr).
\end{equation*}
Notice that the observed data 
may be written as $\Phi_{\text{obs}}\{\Phi_{\text{int}}(L,T^{\full}_0,T^{\full}_1,N^{\full}_0(),N^{\full}_1(); C^{\full}) ; A\}$, where
\begin{align*}
 \Phi_{\text{int}}(L,T^{\full}_0, T^{\full}_1, N^{\full}_0(), N^{\full}_1(); A) 
 & = \bigl(L, A, T^{\full}=T^{\full}_A, N^{\full}()=N^{\full}_A() \bigr), \\
 \Phi_{\text{obs}}(L,A,T^{\full},N^{\full}(); C^{\full}) 
 & = \bigl( L,A,X=T^{\full}\wedge C^{\full},\Delta=I(T^{\full}\leq C^{\full}), N()=N^*_A(\cdot\wedge C^*) \bigr).
\end{align*}
Denote the observed data as $\mO$, the full data as $Z^*$, and the intermediate data as $I^*_{\mathrm{data}}$.

The coarsening at random condition for $\Phi_{\text{int}}$ is that $(T^{\full}_0,T^{\full}_1,N^{\full}_0(),N^{\full}_1()) \ind A \mid L$, and the coarsening at random condition for $\Phi_{\text{obs}}$ is that $(T^{\full}_a,N^{\full}_a()) \ind C^{\full} \mid A=a, L$ on $T^{\full}_a < C^{\full}$. Both conditions are implied by the coarsening at random assumption for $\Phi$ which simultaneously coarsens. Similarly, the positivity assumption for $A$ and for $C^* \mid A$ is implied by the positivity assumptions we have made.

Define $\mathcal{G}$ as the set of intermediate-data functions $g$ such that $\Ef \{ g(I^*_{\mathrm{data}}) \mid Z^* \} = 0$, and define $\mathcal{F}$ as the set of observed-data functions $f$ such that $\Ef \{ f(\mO) \mid I^*_{\mathrm{data}} \} = 0$. \citet{van2003unified, tsiatis2006semiparametric} show that
\begin{equation*}
 \mathcal{G} 
 = \Bigl\{ \{I(A=a)-\pi(a; L)\}h_1(L) \, : \, h_1 \Bigr\}.
\end{equation*}
\cref{lem:aux-surv-class} below shows that 
\begin{equation*}
 \mathcal{F} 
 = \left\{ \int_{(0,\infty)} h_2\left( u; \bar{N}(u), A, L \right) \d M_{C,\textrm{single}}(u; A, L)  \, : \, h_2 \right\},
\end{equation*}
where $\bar{N}(u) = \{N(v) : 0<v\leq u \}$. Notice the index function $h_2$ depends on the history of the recurrent event process $N$ up to and including time $u$.

We find the augmentation term by exploiting the identity
\begin{equation*}
 \Ef \Bigl[ h(\mO) \Big| Z^* \Bigr]
 = \Ef \Bigl[ \Ef \Big\{ h(\mO) \Big| I^*_{\mathrm{data}} \Big\} \Big| Z^* \Bigr].
\end{equation*}
By Lemma 7.4 in \citet{tsiatis2006semiparametric}, the augmentation space is 
\begin{equation*}
 \frac{\Delta}{K_{\sing}(X-; A, L)} \mathcal{G} + \mathcal{F}.
\end{equation*}

We now plug $\mathcal{G}$ and $\mathcal{F}$ into this expression and rearrange the terms. By \cref{lem:rr-lems}, a typical element of the augmentation space is
\begin{align*}
    & \frac{\Delta}{K_{\sing}(X; L, A)} \big\{ I(A=a)-\pi(a; L) \big\} h_1(L) + \int_0^\infty \d M_{C,\sing}(u, A, L) h_2 (u; L, A, \bar{N}(u)) \\
    =\, & \left\{ 1 - \int_0^\infty \frac{\d M_{C,\sing}(u; A, L)}{K_{\sing}(u; L, A)} \right\} \big\{ I(A=a)-\pi(a; L) \big\} h_1(L) \\
    & \hspace*{10mm} + \int_0^\infty \d M_{C,\sing}(u; A, L) h_2 (u; L, A, \bar{N}(u)) \\
    =\, & \big\{ I(A=a)-\pi(a; L) \big\} h_1(L) \\
    & \hspace*{10mm}
    + \int_0^\infty \d M_{C,\sing}(u; A, L) \left\{ h_2 (u; L, A, \bar{N}(u)) - \frac{\{ I(A=a)-\pi(a; L) \big\} h_1(L)}{K_{\sing}(u; A, L)} \right\} \\
    =\, & \big\{ I(A=a)-\pi(a; L) \big\} h_1(L) + \int_0^\infty \d M_{C,\sing}(u; A, L) \tilde{h}_2(u; L, A, \bar{N}(u)),
\end{align*}
where $\tilde{h}_2(u; L, A, \bar{N}(u))$ is arbitrary.
\end{proof}

We now state a prove an auxilary lemma used in the proof of \cref{prop:class}.

\begin{lem}
\label{lem:aux-surv-class}
    Under right-censoring and coarsening at random, a typical augmentation term is $\int_{(0,\infty)} h_2\left( u; \bar{N}(u), A, L \right) \d M_{C,\textrm{single}}(u; A, L)$.
\end{lem}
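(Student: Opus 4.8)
The plan is to fix the baseline covariate and treatment values $(A,L)=(a,l)$ and condition on them throughout, thereby reducing the problem to characterizing mean-zero functions of the observed data whose only stochastic ingredient, given the intermediate data $I^*_{\mathrm{data}}=(L,A,T^{\full}_A,N^{\full}_A(\cdot))$, is the censoring time $C^{\full}$. Once $I^*_{\mathrm{data}}$ is fixed, the triple $(X,\Delta,\bar{N}(X))$ is a deterministic function of $C^{\full}\wedge T^{\full}_A$ together with $\Delta=I(T^{\full}_A\le C^{\full})$, and by sequential coarsening at random (\cref{cor:seq-car}, invoked through \cref{asp:car}) the conditional law of $C^{\full}$ on $\{C^{\full}<T^{\full}_A\}$ coincides with its $(a,l)$-conditional law, whose cumulative hazard is $\Lambda_{C,\sing}(\cdot;a,l)$ and whose survival function is $K_{\sing}(\cdot;a,l)$. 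This reduction is what makes the single-censoring martingale $M_{C,\sing}$ the natural object.

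First I would establish the easy inclusion, that every stochastic integral $\int_{(0,\infty)} h_2(u;\bar{N}(u),a,l)\,\d M_{C,\sing}(u;a,l)$ lies in the augmentation space. Computing $\Ef\{\,\cdot\mid I^*_{\mathrm{data}}\}$, the key point is that $\int_{(0,u]}Y^{\dagger}\,\d\Lambda_{C,\sing}$ is the compensator of $N_C$ and, by coarsening at random, remains so after further conditioning on the full intermediate data, since the censoring hazard is unaffected by the unobserved parts of the full data. Hence the increments of $M_{C,\sing}$ have conditional mean zero, and because $h_2$ is predictable with respect to the observed filtration (depending on $\bar{N}$ only through its history), the integral has conditional mean zero.

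The harder inclusion is that every mean-zero $f$ has this form. Given $I^*_{\mathrm{data}}$, I would split $f$ into its censored part $f_0(u)=f(a,l,u,0,\bar{N}^{\full}(u))$, active on $\{C^{\full}=u<T^{\full}_A\}$, and its uncensored part $f_1=f(a,l,T^{\full}_A,1,\bar{N}^{\full}(T^{\full}_A))$, active on $\{C^{\full}\ge T^{\full}_A\}$, and write the mean-zero condition as a Lebesgue--Stieltjes identity linking $f_0,f_1$ to the law of $C^{\full}$. I would then define $h_2$ as the unique locally-bounded-variation solution of the Volterra equation $f_0(c)=h_2(c)-\int_{(0,c]}h_2\,\d\Lambda_{C,\sing}$, whose inverse is furnished explicitly by the product-integral (Duhamel) representation and involves $K_{\sing}$. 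Evaluating $\int h_2\,\d M_{C,\sing}$ on the two branches, the censored branch reproduces $f_0$ by construction, while the uncensored branch requires $-\int_{(0,T^{\full}_A)}h_2\,\d\Lambda_{C,\sing}=f_1$, which I would verify is exactly the content of the mean-zero constraint on $f$, giving $f=\int h_2\,\d M_{C,\sing}$.

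The main obstacle is this hard inclusion carried out with no absolute-continuity assumption: solving the Volterra equation and matching the uncensored branch both hinge on careful boundary bookkeeping --- the left limits $K_{\sing}(u-)$, the priority of $\Delta=1$ events encoded in $Y^{\dagger}$, and the open-versus-closed integration limits at $T^{\full}_A$. \cref{lem:rr-lems}, and in particular the identities \cref{eq:rr2} and \cref{eq:rr3}, supply precisely the algebraic relations needed to manage these boundary terms, so I would lean on them rather than rederive the bookkeeping by hand. As an alternative to the explicit construction, one could instead invoke the predictable representation theorem for the counting process $N_C$ in the observed filtration together with the innovation theorem, which routes around the Volterra inversion at the cost of a less self-contained argument.
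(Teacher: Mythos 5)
Your argument is correct, but it takes a genuinely different route from the paper's. The paper reduces the censoring stage to a \emph{monotone} coarsening problem by introducing the coarsening variable $\mC = X I(\Delta=0) + \infty I(\Delta=1)$, computes the coarsening hazard $\lambda_r^* = I(T^*>r)\lambda_C(r)$ via \cref{lem:surv-ident}, and then reads the typical augmentation element directly off \citet[Theorem 9.2]{tsiatis2006semiparametric}, simplifying $I(\mC\geq r, T^*>r)$ to $Y^{\dagger}(r)$; the exact treatment is done in discrete time and the general case is delegated to \citet[Lemma 5.2]{van2004robins}. You instead give a self-contained two-inclusion proof: the forward inclusion via the conditional compensator property of $\int Y^{\dagger}\,\d\Lambda_{C,\sing}$ given the intermediate data (which is exactly where the ``on $C^{\full}_a < T^{\full}_a$'' restriction of sequential CAR enters), and the reverse inclusion by splitting $f$ into its censored and uncensored branches and solving the Volterra equation $f_0(c) = h_2(c) - \int_{(0,c]} h_2\,\d\Lambda_{C,\sing}$ by product integration, with the uncensored branch matched through the conditional mean-zero constraint (which, importantly, holds for every realization of $(T^{\full}_A, N^{\full}_A(\cdot))$, so the verification is pointwise in the intermediate data as you need). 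This is essentially an unpacking of the black box the paper cites, and what it buys is transparency about where $Y^{\dagger}$, the left limits $K_{\sing}(u-)$, and the open upper limit at $T^{\full}_A$ come from --- precisely the bookkeeping that \cref{eq:rr2,eq:rr3} of \cref{lem:rr-lems} are designed to handle, so leaning on them is the right call; positivity (\cref{asp:positivity}) is what keeps the product-integral inversion well defined. One caution on your proposed alternative: the predictable-representation/innovation-theorem route is more delicate than you suggest, because the compensator $\int Y^{\dagger}\,\d\Lambda_C$ is not predictable in the classical sense (the paper's discussion flags exactly this as an open technical point), so the explicit Volterra construction is the safer path.
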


\begin{proof}
    The argument closely follows that in \citet[Chapter 9.3]{tsiatis2006semiparametric}, except it treats the discrete case exactly. It also related to \citet[Theorem 1.1]{van2003unified}.

    Define the coarsening variable $\mC \in(0,\infty)$ as 
    \begin{equation*}
        \mC = X I(\Delta = 0) + \infty I(\Delta=1).
    \end{equation*}
    This variable explains the coarsening pattern and its structure reveals that the coarsening in $\Phi_{\mathrm{obs}}$ is monotone. The observed data is $r$ and
    \begin{equation*}
        G_r
        := \begin{cases}
            \bigl( L, A, 
             T^*>r, \bar{N}^*(r) \bigr) \text{ if } r < \infty \\
            \bigl( L, A, 
            T, \bar{N}^*() \bigr) \text{ if } r = \infty.
        \end{cases} 
    \end{equation*}

    For $r<\infty$, the coarsening hazard function is defined as
    \begin{equation*}
        \lambda_r^*
        := \P \bigl(\mC = r \mid \mC \geq r, L, A, T^*>r, \bar{N}^*(r) \bigr),
    \end{equation*}
    where the expression in the conditioning set is simply $G_r$. Considering the definition of $\mC$ shows that this expression simplifies as
    \begin{align*}
        \lambda_r^* 
         = I(T^*>r) \P \bigl(C^* = r & \mid C^* \geq r, L, A, T^*>r, \bar{N}^*(r) \bigr) 
        \\ & 
        = I(T^*>r) \P \bigl(C^* = r \mid C^* \geq r, L, A, T^*>r \bigr).
    \end{align*}
    Using Lemma 1 in \citet{baer2025survival}, the coarsening hazard $\lambda_r^*$ is observed to be $I(T^*>r) \lambda_C(r)$, where $\lambda_C$ is the identified counting-hazard for censoring. 

    Next, \citet[Theorem 9.2]{tsiatis2006semiparametric} presents that a typical element of the augmentation space is
    \begin{align*}
        \sum_{r < \infty} \frac{I(\mC = r) - I(\mC \geq r) \lambda_r^*}{K_r} h(r; G_r) 
        =\, & \sum_{r < \infty} \frac{dN_C(r) - I(\mC \geq r, T^*>r) \lambda_C(r)}{K(r; A, L)} h(r; G_r) \\
        =\, & \sum_{r < \infty} \frac{dN_C(r) - Y^{\dagger}(r) \lambda_C(r)}{K(r; A, L)} h(r; G_r).
    \end{align*}
    Although $G_r$ nominally depends on $T^*$, the dependence is removed due to the $dN_C$ and $Y^{\dagger}$. 

    The result follows in the discrete case. The general case follows from \citet[Lemma 5.2]{van2004robins}.
\end{proof}

\begin{proof}[Proof of \cref{thm:est-master}]
 The proof will readily follow from the von Mises decompositions in \cref{prop:vonmises} and the techniques described by \citet{kennedy2023semi}. Throughout we study the asymptotics of the estimator $\hat\psi_n;$ the behavior of $\hat\psi_{n,\mathrm{proj}}$ follows from results in \citet{westling2020correcting}, as recently used by \citet{westling2021inference} in a failure time setting.

Recall that the estimand $\psi_0$ has two component blocks; we establish asymptotic linearity of $\hat\psi_{n}$ by studying each component separately. 
For these components at a given time $t>0$, the estimator may be written as $\E_n \phi_n$ for some data-dependent function $\phi_n$ and satisfies the decomposition:
 \begin{align*}
     \E_n \phi_n - \E_0 \phi_0
     = \E_n (I - \E_0) \phi_0 + (\E_n - \E_0) (\phi_n - \phi_0) + \E_0 (\phi_n - \phi_0),
 \end{align*}
 where $\E_0 \phi_0$ is the estimand. The first term is the desired influence function $D$, and the second term is $o_p(n^{-1/2})$ by \citet[Lemma 2]{kennedy2020efficient}. Thus we focus on the third term. We can straightforwardly show that $\E_0 (\phi_n - \phi_0)$ equals the remainder
 derived in the proof of \cref{prop:vonmises}, where $\P = \P_0$ and $\bar{\P} = \P_n$.
In particular, in the case of $\mu,$ we obtain
the remainder term
\begin{align*}
 r_{n,\mu}(t) & =  \E_0 \Bigg[ \Bigl\{F_n(0, t; a, L) - F_0(0, t; a, L) \Bigr\} \frac{\pi_n(a; L) - \pi_0(a; L)}{\pi_n(a; L)} \\
   + &  \frac{\pi_0(a; L)}{\pi_n(a; L)} \int_{(0,\infty)} H_0(u; a, L) \left\{ \frac{F_n(u, t; a, L)}{H_n(u; a, L)} - \frac{F_0(u, t; a, L)}{H_0(u; a, L)} \right\} \d \left\{ \frac{K_0(u; a, L) - K_n(u; a, L)}{K_n(u; a, L)} \right\} \Bigg].
\end{align*}
The integral term in the second summand, while nominally
over $(0,\infty),$ is always on a finite interval 
$(0,\tau_{0}(a,L)],$ where $\tau_{0}(a,L) \leq \tau$ almost surely due to \cref{asp:trunc}. 
For later use, we also define the total variation function
\[
\mathrm{TV}_{K_0/K_n}(u;a,L) =
\sup \left\{
\sum_{k=1}^\ell \left| \frac{K_0(s_k; a, L)}{K_n(s_k; a, L)}
- \frac{K_0(s_{k-1}; a, L)}{K_n(s_{k-1}; a, L)}
\right| : 0 < s_1 < \cdots < s_{\ell} = u
\right\}
\]
where the supremum is taken over all
partitions of $[0,u].$ 

Below, it is  established that the relevant remainder term
in the case of $m \geq 1$ landmark times satisfies
$| r_{\mu, n} | = O(\sum_{j=1}^m \{ r_1(t_j) + r_2(t_j) + r_3 \}),$ where
    \begin{align*}
        r_1(t) & = \Big\| F_n(0, t; a, L) - F_0(0, t; a, L) \Big\|_{L_2(\P_0)} \Big\| \pi_n(a; L) - \pi_0(a; L) \Big\|_{L_2(\P_0)}, \\
        r_2(t) & = \bigg\| \sup_{u \in (0,\infty)} \left\{ \frac{F_n(u, t; a, L) - F_0(u, t; a, L)}{F_0(u, t; a, L)} \right\} \bigg\|_{L_2(\P_0)} \bigg\| 
        \mathrm{TV}_{K_0/K_n}(\tau;a,L)
        \bigg\|_{L_2(\P_0)},\\
        r_3 & = \bigg\| \sup_{u \in (0,\infty)} \left\{ \frac{H_n(u; a, L) - H_0(u; a, L)}{H_0(u; a, L)} \right\} 
        \bigg\|_{L_2(\P_0)} 
        \bigg\| 
        \mathrm{TV}_{K_0/K_n}(\tau;a,L)
        \bigg\|_{L_2(\P_0)}. 
    \end{align*}
If each of these terms is $o_p(n^{-1/2}),$ the stated result follows for estimating the component corresponding to $\mu$. It can be seen that achieving this rate depends directly on the rate at which each nuisance parameter can be estimated. 

We approach each summand in $r_{n,\mu}(t)$ separately.
For the first summand, and under Assumption~2 in \cref{thm:est-master}, we know that
\begin{align*}
     \bigg| \E_0 \Bigg[ & \Bigl\{F_n(0, t; a, L) - F_0(0, t; a, L) \Bigr\} \frac{\pi_n(a; L) - \pi_0(a; L)}{\pi_n(a; L)} \Bigg] \bigg| \\
    \leq \, & \frac{1}{\epsilon'} \E_0 \bigg[ \Big| \{F_n(0, t; a, L) - F_0(0, t; a, L)\Bigr\} \Bigl\{\pi_n(a; L) - \pi_0(a; L)\Bigr\} \Big| \biggl] \\
    \leq \, & \frac{1}{\epsilon'} \Big\| F_n(0, t; a, L) - F_0(0, t; a, L) \Big\|_{L_2(\P_0)} \Big\| \pi_n(a; L) - \pi_0(a; L) \Big\|_{L_2(\P_0)}.
\end{align*}
For the second summand, we have
\begin{align}
    & \left| \E_0 \Bigg[ \frac{\pi_0(a; L)}{\pi_n(a; L)} \int_{(0,\infty)} H_0(u; a, L) \left\{ \frac{F_n(u, t; a, L)}{H_n(u; a, L)} - \frac{F_0(u, t; a, L)}{H_0(u; a, L)} \right\} \d \left\{ \frac{K_0(u; a, L)}{K_n(u; a, L)} \right\} \Bigg] \right| \nonumber \\
    \leq \, & \frac{1}{\epsilon'} \E_0 \Bigg[ \left| \int_{(0,\infty)} H_0(u; a, L) \left\{ \frac{F_n(u, t; a, L)}{H_n(u; a, L)} - \frac{F_0(u, t; a, L)}{H_0(u; a, L)} \right\} \d \left\{ \frac{K_0(u; a, L)}{K_n(u; a, L)} \right\} \right| \Bigg] \nonumber \\
    \leq \, & \frac{1}{\epsilon'} \E_0 \Bigg[ \int_{(0,\infty)} \left|  \left\{ \frac{F_n(u, t; a, L)}{H_n(u; a, L)} - \frac{F_0(u, t; a, L)}{H_0(u; a, L)} \right\} \right| \d \mathrm{TV}_{K_0/K_n}(u; a,L)
\Bigg] \nonumber \\
    \leq \, & \frac{1}{\epsilon'} \E_0 \Bigg[ \sup_{u \in (0,\infty)} \left| \frac{F_n(u, t; a, L)}{H_n(u; a, L)} - \frac{F_0(u, t; a, L)}{H_0(u; a, L)} \right| 
    \mathrm{TV}_{K_0/K_n}(\tau;a,L)
\Bigg] \nonumber \\
    \leq \, & \frac{1}{\epsilon'} \left\| \sup_{u \in (0,\infty)} \left| \frac{F_n(u, t; a, L)}{H_n(u; a, L)} - \frac{F_0(u, t; a, L)}{H_0(u; a, L)} \right| \right\|_{L_2(\P_0)} 
    \left\| 
    \mathrm{TV}_{K_0/K_n}(\tau;a,L)
    \right\|_{L_2(\P_0)}. \label{eqn:rem1}
\end{align}

Define $\iota \in \{1,2\}$ as $\iota = \arg\max\{ \sup_{H_n(u; a, L)>0} u, \sup_{H_0(u; a, L)>0} u \}$. 
Denote $H_{\mathrm{max}}(u; a, L) = I(\iota=1) H_n(u; a, L) + I(\iota = 2) H_0(u; a, L)$ and similarly $F_{\mathrm{max}}(u, t; a, L) = I(\iota=1) F_n(u; a, L) + I(\iota = 2) F_0(u; a, L)$. Denote $H_{-\mathrm{max}}, F_{-\mathrm{max}}$ as the other functionals. This notation may seem complicated: we introduce it only to ensure that we do not divide by zero in the following displayed equation. In particular, the first term in the last expression in \cref{eqn:rem1} may further be decomposed as
\begin{align*}
    & \Bigg\| \sup_{u \in (0,\infty)} \left| \frac{F_n(u, t; a, L)}{H_n(u; a, L)} - \frac{F_0(u, t; a, L)}{H_0(u; a, L)} \right| \Bigg\|_{L_2(\P_0)} \\
    \leq \, & \Bigg\| \sup_{u \in (0,\infty)} \left| \frac{F_n(u, t; a, L) - F_0(u, t; a, L)}{H_{\mathrm{max}}(u; a, L)} - \frac{F_{-\mathrm{max}}(u, t; a, L)
    \Big\{ H_n(u; a, L) - H_0(u; a, L) \Big\}}{H_{\mathrm{max}}(u; a, L) H_{-\mathrm{max}}(u; a, L)} 
     \right| \Bigg\|_{L_2(\P_0)} \\
    \leq \, & (I) + (II),
\end{align*}
where $(I)$ and $(II)$ are defined after applying the triangle inequality in the natural way; see also below for explicit definitions.  Similarly to before: although the supremum is nominally over $(0,\infty),$ it is in reality formally over the interval whose upper limit is the smallest $u$ such that $H_{\mathrm{max}}(u; a, L) = 0$. 

The summands may be upper bounded as
\begin{align*}
    (I)
    :=\, & \Bigg\| \sup_{u \in (0,\infty)} \left| \frac{F_n(u, t; a, L) - F_0(u, t; a, L)}{H_{\mathrm{max}}(u; a, L)K_n(u; a, L)} \right| \Bigg\|_{L_2(\P_0)} \\
    \leq \, & \frac{1}{\epsilon'} \Bigg\| \sup_{u \in (0,\infty)} \left| \frac{F_n(u, t; a, L) - F_0(u, t; a, L)}{H_{\mathrm{max}}(u; a, L)} \right| \Bigg\|_{L_2(\P_0)} \\
    \leq \, & \frac{1}{\epsilon'} \Bigg\| \sup_{u \in (0,\infty)} \left| \frac{F_{\mathrm{max}}(u; a L)}{H_{\mathrm{max}}(u; a, L)} \frac{F_n(u, t; a, L) - F_0(u, t; a, L)}{F_{\mathrm{max}}(u; a L)} \right| \Bigg\|_{L_2(\P_0)} \\
    \leq \, & \frac{C_F \vee C_F'}{\epsilon'} \Bigg\| \sup_{u \in (0,\infty)} \left| \frac{F_n(u, t; a, L) - F_0(u, t; a, L)}{F_{\mathrm{max}}(u; a L)} \right| \Bigg\|_{L_2(\P_0)}, 
\end{align*}
and
\begin{align*}
    (II)
    :=\, & \Bigg\| \sup_{u \in (0,\infty)} \left| \frac{F_{-\mathrm{max}}(u, t; a, L)}{H_{\mathrm{max}}(u; a, L) H_{-\mathrm{max}}(u; a, L) K_n(u; a, L)} \Big\{ H_n(u; a, L) - H_0(u; a, L) \Big\} \right| \Bigg\|_{L_2(\P_0)} \\
    \leq \, & \Bigg\| \sup_{u \in (0,\infty)} \left| \frac{F_{-\mathrm{max}}(u, t; a, L)}{H_{-\mathrm{max}}(u; a, L) K_n(u; a, L)} \right| \sup_{u \in (0,\infty)} \left| \frac{H_n(u; a, L) - H_0(u; a, L)}{H_{\mathrm{max}}(u; a, L)} \right| \Bigg\|_{L_2(\P_0)} \\
    \leq \, & \frac{C_F \vee C_F'}{\epsilon'} \Bigg\| \sup_{u \in (0,\infty)} \left| \frac{H_n(u; a, L) - H_0(u; a, L)}{H_{\mathrm{max}}(u; a, L)} \right| \Bigg\|_{L_2(\P_0)}. 
\end{align*}
For simplicity, we consider $\iota=2$ in the statement of the theorem. 
It can now be seen that these upper bounds are respectively exactly those expressions appearing in the 
upper bounds for the estimation remainder $| r_{\mu, n} |;$ upon applying these bounds for each landmark time, the triangle
inequality completes the proof in the case of $\mu$. 

The result for the counterfactual failure survival function $\eta$ follows similarly; starting with the analogous
remainder term derived from that in the proof
of \cref{prop:vonmises}, it can be shown that
$| r_{\eta, n} | = O(\sum_{j=1}^m \{ r_4(t_j) + r_5(t_j) + r_3 \}),$ where
    \begin{align*}
        r_4(t) & = \Big\| H_n(t; a, L) - H_0(t; a, L) \Big\|_{L_2(\P_0)} \Big\| \pi_n(a; L) - \pi_0(a; L) \Big\|_{L_2(\P_0)}, \\
        r_5(t) & = \bigg\| \sup_{u \in (0,\infty)} \left\{ \frac{H_n(t \vee u; a, L) - H_0(t \vee u; a, L)}{H_0(t \vee u; a, L)} \right\} \bigg\|_{L_2(\P_0)} 
        \left\| 
    \mathrm{TV}_{K_0/K_n}(\tau;a,L)
    \right\|_{L_2(\P_0)}.
    \end{align*}
\end{proof}

\section{Supplementary material for \cref{sec:litreview}}
\label{sec:supp:litreview:steele}

In this section, we discuss the estimator proposed in \citet{su2020doubly}; in particular, we consider the augmented oracle estimator for the $a-$specific mean function given by
\begin{align}
\label{eq:su-est}
   & \E_n \int_{(0,t]} \Bigg\{ \frac{I(A=a)}{\pi_0(a; L)} \frac{\d N(u)}{\E\{K_0(u-; a, L)\}} -
    \left\{ \frac{I(A=a) I(C \geq u)}
    {\pi_0(a; L) \E_0\{K_0(u-; a, L)\}} 
    -1\right\} \Bigg\} 
    \,\d F_0(0, u; a, L)
.
\end{align}
It is easy to see that this estimator is in the same form as that considered in \citet{su2020doubly}; as
expressed above, it is somewhat more general since it does not rely on the specification of 
(semi-)parametric working models for $\pi_0(a;L)$ and $dF_0(0,u;a,L).$ 

As noted in the main paper, \citet{su2020doubly} state that their  augmented estimator is doubly robust under the following assumptions on censoring: $(C^*_0, C^*_1, N^*_0(\cdot), N^*_1(\cdot)) \ind A \mid L$ and $C^*_a \ind N^*_a(\cdot) \mid L$ for $a=0,1$. However, 
as will be shown in \cref{sec:su-review}, their claims of consistency and doubly robustness also rely on the unstated assumption that $C^*_a \ind L \mid A=a$ for $a=0,1$. In the case where this relatively strong assumption is violated, it follows that their inverse probability weighted estimator (i.e., the first term in \eqref{eq:su-est}) is inconsistent and that their augmented estimator is generally not doubly robust. In \cref{sec:su-model}, we show that a simple modification of their estimator repairs the inconsistency, and and demonstrate
that the corresponding influence function falls within the class 
given in \cref{prop:class} (i.e., properly interpreted in the absence
of failure). The resulting estimator is also doubly robust in the sense that consistent estimation of $F_0,$ or of $(\pi_0,K_0),$
results in a consistent estimator of $\psi_0.$ 

\subsection{Review}
\label{sec:su-review}
We start by considering the claims that their inverse probability weighted estimator is consistent and that their augmented estimator is doubly robust. On Page 2 of their supplement, and expressed
in terms of our notation,
they claim that
\begin{align*}
    \E\bigl[ \E^* \left\{ d N^*_a(u) \mid A=a, L \right\} & \P^* \left( C^*_a \geq u \mid A=a, L \right) \bigr] \\
    &
    = \E\left[ \P^* \left( C^*_a \geq u \mid A=a, L \right) \right] \times \E\left[ \E^* \left\{ d N^*_a(u) \mid A=a, L \right\} \right];
\end{align*}
however, unless one or both of $\E^* \left\{ d N^*_a(u) \mid A=a, L \right\}$ or $\P^* \left( C^*_a \geq u \mid A=a, L \right)$ is not a function of $L,$ this cannot hold in general.
This is the first place where the implicit assumption $C^*_a \ind L \mid A=a$ for each $a=0,1$  is used.

Let $d\tilde{F}(\cdot; a,\ell)$ be the probability limit of their outcome regression estimator when $A = a$ and $L = \ell$; in addition, define
$\tilde K_a(u-) := \E \{ K(u-; a, L) \}.$ On Pages 13-14 of their supplement, the following
expression is given for the (differential of) their augmentation term when $a=1:$
\begin{align*}
\E \left[  
\left\{
\frac{\frac{I(A=a)}{\pi(a; L)} I (C^*_a \geq u) - \tilde K_a(u-)}
{\tilde K_a(u-)}
\right\}
 \left\{ d N^*_a(u) - d\tilde{F}(u;a,L) \right\}
 \right].
\end{align*}
Using iterated expectation and the assumptions
explicitly made in \citet{su2020doubly},
this expression can be rewritten
as
\begin{align*}
\E \biggl[  & 
 \left\{ 
 \E \bigl( d N^*_a(u) \mid
 A = a, L \bigr) - d\tilde{F}(u;a,L) \right\}
\left\{
\frac{\frac{I(A=a)}{\pi(a; L)} I(C^*_a \geq u) - \tilde K_a(u-)}
{\tilde K_a(u-)}
\right\}
 \biggr] \\
& = \E \biggl[
 \left\{ 
 \E \bigl( d N^*_a(u) \mid
 A = a, L \bigr) - d\tilde{F}(u;a,L) \right\}
\left\{ \frac{K(u-; a,L) - \tilde K_a(u-)}{\tilde K_a(u-)}
\right\}
 \biggr] 
 \end{align*}
On Page 14 of their supplement, they claim this expectation is equal to zero. Although this readily follows when $K(u-; a, L)$ does not depend on $L$
(i.e., $K(u-; a, L) = \tilde K_a(u-),)$ this expectation is not zero in general because $K(u-; a, L) - \tilde K_a(u-)$ is not uncorrelated with every function of $L$.

The implicit assumption that $C^*_a \ind L \mid A=a$ for each $a=0,1$ has a testable parametric implication. Define the conditional and unconditional cumulative hazard for censoring as
\begin{align*}
    \Lambda_a^*(t; l) := \int_{(0,t]} \frac{\d \PP^*(C^*_a \leq u \mid A=a, L=l)}{\PP^*(C^*_a \geq u \mid A=a, L=l)}, \hspace{5mm}
    \Lambda_a^*(t) := \int_{(0,t]} \frac{\d \PP^*(C^*_a \leq u \mid A=a)}{\PP^*(C^*_a \geq u \mid A=a)}.
\end{align*}
Define $C := C^*_A;$ in \citet{su2020doubly},
$C$ is always observed. Now define the observed data functionals
\begin{align*}
    \Lambda_a(t; l) := \int_{(0,t]} \frac{\d \PP(C \leq u \mid A=a, L=l)}{\PP(C \geq u \mid A=a, L=l)}, \hspace{5mm}
    \Lambda_a(t) := \int_{(0,t]} \frac{\d \PP(C \leq u \mid A=a)}{\PP(C \geq u \mid A=a)}.
\end{align*}

It's clear from consistency that $\Lambda_a^*(t; l)=\Lambda_a(t; l)$ and $\Lambda_a^*(t)=\Lambda_a(t)$. Their second assumption immediately shows that $\Lambda_a^*(\cdot; l) = \Lambda_a^*(\cdot)$. Therefore, when $C^*_a \ind L \mid A=a,$ $\Lambda_a(\cdot; l) = \Lambda_a(\cdot)$. This  parametric constraint shrinks the observed data tangent space from nonparametric to semiparametric. Therefore the class of influence functions in their model is larger than the class of influence functions in our model \citep[Theorem 4.3]{tsiatis2006semiparametric}.

\subsection{A modified estimator and its influence function}
\label{sec:su-model}

The previous results demonstrate that the inverse probability weighted estimator is generally inconsistent and the augmented estimator is not doubly robust unless $C^*_a \ind L \mid A=a$ for each $a=0,1.$ The theory of this paper suggests instead that $\E \{ K(u-; a, L) \}$ in \eqref{eq:su-est}
should be replaced by  $K_0(u-; a, L)$; note that $\E \{ K(u-; a, L) \} = K_0(u-; a, L)$ when $C^*_a \ind L \mid A=a$ for each $a=0,1$ 
because neither then depends on $L.$ In the following proposition, the influence function  \eqref{eq:su-est}, modified so that
$\E \{ K(u-; a, L) \}$ is everywhere replaced by $K_0(u-; a, L),$ is shown to belong to the class of influence functions given in \cref{prop:class}.
We note that \citet{su2020doubly} do not explicitly impose distributional assumptions on $C^*_a;$ the calculations below allow $C^*_a$ to have a general distribution.

\begin{prop}
\label{prop:mod-su-IF0}
    The modified oracle estimator derived
    from \eqref{eq:su-est} has influence function 
\begin{align}
\nonumber
N^*_a(t) - \psi_0 + &
\frac{I(A=a)-\pi_0(a; L)}{\pi_0(a; L)} 
\left( N^*_a(t) - F_0(0,t;a,L) \right) \\
\nonumber
& + 
\frac{I(A=a)}{\pi_0(a; L)} 
\Biggl[
\int_{(0,t]} \left\{ 
N^*_a(u) - F_0(0,u;a,L) \right\} 
\frac{\d \tilde M_{C,0}(u;a,L)}{K_0(u; a, L)} \\
& 
\label{eq:su-IF-causal}
- \left\{ N^*_a(t) - F_0(0,t;a,L) \right\} 
\int_{(0,t]} \frac{\d \tilde M_{C,0}(u;a,L)}{K_0(u; a, L)} \Biggr],
\end{align}
where $
\d \tilde M_{C,0}(u;a,L) = \d N_C(u) - I(C \geq u) \d \Lambda_{C,0}(u; a, L).
$ It is doubly robust in the sense that consistent estimation of 
$F_0,$ or of $(\pi_0,K_0),$ leads to consistent estimation of
$\psi_0.$ 
\end{prop}  

The final form given for the influence function, or \eqref{eq:su-IF-causal}, is nonstandard but convenient
for evaluating double robustness; see, for example, \citet{bai2013doubly}, who uses a similar approach.
The next result establishes that \eqref{eq:su-IF-causal} is also an element of the class of influence functions
given in \cref{prop:class} (i.e., in the absence of failure). The proof proceeds by first showing how
a particular member of the class given in \cref{prop:class}, which unlike \citet{su2020doubly} allows for the additional
possibility of failure, can be put into a form analogous  to \eqref{eq:su-IF-causal}.

\begin{prop}
\label{prop:mod-su-IF1}
In the absence of failure, the influence function derived from the class given in \cref{prop:class} using the indices 
        $h_1(L)  = F_0(0, t; a, L) / \pi_0(a; L)$ and
        \[
        h_2(u; A, L, \bar{N}(u))  = \frac{I(A=a)}{\pi_0(a; L)} \frac{N(u\wedge t) - I(u \leq t) \left\{ F_0(0, u; a, L) - F_0(0, t; a, L) \right\}}{K_0(u; a, L)},
        \]
coincides with \eqref{eq:su-IF-causal}. 
\end{prop}
The proof of each result is given below. 
\begin{proof}[Proof of \cref{prop:mod-su-IF0}]
After replacing all nuisance parameter estimators with their true value and modifying $K$ as indicated previously, our modification of the augmented estimator in \citet{su2020doubly} may be written as
\begin{align*}
    \E_n \left[ \int_{(0,t]} \phi_u(\mO) - \Big\{ \mathbb{E} (\phi_u(\mO) \,\mid\, L, A, C) - \mathbb{E} (\phi_u(\mO) \,\mid\, L) \Big\} \right],
\end{align*}
where $\phi_u(\mO) = \frac{I(A=a)}{\pi_0(a; L)} \frac{I(C \geq u)}{K_0(u-; a, L)} \d N (u)$ is an IPW estimator for the recurrent event increment:
\begin{align*}
    \mathbb{E} \{\phi_u(\mO) \,\mid\, L, A, C\}
    & =  \frac{I(A=a)}{\pi_0(a; L)} \frac{I(C \geq u)}{K_0(u-; a, L)} \d \E^\full \{N^*_a (u) \mid A=a, L\}, \\
    \mathbb{E} \{\phi_u(\mO) \,\mid\, L\} & = \d \E^\full \{ N^*_a (u) \mid A=a, L\}.
\end{align*}

The uncentered influence function of this modified estimator may be written as 
\begin{align}
    & \int_{(0,t]} \phi_u -  \left\{ \mathbb{E}(\phi_u \,\mid\, L, A, C) - \mathbb{E} (\phi_u \,\mid\, L) \right\} \nonumber \\
    =\, & \int_{(0,t]}\phi_u
    - \left\{ \mathbb{E} (\phi_u \,\mid\, L, A) - \mathbb{E} (\phi_u \,\mid\, L) \right\} 
    -  \left\{ \mathbb{E} (\phi_u \,\mid\, L, A, C) - \mathbb{E} (\phi_u \,\mid\, L, A) \right\}. \label{eq:mod-su-if}
\end{align}
The terms in curly braces serve as augmentation terms; the first term
in curly braces in \cref{eq:mod-su-if} is
\begin{align*}
    \int_{(0,t]} \mathbb{E} (\phi_u \,\mid\, L, A,C) - \mathbb{E} (\phi_u \,\mid\, L)
    & = \int_{(0,t]} \left[ \frac{I(A=a)}{\pi_0(a; L)} \d \E \{ N^*_a (u) \mid A=a, L\} - \d \E \{ N^*_a (u) \mid A=a, L\} \right]\\
    & = \frac{I(A=a) - \pi_0(a; L)}{\pi_0(a; L)} \E \{N^*_a (t) \mid A=a, L\}.
\end{align*}

The second term in curly braces in \cref{eq:mod-su-if} is
\begin{align}
    \int_{(0,t]} \mathbb{E} (\phi_u \,\mid\, L, A, C) - \mathbb{E} (\phi_u \,\mid\, L, A) 
    & = \frac{I(A=a)}{\pi_0(a; L)} \int_{(0,t]} \left\{ \frac{I(C \geq u)}{K_0(u-; a, L)} - 1 \right\} \d \E\{ N^*_a (u) \mid A=a, L\} \nonumber \\
    & = \frac{I(A=a)}{\pi_0(a; L)} \left\{ \int_{(0,t]} \frac{I(C \geq u)}{K_0(u-; a, L)} \d \mu^*_a(u; a, L) - \mu^*_a(t; a, L) \right\}, \label{eq:mod-su-if2}
\end{align}
where $\mu^*_a(u; a, L) = \E\{ N^*_a (u) \mid A=a, L\}.$

Using integration by parts, the first summand in \cref{eq:mod-su-if2} may be written as
\begin{align*}
    \int_{(0,t]} \frac{I(C \geq u)}{K_0(u-; a, L)} \d \mu^*_a(u; L)
    & = \int_{(0,C \wedge t]} \frac{\d \mu^*_a(u; L)}{K_0(u-; a, L)} \\
    & = \frac{\mu^*_a(C \wedge t; L)}{K_0(C \wedge t; a, L)} 
    - \int_{(0,C \wedge t]} \mu^*_a(u; L) \d \left\{ \frac{1}{K_0(u; a, L)} \right\} \\
    & = \frac{\mu^*_a(C \wedge t; L)}{K_0(C \wedge t; a, L)} 
    - \int_{(0,t]} \mu^*_a(u; L) \frac{I(C \geq u)\d \Lambda_{C,0}(u; a, L)}{K_0(u; a, L)},
\end{align*}
where $\Lambda_{C,0}(u; a, L)$ is the cumulative hazard
corresponding to $K_0(u; a, L).$ Now, since
\begin{align*}
    \int_{(0,t]}  \mu_a^*(u; a, L) \frac{\d N_C(u)}{K_0(u; a, L)}
    = I(C \leq t) \frac{\mu^*_a(C; a, L)}{K_0(C; a, L)},
\end{align*}
we may add and subtract to rewrite \cref{eq:mod-su-if2} as
\begin{align*}
    & \frac{I(A=a)}{\pi_0(a; L)} \int_{(0,t]}
    \mu^*_a(u; L) \frac{\d N_C(u) - I(C \geq u)\d \Lambda_{C,0}(u; a, L)}{K_0(u; a, L)}  \\
    & \hspace{35mm} + \frac{I(A=a)}{\pi_0(a; L)} \left\{ \frac{\mu^*_a(C \wedge t; L)}{K_0(C \wedge t; a, L)} - I(C \leq t) \frac{\mu^*_a(C; a, L)}{K_0(C; a, L)} - \mu^*_a(t; L) \right\} \\
    =\, & \frac{I(A=a)}{\pi_0(a; L)} \int_{(0,t]}
    \mu^*_a(u; L)\frac{\d N_C(u) - I(C \geq u)\d \Lambda_{C,0}(u; a, L)}{K(u; a, L)}  \\
    & \hspace{35mm} + \frac{I(A=a)}{\pi(a; L)} \mu^*_a(t) \left\{ \frac{I(C>t)}{K_0(t; a, L)} - 1 \right\} \\
    =\, & \frac{I(A=a)}{\pi_0(a; L)} \int_{(0,t]} 
    \left\{ \mu^*_a(u; L) - \mu^*_a(t; L) \right\} \frac{\d N_C(u) - I(C \geq u)\d \Lambda_{C,0}(u; a, L)}{K_0(u; a, L)},
\end{align*}
where the last equality uses the (no failure) definition $N_C(u) = I(C \leq u)$ and applies the Duhamel equation \citep[Page 91]{andersen1993statistical}.
Using the fact that $\mu^*_a(u; a, L) = F_0(0,u;a,L)$ (i.e., calculated in the absence of failure, or assuming that $T^*_a = \infty$) the augmentation terms in curly braces in \cref{eq:mod-su-if} simplify to
\begin{align}
\label{eq:aug1}
-\frac{I(A=a) - \pi_0(a; L)}{\pi_0(a; L)} &
F_0(0,t;a,L) \\
\label{eq:aug2}
& 
- \frac{I(A=a)}{\pi_0(a; L)} \int_{(0,t]} 
    \left\{ F_0(0,u;a,L) - F_0(0,t;a,L) \right\} \frac{\d \tilde M_{C,0}(u;a,L) }{K_0(u; a, L)},
\end{align}
where
$
\d \tilde M_{C,0}(u;a,L) = \d N_C(u) - I(C \geq u) \d \Lambda_{C,0}(u; a, L).
$

Turning to the leading term in \cref{eq:mod-su-if},
we have
\begin{align}
\nonumber
\int_{(0,t]} \phi_u
& = \frac{I(A=a)}{\pi_0(a; L)} \int_{(0,t]} \frac{I(C \geq u)}{K_0(u-; a, L)} \d N (u) \\
\nonumber
& = \frac{I(A=a)}{\pi_0(a; L)} \int_{(0,t]} \frac{I(C \geq u)}{K_0(u-; a, L) \d N^*_a(u)} \\
\label{eq:lead}
& = \frac{I(A=a)}{\pi_0(a; L)} N^*_a(t)
+ \frac{I(A=a)}{\pi_0(a; L)} \int_{(0,t]}
\left\{ \frac{I(C \geq u)}{K_0(u-; a, L)} - 1 \right\} \d N^*_a(u).
\end{align}
Following arguments similar to those used
in proving \cref{lem:rr-lems}, it can be shown
that
\[
\frac{I(C \geq u)}{K_0(u-; a, L)} - 1 
= -\int_{(0,u)} \frac{\d \tilde M_{C,0}(u;a,L)}{K_0(u; a, L)};
\]
upon substitution for the term
in square brackets on the right-hand side 
of \cref{eq:lead}
and using integration by parts, we have
\begin{align*}
\frac{I(A=a)}{\pi_0(a; L)} & \int_{(0,t]} 
\left[ \frac{I(C \geq u)}{K_0(u-; a, L)} - 1 \right] \d N^*_a(u) \\
& = \frac{I(A=a)}{\pi_0(a; L)}
\left[ 
\int_{(0,t]} 
N^*_a(u) \frac{\d \tilde M_{C,0}(u;a,L)}{K_0(u; a, L)}
- N^*_a(t) \int_{(0,t]} 
\frac{\d \tilde M_{C,0}(u;a,L)}{K_0(u; a, L)}
\right].
\end{align*}
Combining this last expression with \eqref{eq:lead},
\begin{align*}
\int_{(0,t]} \phi_u
= N^*_a(t) & +
\frac{I(A=a)-\pi_0(a; L)}{\pi_0(a; L)} N^*_a(t) \\
& + \frac{I(A=a)}{\pi_0(a; L)} 
\left[ 
\int_{(0,t]} 
N^*_a(u) \frac{\d \tilde M_{C,0}(u;a,L)}{K_0(u; a, L)}
- N^*_a(t) \int_{(0,t]} \frac{\d \tilde M_{C,0}(u;a,L)}{K_0(u; a, L)}
\right].
\end{align*}

Substituting this expression, \eqref{eq:aug1} and \eqref{eq:aug2} into \eqref{eq:mod-su-if} and centering by $\psi_0$ then gives the influence function
\begin{align*}
\eqref{eq:mod-su-if} = N^*_a(t) - \psi_0 + &
\frac{I(A=a)-\pi_0(a; L)}{\pi_0(a; L)} 
\left( N^*_a(t) - F_0(0,t;a,L) \right) \\
& + 
\frac{I(A=a)}{\pi_0(a; L)} 
\Biggl[
\int_{(0,t]} \left\{ 
N^*_a(u) - F_0(0,u;a,L) \right\} 
\frac{\d \tilde M_{C,0}(u;a,L)}{K_0(u; a, L)} \\
& 
- \left\{ N^*_a(t) - F_0(0,t;a,L) \right\} 
\int_{(0,t]} \frac{\d \tilde M_{C,0}(u;a,L)}{K_0(u; a, L)} \Biggr].
\end{align*}
The resulting influence function \eqref{eq:mod-su-if} agrees with that stated in \cref{prop:mod-su-IF0}.

Respectively replacing $\pi_0(a,L),$ $F_0(u,t;a,L)$ and $K_0(u;a,L),$with models $\pi(a,L),$ $F(u,t;a,L)$ and $K(u;a,L),$ 
the indicated double robustness property follows from adding and subtracting 1 from the factor $\frac{I(A=a)}{\pi_0(a; L)}$ that
multiplies the last term in square brackets and then using the fact that the uncentered influence function is the relevant  augmented estimator of $\psi_0$. 
\end{proof}

\begin{proof}[Proof of \cref{prop:mod-su-IF1}]
Allowing for the possibility of failure, we 
first consider the general form of the influence function 
in \cref{prop:class}:
\begin{align}
\label{eq:su-dr-1}
 \frac{I(A=a)}{\pi_0(a; L)} 
      \frac{\Delta}{K_0(X-;a,L)} & N(t)
      - \psi_0 
       - \Bigl\{ I(A=a) - \pi_0(a; L) \Bigr\} h_1(L) \\
       &
\label{eq:su-dr-2}       
       + \int_{(0,\infty)} h_2 \{ u; \bar{N}(u), A, L \} \,
      \d M_{C,0}(u; A, L),
    \end{align}
where $
\d M_{C,0}(u;a,L) = \d N_C(u) - Y^\dag(u) \d \Lambda_{C,0}(u; a, L)$ and $N_C(u) = I\{ X \geq u, \Delta = 0\}.$

Making the same choices as in \cref{prop:mod-su-IF0}, which at the moment allows for the possibility of failure, we have
$h_1(L)  = F_0(0, t; a, L) / \pi_0(a; L)$ and
\[
        h_2(u; A, L, \bar{N}(u))  = \frac{I(A=a)}{\pi_0(a; L)} \frac{N(u\wedge t) - I(u \leq t) \left\{ F_0(0, u; a, L) - F_0(0, t; a, L) \right\}}{K_0(u; a, L)}.
\]
Equation \eqref{eq:su-dr-1} can then be written
\begin{align*}
N^*_a(t) & -  \psi_0 +  \frac{I(A=a)-\pi_0(a; L)}{\pi_0(a; L)} 
\left( N^*_a(t) - F_0(0, t; a, L) \right) 
+ 
\frac{I(A=a)}{\pi_0(a; L)}
\left( \frac{\Delta}{K_0(X-;a,L)} - 1 \right) N(t);
\end{align*}
the last term in this expression can be equivalently expressed as
\begin{align}
\label{eq:su-dr-1-last}
-\frac{I(A=a)}{\pi_0(a; L)}
N(t) \int_{(0,\infty)} & \frac{\d M_{C,0}(u; a, L)}{K_0(u;a,L)}.
\end{align}
Equation \eqref{eq:su-dr-2} can be written
\begin{align*}
\frac{I(A=a)}{\pi_0(a; L)} &
\int_{(0,\infty)} 
N(u\wedge t) \frac{\d M_{C,0}(u; a, L)}{K_0(u;a,L)}
  -
\frac{I(A=a)}{\pi_0(a; L)}
\int_{(0,t]} \hspace{-5mm}
\left\{ F_0(0, u; a, L) - F_0(0, t; a, L) \right\}
\frac{\d M_{C,0}(u; a, L)}{K_0(u;a,L)} \\
& = \frac{I(A=a)}{\pi_0(a; L)}
\biggl(
\int_{(0,t]} 
N(u) \frac{\d M_{C,0}(u; a, L)}{K_0(u;a,L)}
+
N(t) 
\int_{(t,\infty)} \frac{\d M_{C,0}(u; a, L)}{K_0(u;a,L)} \\
& 
\hspace{5mm} -
\int_{(0,t]} 
F_0(0, u; a, L) 
\frac{\d M_{C,0}(u; a, L)}{K_0(u;a,L)}
+ F_0(0, t; a, L)
\int_{(0,t]} 
\frac{\d M_{C,0}(u; a, L)}{K_0(u;a,L)}
\biggr).
\end{align*}
Combining \eqref{eq:su-dr-1-last} and \eqref{eq:su-dr-2} and simplifying, we obtain
\[
\frac{I(A=a)}{\pi_0(a; L)}
\left\{
\int_{(0,t]} 
\Bigl( N(u) - F_0(0, u; a, L) \Bigr) \frac{\d M_{C,0}(u; a, L)}{K_0(u;a,L)}
- 
\Bigl( N(t) - F_0(0, t; a, L) \Bigr) \int_{(0,t]} 
 \frac{\d M_{C,0}(u; a, L)}{K_0(u;a,L)}
\right\}.
\]
Putting things together, the sum of \eqref{eq:su-dr-1} and \eqref{eq:su-dr-2} reduces to
\begin{align*}
N^*_a(t) & -  \psi_0 +  \frac{I(A=a)-\pi_0(a; L)}{\pi_0(a; L)} \left( N^*_a(t) - F_0(0, t; a, L) \right) + \frac{I(A=a)}{\pi_0(a; L)} \times \\
& 
\left\{
\int_{(0,t]} 
\Bigl( N(u) - F_0(0, u; a, L) \Bigr) \frac{\d M_{C,0}(u; a, L)}{K_0(u;a,L)}
- 
\Bigl( N(t) - F_0(0, t; a, L) \Bigr) \int_{(0,t]} 
 \frac{\d M_{C,0}(u; a, L)}{K_0(u;a,L)}
\right\}.
\end{align*}
The augmentation term in the second line of this last expression nominally depends on the observed
recurrent event process. Using the definition of  $N^*(\cdot),$ the fact that $Y(u) N(u) = Y(u) N^*(u),$ and in view of
the fact that $Y(u) Y^\dag(u) = Y^\dag(u),$ it can be shown for an arbitrary function $h(N(u),a,L)$ that
\[
I ( A = a ) \int_0^t h(N(u),a,L)  \d M_{C,0}(u; a, L) =
I (A = a ) \int_0^t h(N^*_a(u),a,L) 
\d M_{C,0}(u; a, L).
\]

Consequently, the influence function defined by \eqref{eq:su-dr-1} and \eqref{eq:su-dr-2}
with the indicated choices of $h_1$ and $h_2$ can be rewritten
\begin{align*}
N^*_a(t) & -  \psi_0 +  \frac{I(A=a)-\pi_0(a; L)}{\pi_0(a; L)} \left( N^*_a(t) - F_0(0, t; a, L) \right) + \frac{I(A=a)}{\pi_0(a; L)} \times \\
& 
\left\{
\int_{(0,t]} 
\Bigl( N^*_a(u) - F_0(0, u; a, L) \Bigr) \frac{\d M_{C,0}(u; a, L)}{K(u;a,L)}
- 
\Bigl( N^*_a(t) - F_0(0, t; a, L) \Bigr) \int_{(0,t]} 
 \frac{\d M_{C,0}(u; a, L)}{K_0(u;a,L)}
\right\}.
\end{align*}
In the event that failure is impossible, the term $\d M_{C,0}(u; A, L)$  reduces to $\d \tilde M_{C,0}(u; A, L),$ 
and the above expression continues to hold with suitable interpretations of $N^*_a(t)$ and $F_0.$ The resulting
final expression is exactly that given in \eqref{eq:mod-su-if}.
\end{proof}

\section{Additional Details and Results of Numerical Studies} \label{sup:simulation}

In all scenarios, death times $T$ were generated with a proportional hazards model
\[\lambda_T(t; A, L) = \rho_{0,T}(t)\exp\{\beta_T^{(1)} + \beta_T^{(2)}A + \beta_T^{(3)}L_1 + \beta_T^{(4)}L_2 + \beta_T^{(5)}L_3 + \beta_T^{(6)}AL_3 + \beta_T^{(7)}L_1L_2\},\] with $\beta_T = (-2, -0.5, 0.1, 0.1, -0.5, -0.3, 0.1) ^{\top}$. The baseline hazard $\rho_{0,T}(t)$ is the hazard of a Weibull distribution with scale parameter $\lambda_T = 1$ and shape parameter $k_T = 1.1$. The death process for $T$ was truncated at $\tau = 12$, yielding $T' = T \wedge \tau$. 

The recurrent events were generated with the intensity model 
\[\lambda_E(t; A, L) = \rho_{0,E}(t)\exp\{\beta_E^{(1)} + \beta_E^{(2)}A + \beta_E^{(3)}L_1 + \beta_E^{(4)}L_2 + \beta_E^{(5)}L_3 + \beta_E^{(6)}AL_2 + \beta_E^{(7)}L_1L_3\},\]
with $\beta_E = (1, -0.5, 0.1, 0.1, -0.5, -0.1, -0.5)^{\top}$. The baseline hazard $\rho_{0,E}(t)$ is the hazard of a Weibull distribution with scale parameter $\lambda_E = 1$ and shape parameter $k_E = 1.1$. The recurrent events were truncated at $X = \min\{T', C\}$, yielding $N(t) = N^{**}\{\min\{t, T', C\}\}$. We can see that $L_3$ is an important covariate which reduces the number of events and increases survival probability. 

The censoring times $C$ were generated using the proportional hazards model
\[\lambda_C(t;A,L) = \rho_{0,C}(t)\exp\{\beta_C^{(1)} + \beta_C^{(2)}A + \beta_C^{(3)}L_1 + \beta_C^{(4)}L_2 + \beta_C^{(5)}L_3\},\]
with baseline hazard $\rho_{0,C}(t)$ being the hazard of a Weibull distribution with scale parameter $\lambda_C = 1$ and shape parameter $k_C = 1$. The censoring processes were set to terminate at $10^6$, which means that there is effectively no right censoring for $C$.  

The recurrent events, death time $T$ and censoring $C$ were all generated using the function \texttt{simEventData()} from the \texttt{reda} package in \texttt{R}. The coefficients for the propensity score models and censoring models in each simulation scenario are provided in Table~\ref{tab:sim-scenario-param}.

\begin{table}[h!]
    \centering
    \begin{tabular}{c|cc}
    \hline
    Scenario & Propensity score model $\pi(\cdot)$ & Censoring model $K(\cdot)$ \\
    \hline
    1 & $\beta_A = (-0.5, 0.1, 0.1, 0.5)$ & $\beta_C = (-2, 0.5, 0, 0, 0)$ \\
    2 & $\beta_A = (-0.5, 0.1, 0.1, 0.5)$ & $\beta_C = (-3, 0, 0.1, 0.1, 0.5)$ \\
    3 & $\beta_A = (-0.5, 0, 0, 0.5)$ & $\beta_C = (-3, 0, 0.1, 0.1, 0.5)$ \\
    \hline \\
    \end{tabular}
    \caption{Parameters for the propensity score and censoring models in the three simulation scenarios.}
    \label{tab:sim-scenario-param}
\end{table}

We calculated our one-step AIPW estimators as follows. For the propensity score model $\pi(\cdot)$, {we used generalized linear model (GLM) from \texttt{base R}, lightGBM \citep{ke2017lightgbm} from the \texttt{lightgbm} package, and random forest from the \texttt{randomForest} package in a SuperLearner model trained using 10-fold cross validation using the \texttt{SuperLearner} package. The survival SuperLearner, proposed by \cite{westling2021inference}, was used to estimate the conditional censoring and event models with libraries containing survival random forest \citep{ishwaran2008random} from the \texttt{randomForestSRC} package, Cox proportional hazards model from the \texttt{survival} package, and a Cox proportional hazards generalized additive model from the \texttt{mgcv} package.} The survival SuperLearner (from the \texttt{survSuperLearner} package) were trained on 10-fold cross validation, with survival random forest as the initial estimator, and limiting the recursive SuperLearner coefficient fitting to 20 iterations. 

Note that we also have to estimate the nuisance parameters $F(u,t;a,l)$ for each landmark times $t \in \{1, 2, 3, 4, 5, 6\}$ on a grid of times $u$. We considered such a grid which ranges from 0 to 12 with 0.01 spacing. Note that
\begin{align*}
    F(u,t;a,l) & = \mathbb{E}\left\{\frac{\Delta}{K(X-;A,L)}I(X>u)N(t) \mid A = a, L = l\right\} \\
    & = b(u,t;a,l)\Pr(X > u \mid A=a, L=l) \\
    & = b(u,t;a,l)K(u;a,l)H(u;a,l),
\end{align*}
where
\[b(u,t;a,l) = \mathbb{E}\left\{\frac{\Delta}{K(X-;A,L)}N(t) \mid X > u, A = a, L = l\right\}\]
Furthermore, we also have 
$b(u,t;a,l) = c(u,t;a,l) \times d(u;a,l)$ where
\begin{align*}
    c(u,t;a,l) & = \mathbb{E}\left\{\frac{N(t)}{K(X-;A,L)} \mid \Delta = 1, X > u, A = a, L = l\right\} \\
    d(u;a,l) & = \mathbb{E}\left\{\Delta \mid X > u, A = a, L = l\right\}
\end{align*}
Therefore, to estimate $F(u,t;a,l)$ on the $u$-grid, we effectively need to estimate $c(u,t;a,l)$ and $d(u;a,l)$ for every time $u$ on the grid. Since the grid $u$ is dense, separate estimation of $c(u,t;a,l)$ and $d(u;a,l)$ for each $u$ is computationally burdensome, we considered an approximate strategy with another grid of times $s$ being the $\{0, 0.05, 0.1, 0.15, ..., 0.95\}$-quantile of the $u$ grid. Let $s_1, ..., s_{20}$ denote the $s$-grid points in increasing order. At each grid point $s_j$, $j = 1, ..., 20$, we fitted $c(s_j, t; a, l)$ and $d(s_j; a, l)$ models on a subset of the data that satisfy $\{\Delta = 1, X > s_j\}$ and $\{X > s_j\}$, respectively. For grid points $s_j$ where the subset of data has less than 10 observations, we use the models for $s_{j-1}$ as estimates for models at $s_j$. The $c(\cdot)$ and $d(\cdot)$ models were trained using 10-fold cross-validated SuperLearner (from the \texttt{SuperLearner} package) with libraries including the {generalized linear model (GLM) from \texttt{base R}, lightGBM \citep{ke2017lightgbm} from the \texttt{lightgbm} package, and random forest from the \texttt{randomForest} package.} The Gaussian family was used for $c(\cdot)$ while the binomial family was used for $d(\cdot)$. Finally, to make estimations of $F(\cdot)$ on the $u$-grid, we use linear interpolation of the estimates on the $s$-grid.

For the IPW estimators, we used a logistic regression (from \texttt{base R}) and a CoxPH model (from the \texttt{survival} package) containing main effects of the covariates (and the treatment assignment for the CoxPH model) to estimate the propensity score and censoring probabilities. The propensity score in the double inverse weight estimate $\hat{\mu}^{\text{SZ}}_a(t)$ was also estimated using a logistic regression with main effects of the covariates. Finally, we used the default settings of the \texttt{CFsurvival} package by \cite{westling2021inference} to obtain their implemented estimates for $\eta_a^*(t)$.

We include the simulation results for each landmark time $t \in \{1, 2, 3, 4, 5, 6\}$, which show similar trends to the average results in the main text.

\begin{figure}[htbp!]
    \centering
    \includegraphics[width=\linewidth]{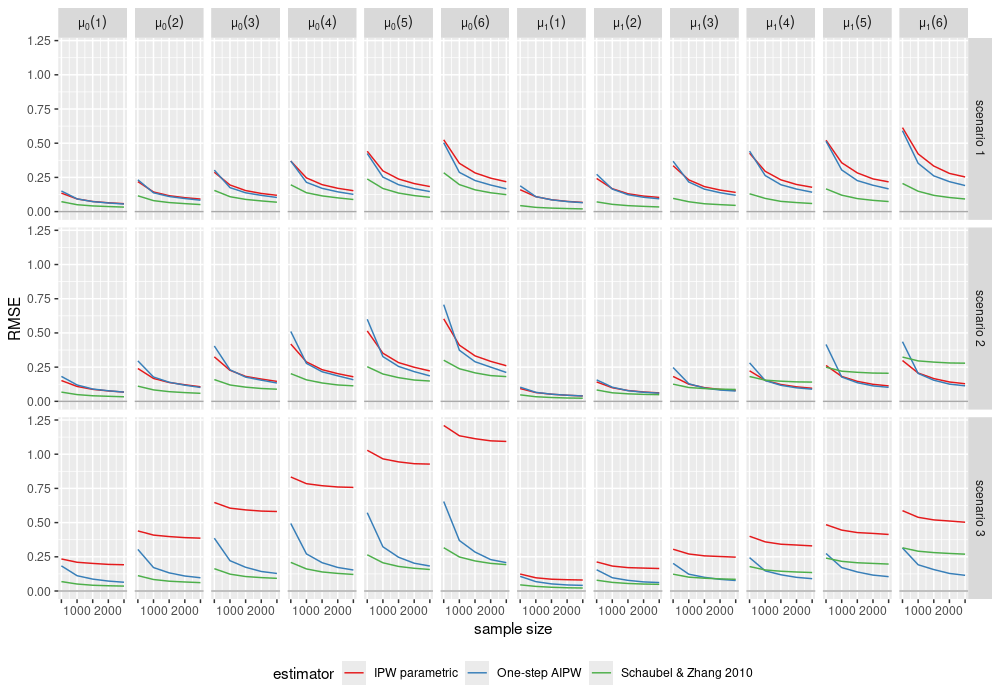}
    \caption{Point-wise root mean squared error (RMSE) of the estimators for $\mu_1^*(t)$, $\mu_0^*(t)$ in three simulation scenarios as a function of the sample size $n$.}
    \label{fig:rmse-mu}
\end{figure}

\begin{figure}[htbp!]
    \centering
    \includegraphics[width=\linewidth]{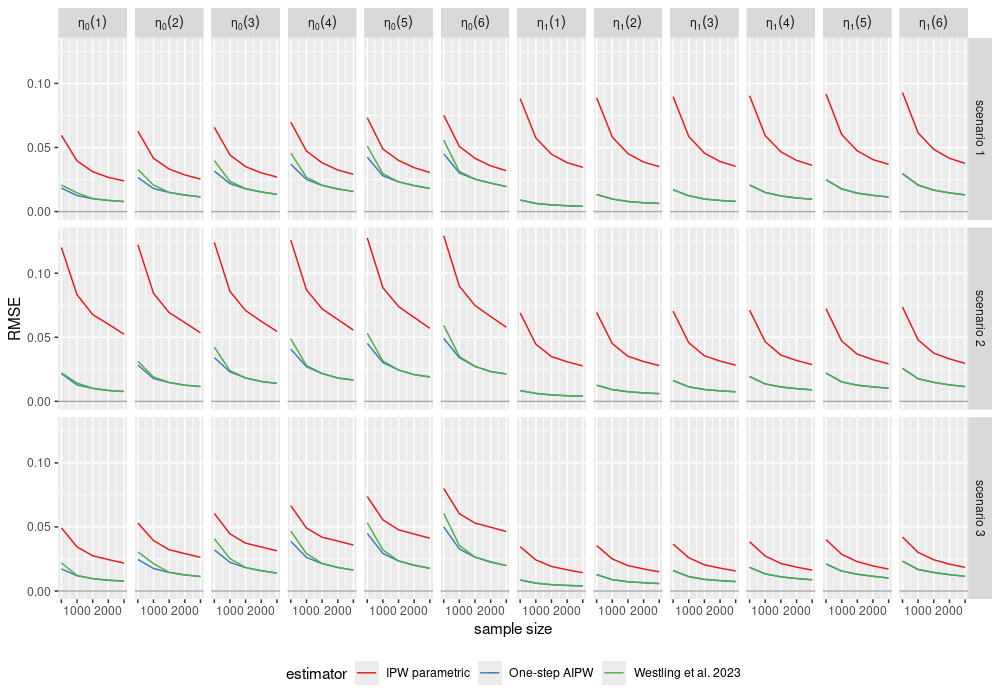}
    \caption{Point-wise root mean squared error (RMSE) of the estimators for $\eta_1^*(t)$, $\eta_0^*(t)$ in three simulation scenarios as a function of the sample size $n$. .}
    \label{fig:rmse-eta}
\end{figure}

\begin{figure}[htbp!]
    \centering
    \includegraphics[width=\linewidth]{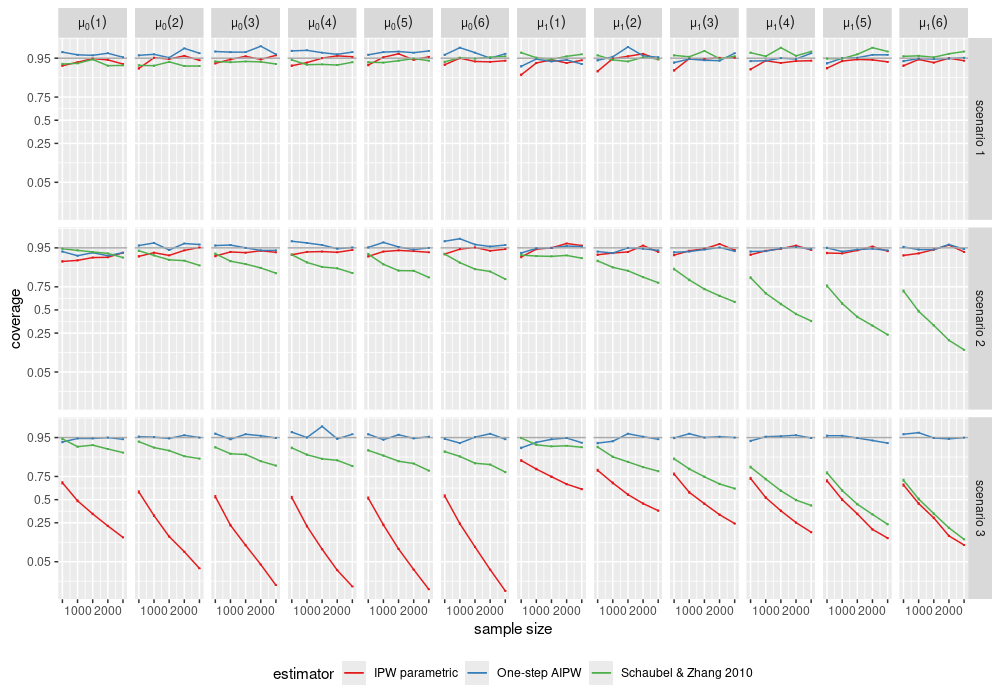}
    \caption{Coverage of the estimators for $\mu_1^*(t)$, $\mu_0^*(t)$ in three simulation scenarios as a function of the sample size $n$. The vertical axis is plotted in the logistic scale, and the tick labels indicate values in the original scale. The error bars indicate 95\% confidence intervals considering uncertainty due to the finite number of simulation replications.}
    \label{fig:coverage-mu}
\end{figure}

\begin{figure}[htbp!]
    \centering
    \includegraphics[width=\linewidth]{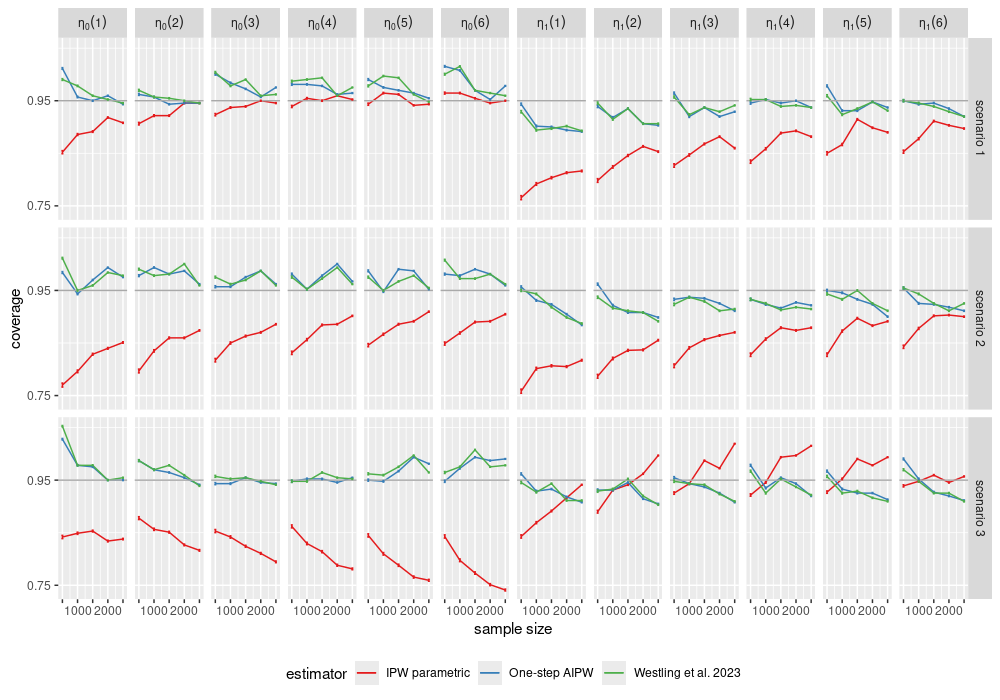}
    \caption{Coverage of the estimators for $\eta_1^*(t)$, $\eta_0^*(t)$ in three simulation scenarios as a function of the sample size $n$. The vertical axis is plotted in the logistic scale, and the tick labels indicate values in the original scale. The error bars indicate 95\% confidence intervals considering uncertainty due to the finite number of simulation replications.}
    \label{fig:coverage-eta}
\end{figure}

\newpage
\section{Additional Details and Results for Data Application} \label{sub-application}

We considered a dynamic cohort of Medicare Fee for Service (FFS) beneficiaries who turned 65 and resided in Arizona, United States, during the period of 2000-2012. We followed the cohort for a maximum of four years after they turned 65, in which the first two years were treated as the baseline period, where baseline covariates were measured. The cohort's health information, including hospitalizations and possibly death, was observed during the last two years of follow-up (67-69 years old). Therefore, the administrative censoring in our study is 2 years or 24 months. We excluded individuals who lived in multiple ZIP codes during the four years of possible follow-up because the month and year of the move were unknown. Individuals with missing covariates (which we will discuss next) or who passed away during the baseline period were also excluded from the study. This left us with a cohort of 272,226 individuals, whose demographics are summarized in Table~\ref{tab:cohort-demo}.

\begin{table}[!ht]
\centering
\tiny
\begin{tabular}{rrrrrrrrrrrrrr}
  \toprule[1pt]
Turned 65 in & 2000 & 2001 & 2002 & 2003 & 2004 & 2005 & 2006 & 2007 & 2008 & 2009 & 2010 & 2011 & 2012 \\ 
  \midrule
N & 14528 & 15699 & 17582 & 18833 & 18624 & 18675 & 19244 & 20004 & 22687 & 22724 & 25780 & 26124 & 31722 \\ 
\multicolumn{1}{l}{\textbf{Sex}}\\
Male & 7173 & 7851 & 8787 & 9378 & 9188 & 9228 & 9466 & 9782 & 11155 & 11179 & 12589 & 13021 & 15901 \\ 
Female & 7355 & 7848 & 8795 & 9455 & 9436  & 9447 & 9778 & 10222 & 11532 & 11545 & 13191 & 13103 & 15821 \\
\addlinespace
\multicolumn{1}{l}{\textbf{Race}}\\
White & 13203 & 14294 & 15908 & 17091 & 16924 & 16893 & 17264 & 17958 & 20385 & 20432 & 23072 & 23015 & 27788 \\ 
Black & 262 & 251 & 362 & 330 & 352 & 346 & 385 & 380 & 419 & 421 & 485 & 502 & 607 \\ 
Asian & 78 & 83 & 97 & 98 & 107 & 123 & 146 & 144 & 154 & 188 & 217 & 243 & 294 \\ 
Hispanic & 223 & 284 & 311 & 328 & 285 & 314 & 359 & 407 & 446 & 437 & 543 & 603 & 642 \\ 
North American Native & 387 & 380 & 642 & 699 & 680 & 721 & 764 & 772 & 855 & 835 & 911 & 976 & 1161 \\
Other/Unknown & 375 & 407 & 262 & 287 & 276 & 278 & 326 & 343 & 428 & 411 & 552 & 785 & 1230 \\ 
\addlinespace
\multicolumn{6}{l}{\textbf{Terminal events or censoring during study period}}\\
Died age $<69$ & 550 & 619 & 701 & 622 & 613 & 594 & 611 & 620 & 705 & 689 & 798 & 843 & 1035 \\
Censored age $<69$ & 797 & 1128 & 2635 & 3334 & 2250 & 2100 & 1949 & 1757 & 2070 & 2079 & 2486 & 2770 & 3015 \\
   \bottomrule[1pt]\\[5ex]
\end{tabular}
\caption{Demographic summaries of the study cohort.}
\label{tab:cohort-demo}
\end{table}

We collected individual-level data on sex and race (categorized as White, Black, Asian/Pacific Islander, Hispanic, North American Native, or Other) from the Medicare Master Beneficiary Summary File (MBSF). Annual demographic and socio-economic characteristics at the ZIP-code level were obtained from the 2000 U.S. Census and the American Community Survey (ACS) for the years 2011-2016. These variables included population density; the percentages of residents identifying as Black; the percentages of residents identifying as Hispanic; the percentage of high school graduates; median home value; median household income; the proportion of owner-occupied housing; and the percentage of people living below the federal poverty line. We also obtained annual ZIP-code-level of average body mass index and the percentage of smoking population from the Center for Disease Control’s Behavior Risk Factor Surveillance System surveys from 2000-2012. When data were missing for a specific year and ZIP code, we used linear interpolation for imputation. Finally, the baseline ZIP-code-level covariates for each individual were calculated by averaging the values over their respective baseline period. Because our cohort was dynamic, with individuals entering as they turned 65, we also included the year each person turned 65 as a baseline covariate to account for temporal effects.

To estimate PM2.5 exposure, we used daily predictions of PM2.5 concentrations at a 1 $km^2$ spatial resolution, generated by an ensemble learning model with a cross-validated $R^2$ of 0.89 \citep{Di2019AnResolution}.  Monthly exposure levels were then determined by averaging the predicted values from all grid centroids located within each ZIP code boundary. For each individual, baseline exposure was defined as the average of these monthly PM2.5 concentrations at their residential ZIP code during their baseline period. Figure~\ref{fig:AZ-pm25} shows the distribution of the baseline PM2.5 exposure among our study cohort in Arizona, We can see that the distribution is bimodal with high-exposure distributed around Phoenix, a metropolis and Yuma, the home of the Yuma Proving Ground, a series of environmentally specific test centers for the U.S. army.

\begin{figure}[htbp!]
    \centering
    \includegraphics[width=0.45\linewidth]{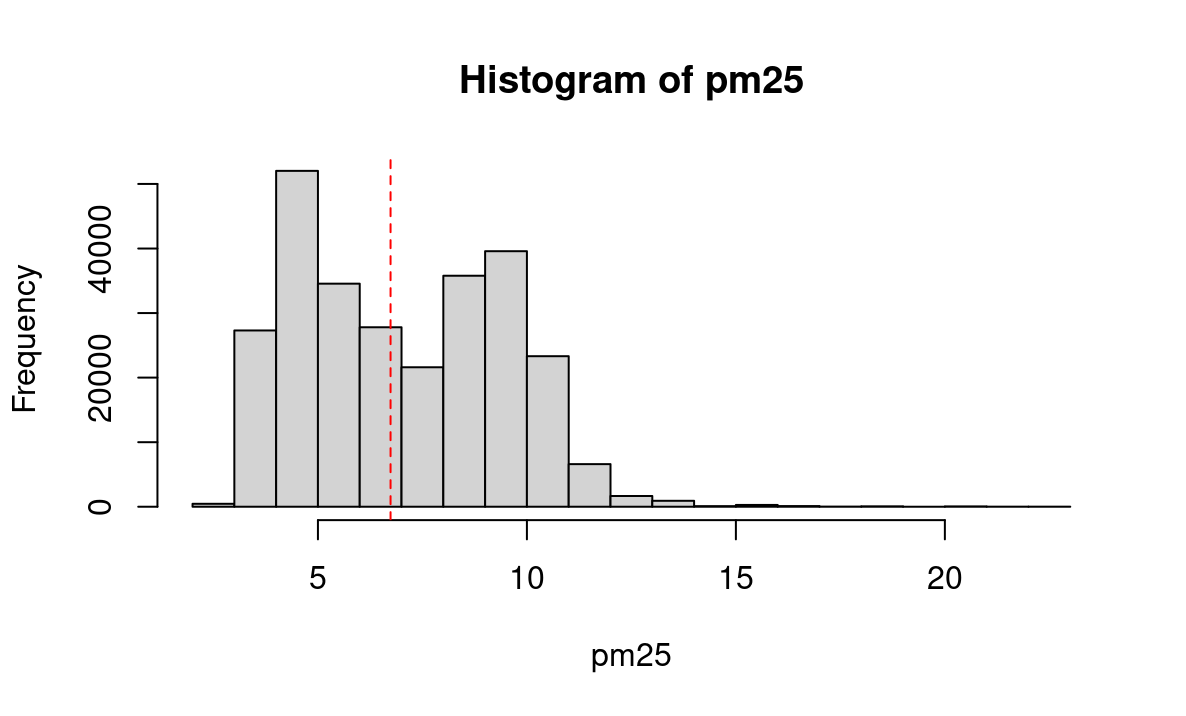} \includegraphics[width=0.45\linewidth]{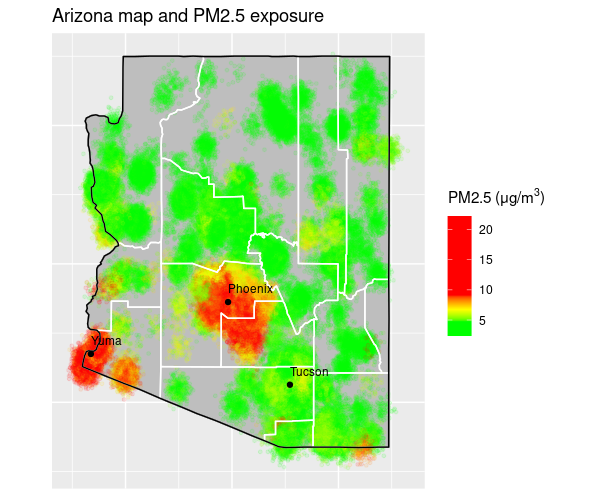}
    \caption{Distribution of PM2.5 exposure during the baseline period among the study cohort in Arizona.}
    \label{fig:AZ-pm25}
\end{figure}

For health outcomes, we retrieved the dates of death from the MBSF and hospital admission details from the Medicare Part A MedPAR dataset. Hospital admissions were classified as cardiovascular disease (CVD)-related if the billing diagnosis during the hospitalization included any of the following: atrial fibrillation (ICD-9: 427.3; ICD-10: I48), cardiac arrest (ICD-9: 427.5; ICD-10: I46), or acute myocardial infarction (ICD-9: 410; ICD-10: I21). These diagnoses were based on the International Classification of Diseases (ICD) codes, with revisions ICD-9 used for the years 2000-2015. When multiple hospitalizations occurred within a two-day window (such as transfers between hospitals), only the first admission was counted as a hospitalization event. Individuals were considered censored if any covariate data were missing after the baseline period.

After the data is obtained, we applied our method to study the impact of a 2-year exposure to a higher level of PM2.5 compared to a lower level. In the main text, we defined the high exposure group as individuals who are exposed to the top 25\% of baseline PM2.5 levels. Similarly, we defined the low exposure group as individuals who are exposed to the bottom 25\%. The results for other definitions of high vs. low exposure groups are presented below in Figures~\ref{fig:AZ-35} and \ref{fig:AZ-50} for sensitivity assessment. 
 
\begin{figure}[h!]
    \centering
    \includegraphics[width=\linewidth]{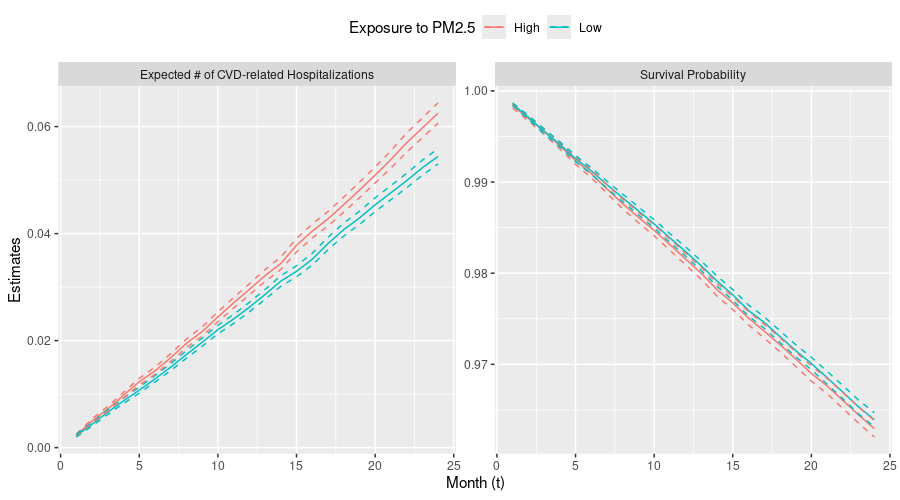}
    \includegraphics[width=\linewidth]{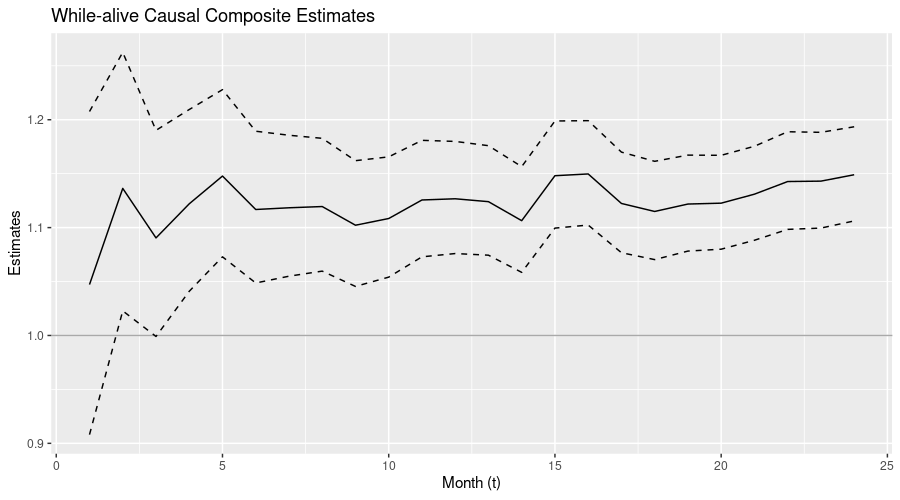}
    \caption{The data analysis results where the high exposure group is defined as being exposed to the top 35\% of the PM2.5 level (PM2.5 $\ge 8.42$ $\mu g/m^3$) and the low exposure group is defined as being exposed to the bottom 35\% of the PM2.5 level (PM2.5 $\le 5.42$ $\mu g/m^3$).}
    \label{fig:AZ-35}
\end{figure}

\begin{figure}[h!]
    \centering
    \includegraphics[width=\linewidth]{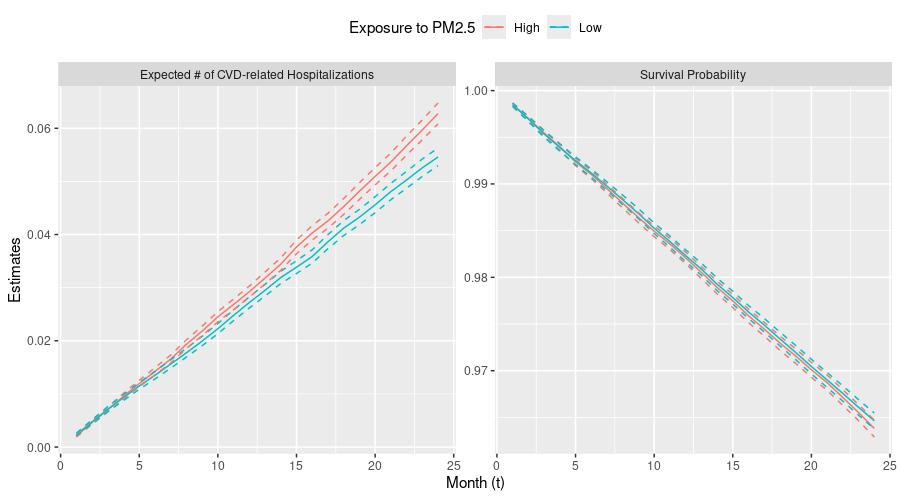}
    \includegraphics[width=\linewidth]{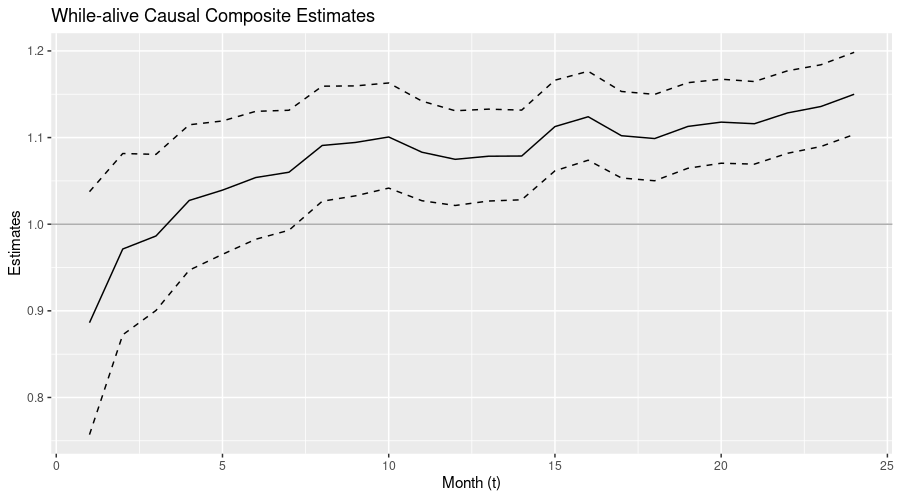}
    \caption{The data analysis results where the high exposure group is defined as being exposed to the top 50\% of the PM2.5 level (PM2.5 $\ge 6.74$ $\mu g/m^3$) and the low exposure group is defined as being exposed to the bottom 50\% of the PM2.5 level (PM2.5 $< 6.74$ $\mu g/m^3$).}
    \label{fig:AZ-50}
\end{figure}

\newpage
\clearpage
\putbib[references]
\end{bibunit}

\end{document}